\documentclass[11pt,letterpaper]{article}
\usepackage[utf8]{inputenc}
\usepackage[margin=1.0in]{geometry}
\usepackage{palatino}
\usepackage{macros}
\usepackage{algorithm}
\usepackage{algorithmic}
\usepackage{bm}
\usepackage{csquotes,fullpage}
\usepackage[T1]{fontenc}
\usepackage{subcaption}

\title{The Geometry of Lee-Yang Tensors}
\author{Jake Hofgard\thanks{UC Berkeley, Department of Mathematics, \url{hofgard@berkeley.edu}}}

\begin{document}
\maketitle 
\begin{abstract}
We provide a geometric description of Lee-Yang tensors, building upon the results from Bravyi, Gosset, Liu, and Wong in two recent papers \cite{wong2026, bravyi2026}. By applying results from complex Perron-Frobenius theory, we are able to generalize the ``squeezing lemma'' from \cite{bravyi2026} to bound the spectral ratio of non-normal operators with Lee-Yang radius greater than one. As a direct consequence, we provide a local-to-global principle for bounding spectral gaps of Suzuki-Fisher Hamiltonians, a family of $2$-local Hamiltonians, using only the Lee-Yang radius of each local operator to bound the gap from below. On the other hand, we develop a simple perturbative technique to show that the Gibbs states of $k$-local Hamiltonians, $k \geq 3$, cannot have Lee-Yang radius at least one at all inverse temperatures $\beta \geq 0$. We apply this same method to conclude that Suzuki-Fisher Hamiltonians are precisely those Hamiltonians whose Gibbs states are Lee-Yang at all inverse temperatures $\beta \geq 0$.
\end{abstract}
\tableofcontents

\section{Introduction}
Showing that a particular quantum Hamiltonian is gapped remains one of the most important guiding questions in mathematical physics \cite{Haldane1983, Tasaki2020, Tasaki2025}. Indeed, many physically interesting properties of a quantum many-body system can be deduced from its spectral gap, including (but not limited to) exponential decay of spatial correlations in the ground state of the system \cite{Hastings2006} and topologically protected phases of matter \cite{Bravyi2010}. The existence of a spectral gap also has important algorithmic implications, including access to efficient quantum algorithms for ground state preparation via adiabatic algorithms \cite{Kadowaki1998, Benseny2021}. For certain one-dimensional examples, the existence of a gap can also be translated into an efficient classical algorithm for ground state approximation \cite{Landau2015}. Nonetheless, even for Hamiltonians on the two-dimensional lattice, the problem of determining whether a family of Hamiltonians is gapped is undecidable \cite{Cubitt2022}.

Despite the general undecidability of proving that a family of Hamiltonians is gapped, the framework of Lee-Yang tensors has recently provided a foothold for showing that certain quantum Hamiltonians are gapped \cite{wong2026, bravyi2026, rayudu2026}. At a high-level, the framework of Lee-Yang tensors associates to every state and operator on the Hilbert space $(\C^2)^{\otimes n}$ a generating polynomial. By analyzing the roots of this generating polynomial, one can then deduce quantitative information about the spectral gap of a Hamiltonian. In particular, if the roots of the generating polynomial of a Hermitian operator $H$ all lie outside of a disk of radius $r > 1$, called its Lee-Yang radius, then \cite[Theorem 4]{wong2026} initially showed that the eigenvalue of $H$ with largest magnitude is non-degenerate.

The authors of \cite{wong2026} originated their study of Lee-Yang tensors to develop an efficient adiabatic quantum algorithm estimating the ground state of an antiferromagnetic Heisenberg model on a bipartite graph, sometimes called the quantum MaxCut (QMC) problem. This problem was later resolved by the same authors in \cite{bravyi2026} and simultaneously in \cite{rayudu2026}, with both papers using different techniques to establish the existence of a gap for the bipartite QMC Hamiltonian. Very recently, a similar guarantee was extended to an appropriate class of fermionic Hamiltonians in \cite{rayudu2026leeyangtheoremfermions}. In \cite{bravyi2026}, the authors directly quantify the spectral ratio of a positive definite operator $K$ in terms of its Lee-Yang radius. By instead considering the Lee-Yang radius of the \textit{Gibbs state} $\exp(-\beta H)$ of a Hamiltonian $H$, this spectral ratio result can be converted into a statement about the spectral gap of the Hamiltonian, assuming that it has uniformly positive $Z$-fields. This result applies to a broad class of Hamiltonians, called Suzuki-Fisher Hamiltonians, that were first introduced in \cite{Suzuki1971}. Importantly, the quantum MaxCut Hamiltonian on a bipartite graph belongs to this family after conjugating by an appropriate rotation on each site.

Our work was largely inspired by the results in \cite{bravyi2026}, many of which indicate the presence of a deeper geometric picture describing the fascinating properties of Lee-Yang tensors. We aim to characterize the geometry of Lee-Yang tensors by viewing operators with Lee-Yang radius greater than one as operators that preserve the cone of Lee-Yang tensors. This perspective allows us to access the rich theory of Perron-Frobenius operators from \cite{Rugh2010, dubois2010complexifiedconesspectralgaps} (Section~\ref{sec:hilbert-metrics}), generalizing the spectral ratio result from \cite[Theorem 2]{bravyi2026} to non-normal operators with Lee-Yang radius greater than one (Section~\ref{sec:contractions} and Section~\ref{sec:ly-connection}). Because the Suzuki-Trotter splitting of a Hamiltonian into its local terms yields a non-normal operator, this non-normal extension then gives a natural local-to-global principle for proving spectral gaps of Suzuki-Fisher Hamiltonians, a complementary perspective to the methods from \cite{bravyi2026} and \cite{rayudu2026} (Section~\ref{sec:local-to-global}). 

On the other hand, our geometric perspective on Lee-Yang tensors provides valuable insight into their limitations. For example, by analyzing the behavior of roots of the generating polynomial of the Gibbs state $\exp(-\beta H)$ as $\beta \to 0$, we show that $k$-local Hamiltonians with $k \geq 3$ cannot have Lee-Yang radius at least one for all inverse temperatures $\beta \geq 0$ (Section~\ref{sec:k-local}). In fact, this same method can be used to rather straightforwardly recover a surprising result that recently appeared in \cite{wei2026}: those Hamiltonians $H$ whose Gibbs states have Lee-Yang radius at least one for all inverse temperatures $\beta \geq 0$ are precisely the Suzuki-Fisher Hamiltonians introduced in \cite{Suzuki1971}. Consequently, although the techniques developed in \cite{bravyi2026, rayudu2026} and in Section~\ref{sec:ly-connection} of this paper are powerful, they cannot be extended beyond Suzuki-Fisher Hamiltonians without serious modification.

\subsection*{Related Work}
\paragraph{The Lee-Yang Theorem and Its Descendants.} In 1952, Lee and Yang initiated a research program that has since spurred decades of research in mathematical physics and more recently, quantum information. Their initial work  \cite{Lee1952} shows that the partition function of a ferromagnetic Ising model as a function of a uniform complex field $\mu \in \C$ only has zeros on the imaginary axis. As a consequence, the ferromagnetic Ising model cannot exhibit a phase transition for real, positive external fields. Nearly twenty years later, Asano showed in \cite{Asano1970} that the Lee-Yang theorem extends to Heisenberg ferromagnets and introduced Asano contractions, an elegant technique for proving such zero-freeness results. Contemporaneously, Ruelle simplified and generalized Asano's technique in the letter \cite{Ruelle1971}, lending the method its current name: Asano-Ruelle contractions. The work of Asano and Ruelle was closely followed by a related result of Suzuki and Fisher \cite{Suzuki1971} in which the authors proved a zero-freeness result for a much broader class of Heisenberg-type ferromagnets. 

In the years that followed, a number of authors applied similar techniques to even broader settings, including continuum lattice field theories \cite{Simon1973} and a general class of classical ferromagnetic spin systems \cite{Lieb1981}. It was also shown in \cite{Penrose1974, Dobrushin1987} that zero-freeness results in the style of the Lee-Yang theorem correspond to decay of correlations for classical spin systems.

\paragraph{Stable Polynomials and Stability-Preservers.} Later, the elegant work of Borcea and Brändén on the theory of stable polynomials recovered many of the aforementioned results as special cases of a complete characterization of operators that preserve stability; see \cite{BorceaBrandencircular2009, BorceaBrandenLeeYangI, BorceaBrandenLeeYang2009II} for the original results and \cite{Wagner2010} for a succinct survey. For instance, the technique of Asano-Ruelle contractions is a simple application of the machinery developed by Borcea and Brändén \cite[Lemma 8.8]{BorceaBrandenLeeYang2009II}. At the same time, Ruelle gave a complete characterization of Lee-Yang polynomials, defined therein as the class of complex polynomials $\Phi(z_1, \ldots, z_n)$ that do not vanish when $|z_1|, \ldots, |z_n| < 1$ or $|z_1|, \ldots, |z_n| > 1$ \cite{Ruelle2010}. Remarkably, Ruelle's characterization shows that the partition functions that satisfy a Lee-Yang theorem at all temperatures are \textit{precisely} the ferromagnetic Ising models originally considered in \cite{Lee1952}.

\paragraph{Lee-Yang Tensors.} As described above, a flurry of recent work has built upon the program pioneered by Lee and Yang, including (but not limited to) \cite{wong2026, bravyi2026, rayudu2026, rayudu2026leeyangtheoremfermions, wei2026}. Having already described the contributions of the first four of these works above, we only comment that the last paper, \cite{wei2026}, was released after we had already completed the majority of this work. We prove many of the same results as \cite{wei2026}, including the extension of \cite[Theorem 2]{bravyi2026} to non-normal operators via complex Perron-Frobenius theory. \cite{wei2026} also shows that Gibbs states of $k$-local Hamiltonians, $k \geq 3$, cannot be Lee-Yang at all temperatures using the language of stability preservers, a result that we obtain using a markedly different perturbative method. Finally, \cite{wei2026} also proves that any Hamiltonian whose Gibbs state is Lee-Yang at all temperatures must be a Suzuki-Fisher Hamiltonian. After learning about the result from \cite{wei2026}, we were able to recover it using the geometric methods that we developed for $k$-local Hamiltonians.

\section{Contributions}
In \cite{bravyi2026}, the authors establish that a broad class of Hamiltonians are gapped, paving the way for efficient adiabatic approximation of their ground state energies. The key technical contribution of this work is \cite[Lemma 1]{bravyi2026}, described as a ``squeezing lemma'' therein. In this note, we demonstrate that the squeezing lemma is a special case of a much broader category of results, originating from a decades-old effort to find a complex counterpart to the classical Perron-Frobenius theorem.

After providing relevant notation and background in Section~\ref{sec:notation}, Sections~\ref{sec:hilbert-metrics} and~\ref{sec:contractions} exposit some of the work from \cite{Rugh2010}, \cite{Dubois2009}, and \cite{dubois2010complexifiedconesspectralgaps}, drawing important connections to Hilbert's projective metric in hyperbolic geometry and Perron-Frobenius theory. Note that the results from these papers go far beyond what is needed to establish the squeezing lemma. In essence, the program of Rugh and Dubois centered around finding generalizations of the Perron-Frobenius theorem to complex matrices and integral operators. While many of their results establish that a certain operator or matrix has a spectral gap, some provide quantitative bounds on the gaps; see \cite[Theorem 1.1]{Dubois2009}, for instance. 

Section~\ref{sec:ly-connection} proves the squeezing lemma using the streamlined framework from \cite{Dubois2009} and discusses its interpretation in terms of the contractive properties of Lee-Yang tensors. The main result that we prove is a generalization of \cite[Theorem 2]{bravyi2026} to \textit{any} operator belonging to $\LY(r)$, the set of Lee-Yang tensors with radius at least $r > 1$, showing that all such operators admit spectral ratios of size at most $r^{-2}$. Notably, our result relies on the projective contraction principle of \cite{Dubois2009} to sidestep the power iteration argument from \cite{bravyi2026}, even when the operator in question is non-normal. 

\begin{theorem}
\label{th:lee-yang-gap}
If $r > 1$ and $K \in \LY(r)$ is an $n$-qubit operator with eigenvalues $|\lambda_1| \geq |\lambda_2| \geq \ldots \geq |\lambda_{2^n}|$, counted with multiplicity, then $\lambda_1$ has algebraic multiplicity one and any other eigenvalue $\lambda$ of $K$ satisfies
\begin{align*}
    \left|\frac{\lambda}{\lambda_1}\right| \leq r^{-2}.
\end{align*}
\end{theorem}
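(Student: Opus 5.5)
The plan is to realize $K$ as a map that strictly contracts a complex cone of Lee-Yang vectors, and then to invoke the complex Perron-Frobenius machinery of Section~\ref{sec:contractions}. The sharp exponent $r^{-2}$ will come from applying that machinery to powers $K^m$ and taking $m$-th roots, rather than from a single application.

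\textbf{Step 1 (cone setup).} Let $\mathcal{C}_\rho\subset(\C^2)^{\otimes n}$ be the set of vectors whose generating polynomial is nonvanishing on the closed polydisk $\{|z_i|\le \rho\}$. Together with $0$, this is a complex cone in the sense of Rugh and Dubois: it is $\C^*$-invariant, closed, and the Hilbert gauge is defined on it. Membership $K\in\LY(r)$ should translate, through the Asano-Ruelle/contraction description of Lee-Yang operators from Section~\ref{sec:ly-connection}, into the cone inclusion $K\,\mathcal{C}_1\subseteq \mathcal{C}_r$. Equivalently, $K=D_r\widetilde K$ for some $\widetilde K$ mapping $\mathcal{C}_1$ into itself, where $D_r=\mathrm{diag}(r^{|x|})$ is the dilation.

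\textbf{Step 2 (finite diameter).} I will bound the projective (Hilbert-gauge) diameter $\Delta(s)$ of $\mathcal{C}_s$ inside $\mathcal{C}_1$, for $s>1$. Generating polynomials are multi-affine, so nonvanishing on the radius-$s$ polydisk gives explicit control of cross-ratios on the unit polydisk. Following the one-variable computation behind the squeezing lemma and using that the cone is a product over sites, I aim for a bound of the form
\begin{align*}
\tanh\bigl(\Delta(s)/4\bigr)\le C_n\, s^{-2},
\end{align*}
with $C_n$ depending only on $n$ and not on $s$. Dubois's projective contraction theorem (the complex Birkhoff theorem exposited in Section~\ref{sec:contractions}) then says that any $T$ with $T\mathcal{C}_1\subseteq\mathcal{C}_s$ has a simple dominant eigenvalue. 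It also says every other eigenvalue satisfies $|\lambda/\lambda_1|\le \tanh(\Delta(s)/4)$, or a comparable explicit function of $\Delta(s)$.

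\textbf{Step 3 (powers and the sharp rate).} The inclusions compose: $K^m\mathcal{C}_1\subseteq K^{m-1}\mathcal{C}_r$. Using that $\mathcal{C}_r=D_r\mathcal{C}_1$ and that $\LY$ is closed under products with multiplicative radii ($\LY(r)\cdot\LY(r')\subseteq\LY(rr')$, via Asano contraction), I get $K^m\in\LY(r^m)$. Applying Step 2 to $T=K^m$ gives simplicity of $\lambda_1^m$, hence algebraic simplicity of $\lambda_1$. It also gives
\begin{align*}
|\lambda/\lambda_1|^m\le C_n r^{-2m}.
\end{align*}
Taking $m$-th roots and letting $m\to\infty$ yields $|\lambda/\lambda_1|\le r^{-2}$. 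None of this uses normality of $K$: the argument is purely about cone contraction and eigenvalues of powers.

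\textbf{Main obstacle.} I expect Step 2 to be the hardest part, specifically getting the exponent $2$ in the diameter estimate with a constant uniform in $s$. The tensor-product structure means one cannot simply multiply single-site bounds in the Hilbert gauge. My first attempt will be a Schur-type argument: freeze all variables but one, use the single-qubit bound (where the cone is essentially a disk/annulus and the gauge is explicit), and iterate over sites. I will accept an $n$-dependent constant $C_n$, since the $m$-th root in Step 3 erases it. If the multiplicative product property in Step 3 fails as stated, the fallback is the weaker inclusion $K^m\mathcal{C}_1\subseteq\mathcal{C}_{r^m}$, which I would prove directly by induction using $D_r$-conjugation.
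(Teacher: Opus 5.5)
\textbf{Step 3 fails.} The product rule $\LY(r)\cdot\LY(r')\subseteq\LY(rr')$ is false. Asano--Ruelle contraction (Proposition~\ref{prop:ly-contraction}) only needs $r_ir_j>1$ on each contracted pair, and it leaves the radii of the free indices unchanged. That is why the paper only gets $AB\in\LY(r)$ and $K^m\in\LY(r)$. Concretely, for $n=1$ take $K=\begin{pmatrix}1&0\\ 1/r&0\end{pmatrix}$:
\begin{itemize}
\item $F_K(z,w)=1+z/r$, so $K\in\LY(r)$;
\item $K^2=K$, so $K^m\notin\LY(r^m)$ for $m\ge 2$;
\item $f_{K\psi}(z)=\psi_0(1+z/r)$, so your fallback $K^m\mathcal{C}_1\subseteq\mathcal{C}_{r^m}$ also fails. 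This $K$ even violates your closed-disk inclusion $K\mathcal{C}_1\subseteq\mathcal{C}_r$ from Step 1.
\end{itemize}
So nothing lets you take $m$th roots to remove $C_n$.

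\textbf{Step 2's target is false, and the route is inherently $n$-dependent.} Take $n=1$, $\psi=1+tz$, $\phi=1-tz$ with $t<1/s$. The exceptional set in $\mathcal{C}_1$ is the Apollonius disk $\{\lambda:|\lambda-1|\le t|\lambda+1|\}$, whose extreme moduli are $\frac{1\pm t}{1\mp t}$. Hence $\delta_{\mathcal{C}_1}(\psi,\phi)=2\log\frac{1+t}{1-t}$, and so $\tanh(\Delta(s)/4)\ge 1/s$, not $O(s^{-2})$. The vectors $\prod_i(1\pm tz_i)$ show that the diameter grows linearly in $n$.

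The exponent $2$ actually comes from shrinking the source cone: $K\LY(s)\subseteq\LY(r)$ for every $s>1/r$, so the effective radius ratio is $r/s\to r^2$. Even then, the contraction bound is $\tanh\bigl(\tfrac n2\log\tfrac{r^2+1}{r^2-1}\bigr)$. It is attained by $\mathrm{diag}(1,r^{-2})^{\otimes n}$, whose true ratio is $r^{-2}$, and it equals $r^{-2}$ only when $n=1$.

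\textbf{The missing idea.} Use the contraction principle only qualitatively. Finite diameter makes $K$ a strict projective contraction, and completeness then gives an eigenvector $\psi\in\LY(r)$ with $K\psi=\lambda_*\psi$ (Lemma~\ref{lem:projective-cauchy-sequence}). The sharp, dimension-free bound comes instead from comparing two specific eigenvectors. For another eigenvector $\phi$ with eigenvalue $\lambda$, set $g=f_\phi/f_\psi$:
\begin{itemize}
\item Lemma~\ref{lem:gen-squeezing-lemma} gives $\frac{\lambda}{\lambda_*}g(\D_r^n)\subseteq g(\D_s^n)$;
\item the Schwarz-lemma estimate (Proposition~\ref{prop:diameter-upper-bound}) gives $\diam g(\overline{\D_s^n})\le\frac st\diam g(\overline{\D_t^n})$.
\end{itemize}
Together these give $|\lambda/\lambda_*|\le s/t$, which tends to $r^{-2}$ as $s\to 1/r$ and $t\to r$. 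A Jordan block at $\lambda_*$ is excluded by the same comparison, because it translates the exceptional set by $1/\lambda_*$ instead of rescaling it.
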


This allows us to formulate a local-to-global principle for Lee-Yang tensors in Section~\ref{sec:local-to-global}. In essence, to compute the spectral gap of a Lee-Yang operator that decomposes as a sum of local operators, it suffices to compute the Lee-Yang radius of each local operator. As long as each local operator has Lee-Yang radius satisfying $\log r(\beta) \geq c \beta + O(\beta^2)$ for some constant $c > 0$ as $\beta \to 0$, one can invoke Theorem~\ref{th:lee-yang-gap} to obtain quantitative control of the spectral gap of the global operator. This property holds, for example, for the EPR-like Hamiltonians described in \cite[Section 5]{wong2026}.

Finally, in Sections~\ref{sec:k-local} and~\ref{sec:ly-to-sf}, we show that Lee-Yangness is not a property that extends to $k$-local Hamiltonians. A perturbative analysis of the roots of the generating polynomial of $\exp(-\beta H)$ as $\beta \to 0$ reveals that, when $k \geq 3$, the Gibbs state $\exp(-\beta H)$ is \textit{not} Lee-Yang for sufficiently small $\beta$. This same method is used in Section~\ref{sec:ly-to-sf} to show that any Hamiltonian such that $\exp(-\beta H)$ is Lee-Yang for all $\beta \geq 0$ belongs to the Suzuki-Fisher class.

\subsection*{Acknowledgments}
The author would like to thank Cecilia Chen, João Basso, and Nikhil Srivastava for many helpful conversations, redirections, and guidance, especially during a time of immense change in mathematics and theoretical computer science. He also thanks Benjamin Wong and Alexander Zlokapa for helpful discussions when this work was in its infancy. This work was supported by the National Science Foundation Graduate Research Fellowship Program under Grant No. 2146752. This work was also done in part while the author was visiting the Simons Institute for the Theory of Computing.

\subsection*{AI Methodology}
We describe the ways in which generative AI tools contributed to this paper, following the recent guidelines in \cite{SimonsAITCS2026}.

\begin{center}
\renewcommand{\arraystretch}{1.3}
\begin{tabular}{@{}l@{\qquad}c@{}}
\textbf{Activity} & \textbf{AI used} \\
Asking the research question & No \\
Coming up with the approach & Yes \\
Proof development & Yes \\
Writing and exposition & No \\
Checking for bugs & Yes \\
Other supporting tasks & Yes \\
\end{tabular}
\end{center}

\begin{sloppypar}
\paragraph{Details.}
As mentioned above, this project originated as an effort to understand the geometric intuition behind \cite[Lemma 1]{bravyi2026}. The connection to complex Perron-Frobenius theory was discovered by ChatGPT-5.6 Sol after an extended back-and-forth. The author had also discussed this problem with ChatGPT-5.6 in previous chats (which may have served as context), although these previous conversations did not reveal the connection to the work of Birkhoff, Rugh, or Dubois that is described below. After this connection was uncovered, Lemma~\ref{lem:bounded-oscillation} and Lemma~\ref{lem:K-projective-diameter} were produced during another work session with ChatGPT-5.6, with the goal of proving a non-normal extension of \cite[Theorem 2]{bravyi2026}.

The local-to-global principle in Section~\ref{sec:local-to-global} is the author's own work, as is the perturbative technique via Rouché's theorem in Section~\ref{sec:k-local}. However, the general obstruction to Lee-Yangness at all temperatures for $k$-local Hamiltonians was developed in collaboration with ChatGPT-6 Astra, beginning with the simple example of a $k$-local projector Hamiltonian that is gapped but whose Gibbs state is not Lee-Yang at high enough temperature. Similarly, after learning that the Hamiltonians whose Gibbs states are Lee-Yang at all temperatures are precisely the Suzuki-Fisher Hamiltonians from \cite{wei2026}, the author sought to recover this result using the elementary perturbative technique from Section~\ref{sec:k-local}. The initial details of Section~\ref{sec:ly-to-sf} were developed using ChatGPT-6 Astra, including preliminary versions of Lemma~\ref{lem:ly-to-sf} and Lemma~\ref{lem:ly-to-positive-field}. The code for Figure~\ref{fig:root-shattering} was written by ChatGPT-6 Astra based on a description of the desired visualization.

All of the writing in this paper is the author's own, with the primary goal of providing a clear, intuitive explanation of the fascinating geometry underlying Lee-Yang tensors. Claude Opus 5.5 was used to proofread and correct lingering minor errors and ambiguities, especially in Section~\ref{sec:hilbert-metrics} and in the proof of Lemma~\ref{lem:projective-cauchy-sequence}. The author accepts responsibility for the presentation and correctness of all material in this paper.
\end{sloppypar}

\section{Preliminaries and Notation}
\label{sec:notation}
Unless otherwise specified, we work over the complex vector space $(\C^2)^{\otimes n}$, corresponding to a quantum spin system of $n$ qubits over an interaction graph $G = (V, E)$ with vertices enumerated as $V = \{1, \ldots, n\}$. In Section~\ref{sec:k-local}, when we describe the limitations of Lee-Yangness in the setting of $k$-local Hamiltonians, the interaction graph is replaced by an interaction hypergraph. We defer the necessary notation for $k$-local Hamiltonians to Section~\ref{sec:k-local}.

On $(\C^2)^{\otimes n}$, we denote by $\|\psi\| := \|\psi\|_2$ the standard Euclidean norm, and for operators $A \in \mathcal{B}((\C^2)^{\otimes n}) \cong (\C^2)^{\otimes 2n}$, we consider the induced operator norm
\begin{align*}
    \|A\| := \sup_{\|\psi\| = 1} \|A \psi\|.
\end{align*}
For certain technical results, we also consider the Hilbert-Schmidt norm of an operator $A$, defined by
\begin{align*}
    \|A\|_{\mathrm{HS}}^2 = \Tr(A^* A),
\end{align*}
where $A^*$ is the Hermitian adjoint of $A$.

We now recall the main definitions and results from \cite{wong2026} and \cite{bravyi2026} that we require below.
\begin{definition}
\label{def:generating-poly}
Given a state $\psi \in (\C^2)^{\otimes n}$ with entries indexed by $x \in \{0,1\}^n$, we define a generating polynomial $f_\psi : \C^n \to \C$ by 
\begin{align}
\label{eq:gen-poly}
    f_\psi(z) := \sum_{x \in \{0,1\}^n} \psi_x \prod_{a \in \supp(x)} z_a.
\end{align}
Here, $z = (z_1, \ldots, z_n)$ and $\supp(x) = \{i \in [n] : x_i = 1\}$.
\end{definition}

Note that, in the computational basis, we may write
\begin{align*}
    \ket{\psi} = \sum_{x \in \{0,1\}^n} \psi_x \ket{x}
\end{align*}
In turn, if we define
\begin{align*}
    \ket{v(z)} := \bigotimes_{i=1}^n (\ket{0} + z_i^* \ket{1}),
\end{align*}
then
\begin{align*}
    f_\psi(z) = \langle v(z) | \psi \rangle.
\end{align*}
For operators $A : (\C^2)^{\otimes n} \to (\C^2)^{\otimes n}$, we may either view $A$ as an element of $(\C^2)^{\otimes 2n}$ and apply Definition~\ref{def:generating-poly} directly \textit{or} note that
\begin{align*}
    F_A(z, w) = \langle v(z) | A | v(w^*) \rangle,
\end{align*}
where $w^* = (w_1^*, \ldots, w_n^*)$, in line with the physics convention that the second slot of the inner product is linear. Typically, we use lowercase $f$ to denote the generating polynomial of a state and uppercase $F$ to denote the generating polynomial of an operator.

We consider the zeros of generating polynomials of both Gibbs states and ground states of certain Hamiltonians, defined below.
\begin{definition}
\label{def:ly-tensor}
A state $\psi \in (\C^2)^{\otimes n}$ is a \textbf{Lee-Yang tensor} with radius $\mathbf{r} \in \R_+^n$ if the generating polynomial of $\psi$ has no zeros in the open centered multi-disk of radius $\mathbf{r}$, given by
\begin{align*}
    \D_{\mathbf{r}} := \D_{r_1} \times \ldots \times \D_{r_n}.
\end{align*}
Any state that satisfies this property is said to belong to $\LY_n(\mathbf{r})$.
For $r > 0$, we denote by $\LY(r)$ the set of all Lee-Yang tensors with radius at least $r$ for each variable, given by
\begin{align*}
    \LY(r) := \bigcup_{n \geq 0} \LY_n((r, \ldots, r)).
\end{align*}
\end{definition}

The following result is essentially a modern formulation of Asano--Ruelle contractions, from \cite{Asano1970, Ruelle1971}, in the language of Definition~\ref{def:ly-tensor}.

\begin{proposition}[Lemma 1 in \cite{wong2026}]
\label{prop:ly-contraction}
Suppose that $\psi \in \LY_n(\mathbf{r})$ for some $n \geq 2$. If $\phi \in (\C^2)^{\otimes (n - 2)}$ is the tensor obtained from $\psi$ by contracting a pair of indices $i < j$ such that $r_i r_j > 1$, then $\phi \in \LY_{n-2}(\mathbf{r'})$, where $\mathbf{r'}$ is the result of deleting entries $i$ and $j$ of $\mathbf{r}$. If $r_i r_j = 1$, then either $\phi$ is identically zero or $\phi \in \LY_{n-2}(\mathbf{r'})$.
\end{proposition}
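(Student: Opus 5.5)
We prove the statement by reducing it to a one-point coefficient inequality for the generating polynomial, in the spirit of the classical Asano argument. Write $w$ for the variables other than $z_i, z_j$ and expand the multiaffine polynomial $f_\psi$ in $z_i, z_j$:
\begin{align*}
    f_\psi(z) = A(w) + B(w) z_i + C(w) z_j + D(w) z_i z_j .
\end{align*}
Contracting indices $i$ and $j$ means summing over $x_i = x_j$. This keeps exactly the $z_i$-free, $z_j$-free part and the $z_i z_j$ part, so
\begin{align*}
    f_\phi(w) = A(w) + D(w).
\end{align*}
It therefore suffices to show that $A(w) + D(w) \neq 0$ for every $w \in \D_{\mathbf{r'}}$, or, when $r_i r_j = 1$, that either this holds or $A + D \equiv 0$.

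Fix $w \in \D_{\mathbf{r'}}$ and set $g(u,v) = A + Bu + Cv + Duv$. By hypothesis $g$ has no zeros on $\D_{r_i} \times \D_{r_j}$. In particular $A = g(0,0) \neq 0$. The key step is the inequality
\begin{align*}
    |D| \leq \frac{|A|}{r_i r_j}.
\end{align*}
To prove it, assume $D \neq 0$ and restrict $g$ to the line $v = \lambda u$ with $|\lambda| = r_j / r_i$. This gives the quadratic
\begin{align*}
    q(u) = A + (B + \lambda C) u + \lambda D u^2 .
\end{align*}
Any root $u$ of $q$ with $|u| < r_i$ would also satisfy $|v| = |\lambda||u| < r_j$, and so would be a zero of $g$ in the bidisk. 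Hence both roots of $q$ satisfy $|u| \geq r_i$. Since $\lambda D \neq 0$, $q$ has exactly two roots, and Vieta's formula gives
\begin{align*}
    \frac{|A|}{|\lambda||D|} \geq r_i^2 .
\end{align*}
Rearranging yields the inequality above.

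If $r_i r_j > 1$, the inequality gives $|D(w)| < |A(w)|$, hence $A(w) + D(w) \neq 0$. Since $w \in \D_{\mathbf{r'}}$ was arbitrary, $\phi \in \LY_{n-2}(\mathbf{r'})$.

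If $r_i r_j = 1$, we only get $|D(w)| \leq |A(w)|$, and we need a maximum-modulus argument. Because $A$ has no zeros on the connected open polydisk $\D_{\mathbf{r'}}$, the ratio $h = D/A$ is holomorphic there and satisfies $|h| \leq 1$. Suppose $A + D$ vanishes at some point of $\D_{\mathbf{r'}}$. Then $h = -1$ at an interior point, so $|h|$ attains its maximum there. By the maximum modulus principle in several variables, $h \equiv -1$, that is, $f_\phi = A + D \equiv 0$. Otherwise $\phi \in \LY_{n-2}(\mathbf{r'})$. The case $n = 2$, where $w$ is empty, is the same argument with $A$ and $D$ constants.

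The main point needing care is the choice of the slice $v = \lambda u$ with $|\lambda| = r_j/r_i$. It is exactly the choice that makes the conditions $|u| < r_i$ and $|v| < r_j$ coincide, which is what lets the two-variable zero-freeness be converted into a bound on the product of the roots of $q$. Everything else in the argument is routine.
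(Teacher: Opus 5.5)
The paper gives no proof of this proposition. It quotes it from \cite{wong2026} and only remarks that it is a modern form of the Asano--Ruelle contraction, so there is no in-paper argument to match. Your proof is correct and complete, and it is essentially that classical Asano argument written out explicitly.

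Since $A(w)\neq 0$, your bound $|D(w)|\le |A(w)|/(r_ir_j)$ is equivalent to Asano's statement that $\zeta\mapsto A(w)+D(w)\zeta$ has no zeros in $|\zeta|<r_ir_j$. Contracting the two indices is then evaluation at $\zeta=1$. The slice $v=\lambda u$ with $|\lambda|=r_j/r_i$, combined with Vieta, gives a short self-contained proof of that bound. Your maximum-modulus argument for $h=D/A$ on the connected polydisk does the job of the usual Hurwitz-type limiting argument in the boundary case $r_ir_j=1$.

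Compared with deriving the statement from the Borcea--Br\"and\'en theory of stability preservers, which the paper cites as the general framework, your route is elementary. It also yields an explicit pointwise coefficient inequality rather than just zero-freeness.

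One step deserves a sentence. In the boundary case you conclude $A+D\equiv 0$ on the nonempty open set $\D_{\mathbf{r'}}$. This forces every coefficient of the multiaffine polynomial $f_\phi$ to vanish, so $\phi=0$ as a tensor, which is the form the statement requires.
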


A key consequence of Proposition~\ref{prop:ly-contraction} is the fact that if $A, B$ are operators such that $A, B \in \LY(r)$ for some $r > 1$, then $A \otimes B \in \LY(r)$, and in turn, $AB \in \LY(r)$. Indeed, it is easy to check that the generating polynomial of $A \otimes B$ factorizes as the generating polynomial of $A$ times the generating polynomial of $B$. Because $AB$ is the result of contracting $n$ pairs of indices in $A \otimes B$, it follows immediately from Proposition~\ref{prop:ly-contraction} that $AB \in \LY(r)$.

Finally, we introduce the family of Suzuki-Fisher Hamiltonians, a restricted class of anisotropic Heisenberg models, that feature heavily in Section~\ref{sec:ly-to-sf}. We borrow the modernized definition from \cite[Definition 1]{bravyi2026}, recalling that this class of Hamiltonians was initially introduced by the eponymous authors all the way back in 1971 \cite{Suzuki1971}.

\begin{definition}[Suzuki-Fisher Hamiltonians]
\label{def:suzuki-fisher}
An $n$-qubit Hermitian operator $H$ is a Suzuki-Fisher Hamiltonian if it can be written as
\begin{align*}
    H = -\sum_{\{i,j\} \in E} (J_{ij}^{zz} \sigma_{z}^i \sigma_z^j + J_{ij}^{xx}  \sigma_{x}^i \sigma_x^j + J_{ij}^{yy}  \sigma_{y}^i \sigma_y^j + J_{ij}^{xy}  \sigma_{x}^i \sigma_y^j + J_{ij}^{yx}  \sigma_{y}^i \sigma_x^j) - \sum_{i \in V} (\mu_i^x \sigma_x^i + \mu_i^y \sigma_y^i + \mu_i^z \sigma_z^i),
\end{align*} with $\mu_i^z \geq 0$ for all $i \in V$ and
\begin{align}
    J_{ij}^{zz} \geq \left\|\begin{pmatrix}
    J_{ij}^{xx} & J_{ij}^{xy} \\ J_{ij}^{yx} & J_{ij}^{yy}
    \end{pmatrix}\right\|
\end{align}
for all $\{i, j\} \in E$.
\end{definition}
Note that, due to the Hermitianity of $H$, all of the field and coupling coefficients are real in addition to the restrictions imposed upon them by Definition~\ref{def:suzuki-fisher}. In Section~\ref{sec:ly-to-sf}, we also permit the presence of a term of the form $c I$, where $c \in \R$, in Suzuki-Fisher Hamiltonians. Note that the addition of such a term does \textit{not} alter the Lee-Yang radius of the Gibbs state $\exp(-\beta H)$.

Although we don't invoke it directly, we also present \cite[Theorem 1]{bravyi2026} for the sake of completeness. Note that Theorem~\ref{th:lee-yang-gap} reduces to \cite[Theorem 2]{bravyi2026} in the case that $K \in \LY(r)$ is positive semidefinite. Our framework can thus recover the following result, although we do not pursue this connection further. A similar result is also presented in \cite[Theorem 11]{rayudu2026}.

\begin{theorem}[Theorem 1 in \cite{bravyi2026}]
\label{th:bravyi-gap}
Suppose that $H$ is a Suzuki-Fisher Hamiltonian with eigenvalues $\lambda_1 \leq \lambda_2 \leq \ldots \leq \lambda_{2^n}$. Then, if $\mu = \min_{i \in V} \mu_i^z$ is the smallest $Z$-field, $H$ has spectral gap at least $\lambda_2 - \lambda_1 \geq 2 \mu$.
\end{theorem}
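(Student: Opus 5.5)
The plan is to deduce the gap bound from the spectral-ratio statement, Theorem~\ref{th:lee-yang-gap}, applied to the Gibbs state $K_\beta := \exp(-\beta H)$ at a fixed $\beta > 0$. The eigenvalues of $K_\beta$ are $e^{-\beta\lambda_1} \geq e^{-\beta\lambda_2} \geq \ldots$. So if we can show $K_\beta \in \LY(e^{\beta\mu})$, Theorem~\ref{th:lee-yang-gap} gives
\begin{align*}
e^{-\beta(\lambda_2-\lambda_1)} \leq e^{-2\beta\mu},
\end{align*}
that is, $\lambda_2 - \lambda_1 \geq 2\mu$. Non-degeneracy of $\lambda_1$ comes for free. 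If $\mu = 0$ the bound $\lambda_2-\lambda_1\geq 0$ is trivial, so we may assume $\mu > 0$. The whole task is therefore the radius estimate $r(K_\beta) \geq e^{\beta\mu}$.

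For the radius estimate I would Trotterize. Split $H = \sum_{\{i,j\}\in E} h_{ij} + \sum_{i\in V} h_i$ into the two-body couplings and the one-body fields. Then use
\begin{align*}
K_\beta = \lim_{m\to\infty} \Bigl(\prod_{\{i,j\}} e^{-\beta h_{ij}/m} \prod_i e^{-\beta h_i/m}\Bigr)^m .
\end{align*}
Each factor is treated separately, in three steps.

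\textbf{Edge factors.} Each $e^{-\beta h_{ij}/m}$ should be Lee-Yang with radius $1$ in all four of its variables. The reason is that $h_{ij}$ is a two-qubit operator whose $zz$ coupling dominates the $xy$-block norm, $J^{zz}_{ij} \geq \|J^{\perp}_{ij}\|$. This is the classical two-site Suzuki-Fisher computation. One writes $F(z_1,z_2,w_1,w_2)$ explicitly, using that $e^{-\beta h_{ij}/m}$ preserves the $\sigma_z$-parity structure up to the off-diagonal block. One then checks non-vanishing on the open unit polydisk, to first order in $1/m$ if necessary.

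\textbf{Field factors.} Each $e^{-\beta h_i/m}$ is a $2\times 2$ matrix with bilinear generating polynomial
\begin{align*}
F(z,w) = a + bz + cw + dzw .
\end{align*}
Here $d/a = e^{-2\beta\mu_i^z/m} + O(m^{-2})$ and $b,c = O(1/m)$. I would show its zeros avoid the bidisk of radius $1 + \beta\mu_i^z/m - O(m^{-2})$.

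\textbf{Composing the factors.} Radii have to compose multiplicatively along a time-ordered product. Contracting an output index of radius $s$ against an input index of radius $t$ with $st > 1$ is allowed by Proposition~\ref{prop:ly-contraction}. To track the surviving radius, I would rescale variables: conjugating by $e^{\alpha\sigma_z}$ multiplies $z_a$ by $e^{\pm 2\alpha}$. This lets the field factors be moved to act as radius-inflating dilations, while the radius-one edge factors are preserved.

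After $m$ rounds the accumulated radius is $(1+\beta\mu/m - O(m^{-2}))^m \to e^{\beta\mu}$. Finally, pass to the limit with Hurwitz's theorem. The limit generating polynomial is not identically zero, since, for instance, $\Tr K_\beta > 0$. Hence it has no zeros in the open polydisk of radius $e^{\beta\mu}$, and so $K_\beta \in \LY(e^{\beta\mu})$.

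I expect the main obstacle to be the field factor with nonzero transverse components $\mu_i^x, \mu_i^y$, together with the bookkeeping of how single-site radii survive contraction with the radius-one edge factors. A bare $\sigma_z$ field gives the clean polynomial $e^{a} + e^{-a}zw$, with radius exactly $e^{a}$. Transverse terms add linear terms $bz + cw$ of size $O(1/m)$, and these might shrink the radius at first order. If a direct estimate fails, I would first try a second-order (symmetric) Trotter splitting. Failing that, I would absorb the transverse fields into the adjacent edge factors: the Suzuki-Fisher condition still holds with room to spare there, since transverse fields are harmless at radius $1$. Only the pure $\mu^z$ part would then be used to inflate the radius.
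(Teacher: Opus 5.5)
Your reduction to a radius bound is fine. The bound itself, $e^{-\beta H}\in\LY(e^{\beta\mu})$, is false in general, and the step meant to produce it does not hold: Lee-Yang radii do not multiply along a Trotter product.

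\textbf{A counterexample to the radius bound.} Take a single qubit with $H=-\mu\sigma_z-h\sigma_x$, $\mu>0$, $h\neq 0$. This is Suzuki-Fisher with no edges. Write $e^{-\beta H}=(a_{jk})$. Since $a_{01}=a_{10}$, the generating polynomial at $(z,w)=(t,-t)$ equals $a_{00}-a_{11}t^2$. So the radius is at most $\sqrt{a_{00}/a_{11}}$, and in fact it equals this.

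With $\omega=\sqrt{\mu^2+h^2}$ and $x=\frac{\mu}{\omega}\tanh(\beta\omega)$ one finds $a_{00}/a_{11}=(1+x)/(1-x)$. Because $u\mapsto\tanh(u)/u$ is strictly decreasing, $x<\tanh(\beta\mu)$, i.e.\ $a_{00}/a_{11}<e^{2\beta\mu}$. The radius even stays below $\sqrt{(\omega+\mu)/(\omega-\mu)}$ for all $\beta$. So $e^{-\beta H}\notin\LY(e^{\beta\mu})$ for every $\beta>0$, although the gap $2\omega\geq 2\mu$ holds.

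\textbf{Where the argument breaks.} In this example your Trotter product is just $m$ copies of $e^{-\beta H/m}$. Each copy has radius $1+\beta\mu/m+O(m^{-2})$, exactly as your field-factor step predicts. So the failure is in the composition step. This matches what Proposition~\ref{prop:ly-contraction} actually gives: in $AB$ the free indices inherit the row radii of $A$ and the column radii of $B$, and contracted pairs only need $r_ir_j>1$. Nothing multiplies.

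Your conjugation trick would need every factor to stay Lee-Yang after conjugation by $e^{\alpha\sum_i\sigma_z^i}$ with $\alpha$ of order $\beta\mu$. That holds for diagonal factors and for couplings commuting with $\sigma_z^i+\sigma_z^j$. It fails for transverse fields, and already for the Suzuki-Fisher coupling $h=-J(\sigma_z\sigma_z+\sigma_x\sigma_x)$:
\begin{itemize}
\item conjugation multiplies $|00\rangle\langle 11|$ and $|11\rangle\langle 00|$ by reciprocal factors $e^{\mp 4\alpha}$;
\item the conjugated coupling's generating polynomial at the boundary point $(\theta_1,\theta_2)=(0,\pi/2)$ becomes $-J(4\pm 2i\sinh 4\alpha)\notin(-\infty,0]$;
\item the Rouch\'e argument behind Lemma~\ref{lem:2-local-perturbative}, which never uses Hermiticity, then puts a zero strictly inside the unit polydisk for small $\beta/m$.
\end{itemize}

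\textbf{The missing idea.} It is the one in Section~\ref{sec:local-to-global}. Radii of Trotter products tend to $1$, so apply Theorem~\ref{th:lee-yang-gap} to a single step and let spectral ratios, not radii, multiply. The paper does not reprove this statement, but this is the route its framework supplies (compare Theorem~\ref{th:epr-gap}).
\begin{enumerate}
\item Put $S=\sum_i\sigma_z^i$ and $H'=H+\mu S$. This is Suzuki-Fisher with $Z$-fields $\mu_i^z-\mu\geq 0$ and arbitrary transverse fields, so $e^{-\beta H'/m}\in\LY(1)$.
\item Let $D_m=e^{\beta\mu S/(2m)}$ and $T_m=D_m e^{-\beta H'/m}D_m$. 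The identity $\langle v(z)|D_m=e^{n\beta\mu/(2m)}\langle v(e^{-\beta\mu/m}z)|$ gives $F_{T_m}(z,w)=e^{n\beta\mu/m}F_{e^{-\beta H'/m}}(e^{-\beta\mu/m}z,e^{-\beta\mu/m}w)$.
\item Hence $T_m\in\LY(e^{\beta\mu/m})$, and Theorem~\ref{th:lee-yang-gap} bounds its spectral ratio by $e^{-2\beta\mu/m}$. So $T_m^m$ has ratio at most $e^{-2\beta\mu}$.
\item Since $T_m^m\to e^{-\beta H}$ and eigenvalues converge, $e^{-\beta(\lambda_2-\lambda_1)}\leq e^{-2\beta\mu}$.
\end{enumerate}
This needs no radius accumulation and no separate treatment of $\mu_i^x,\mu_i^y$.
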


\section{Hilbert's Projective Metric and Birkhoff's Theorem}
\label{sec:hilbert-metrics}
In 1957, Birkhoff introduced an elegant framework that greatly generalizes the classical Perron-Frobenius theorem \cite{Birkhoff1957}. This framework, although not directly related to Lee-Yang tensors, is worth understanding, as it forms the foundation of the work of Dubois and Rugh in the setting of complex cones. The overview \cite{lemmens2013} provides a comprehensive synopsis of Hilbert's projective metric and Birkhoff's application of it to Perron-Frobenius theory; we follow this survey below. The connection to Markov chain mixing in \cite{cohen2024} is another valuable resource, especially for the applications that we have in mind.

As a warmup, consider the positive quadrant in $\R^2$, denoted by
\begin{align*}
    \R^2_+ := \{(x,y) \in \R^2 : x > 0 \text{ and } y > 0\}.
\end{align*}
Points in $\R^2_+$ can be projected onto the line segment joining $(0, 1)$ and $(1, 0)$, via the map
\begin{align*}
    (f_1(x,y), f_2(x,y)) = \frac{1}{|x| + |y|} (x, y).
\end{align*}
In turn, we can map the resulting line segment onto the hyperbolic line by taking $u = \log(f_2/ f_1)$. We can then define a metric on the line segment, taking
\begin{align*}
    \rho(f, g) = \left|\log\left(\frac{f_2 g_1}{f_1 g_2}\right)\right|,
\end{align*}
which is induced by the usual metric $|u - v|$ on the hyperbolic line. The line segment above is precisely the projective cone $\R^2_+ / \! \sim$, where $(x_1, y_1) \sim (x_2, y_2)$ if and only if they project to the same point on the line segment. The metric defined above is typically called Hilbert's (cross-ratio) metric.

As outlined in \cite{Bushell1973, lemmens2013}, Birkhoff generalized Hilbert's metric to cones over real vector spaces. Birkhoff's version of Hilbert's metric provides a natural foundation for generalizing the results in \cite{bravyi2026}. In the following section, we will instead consider complex vector spaces, but the background in this section is helpful for interpreting some of the definitions that arise in the complex setting. Following the conventions in \cite[Section 2]{lemmens2013}, let $V$ be a real vector space. Recall that $C \subseteq V$ is a (pointed) cone if $C$ is convex, $\lambda C \subseteq C$ for all $\lambda \geq 0$, and $C \cap (-C) = \{0\}$. A cone $C \subseteq V$ then induces a partial ordering $\leq_C$ on $V$ by $x \leq_C y$ if and only if $y - x \in C$.

\begin{definition}
\label{def:cone-domination}
If $x \in V$ and $y \in C$, we say that $y$ dominates $x$ if there exist $\alpha, \beta \in \R$ such that $\alpha y \leq_C x \leq_C \beta y$. In this case, take
\begin{align*}
    M_C(x,y) := \inf\{\beta \in \R : \beta y - x \in C\}, \quad
    m_C(x,y) := \sup\{\alpha \in \R : x - \alpha y \in C\}.
\end{align*}
\end{definition}

If $y \in C \setminus \{0\}$ and $y$ dominates $x$, then $M_C(x,y) \geq m_C(x,y)$. In fact, we can construct an equivalence relation on $C$ by $x \sim_C y$ if $y$ dominates $x$ and $x$ dominates $y$. One can check that if $x, y \in C \setminus \{0\}$, then $x \sim_C y$ if and only if there exist $0 < \alpha \leq \beta$ such that $\alpha y \leq_C x \leq_C \beta y$, and in this case,
\begin{align*}
    m_C(x, y) = \sup\{\alpha > 0 : y \leq_C \alpha^{-1} x\} = M_C(y, x)^{-1}.
\end{align*}
We now define Birkhoff's version of Hilbert's metric on $C$ as
\begin{align*}
    d(x, y) := \log\left(\frac{M_C(x, y)}{m_C(x, y)}\right).
\end{align*}
See \cite[Lemma 2.1]{lemmens2013} to verify that this is indeed a metric.

To motivate the complex setting of Section~\ref{sec:contractions}, it is instructive to rephrase $M_C$ and $m_C$ in terms of what we will call an \textit{exceptional set}. For linearly independent $x, y \in C$ with $x \sim_C y$, let
\begin{align*}
    E_C(x, y) := \{\alpha \in \R : \alpha x - y \notin C \cup (-C)\}.
\end{align*}
Because $\alpha x - y \in C$ if and only if $y \leq_C \alpha x$, and $\alpha x - y \in -C$ if and only if $\alpha x \leq_C y$, it follows that $E_C(x, y)$ is an interval with endpoints $M_C(x,y)^{-1}$ and $m_C(x,y)^{-1}$. Consequently,
\begin{align*}
    d(x, y) = \log\left(\frac{\sup E_C(x,y)}{\inf E_C(x,y)}\right).
\end{align*}
For example, let $C = \R^2_+$, with $x = (1,1)$ and $y = (1,2)$. Observe that $\alpha x - y = (\alpha - 1, \alpha - 2)$ lies in $C$ exactly when $\alpha \geq 2$ and in $-C$ exactly when $\alpha \leq 1$, so $E_C(x, y) = (1, 2)$. Hence $M_C(x,y) = 1$, $m_C(x,y) = 1/2$, and $d(x,y) = \log 2$. In agreement with Hilbert's metric, we have that $|\log(x_2 y_1 / x_1 y_2)| = \log 2$. In Section~\ref{sec:contractions}, complex cones are instead invariant under multiplication by $-1$ (and all other nonzero complex scalars), so the condition $\alpha x - y \notin C \cup (-C)$ in the definition of the exceptional set simplifies to $zx - y \notin C$.

Setting aside the notion of an exceptional set momentarily, we return to \cite{Birkhoff1957}. Therein, Birkhoff sought to understand linear mappings between cones. If $V, W$ are real vector spaces with $C \subseteq V$ and $K \subseteq W$ cones, then for a linear map $L : V \to W$ with $L(C) \subseteq K$, the contraction ratio of $L$ is given by
\begin{align*}
    \kappa(L) := \inf\{c \geq 0: d(L x, L y) \leq c d(x, y) \text{ for all } x \sim_C y\}.
\end{align*}
In turn, the projective diameter of a linear map $L$ is given by
\begin{align*}
    \Delta(L) := \sup\{d(Lx, L y) : x, y \in C, \; Lx \sim_K L y\}.
\end{align*}
Birkhoff's influential result from \cite{Birkhoff1957} then quantifies the contraction ratio of a linear map in terms of its projective diameter. We quote a concise statement of the theorem from \cite[Theorem 2.9]{lemmens2013}:

\begin{theorem}[Birkhoff's Theorem]
\label{th:birkhoff}
If $C$ is a cone in a real vector space $V$ and $K$ is a cone in a real vector space $W$, with $L : V \to W$ a linear mapping such that $L(C) \subseteq K$, then
\begin{align*}
    \kappa(L) = \tanh\left(\frac{1}{4} \Delta(L)\right).
\end{align*}
\end{theorem}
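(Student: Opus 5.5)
The statement is Birkhoff's Theorem, so I would follow the classical two-dimensional reduction, as in \cite{Bushell1973, lemmens2013}. I would prove the inequality $\kappa(L) \leq \tanh(\Delta(L)/4)$ first, and then show that this constant cannot be improved.

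\textbf{Trivial case and reduction.} If $\Delta(L) = \infty$, it suffices to show $\kappa(L) \leq 1$. This follows from monotonicity: if $\alpha y \leq_C x \leq_C \beta y$, then applying $L$ and using $L(C) \subseteq K$ gives $\alpha Ly \leq_K Lx \leq_K \beta Ly$. Hence $M_K(Lx,Ly) \leq M_C(x,y)$ and $m_K(Lx,Ly) \geq m_C(x,y)$, so $d(Lx,Ly) \leq d(x,y)$.

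Now assume $\Delta := \Delta(L) < \infty$. Fix $x \sim_C y$ that are linearly independent, and write $M = M_C(y,x)$, $m = m_C(y,x)$ and $D = \log(M/m) = d(x,y)$. Set $u = y - m x$ and $v = Mx - y$. Up to a closure or limiting argument, which I would handle by replacing $m, M$ with $m - \epsilon, M + \epsilon$, both $u$ and $v$ lie in $C$, and
\begin{align*}
(M - m)x = u + v, \qquad (M - m) y = M u + m v .
\end{align*}

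\textbf{Upper bound.} Since $Lu, Lv \in L(C)$, the definition of the projective diameter gives bounds $\alpha Lv \leq_K Lu \leq_K \beta Lv$ with $\log(\beta/\alpha) \leq \Delta$. Comparing $Ly \propto M Lu + m Lv$ with $Lx \propto Lu + Lv$ through these order relations yields
\begin{align*}
M_K(Ly,Lx) \leq \frac{M\beta + m}{\beta + 1}, \qquad m_K(Ly,Lx) \geq \frac{M\alpha + m}{\alpha + 1}.
\end{align*}
Therefore
\begin{align*}
d(Lx,Ly) \leq \log\frac{(M\beta + m)(\alpha + 1)}{(\beta + 1)(M\alpha + m)} .
\end{align*}
This right-hand side is a cross-ratio. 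After scaling, it depends only on $D$, on $\log(\beta/\alpha) \leq \Delta$, and on one location parameter. The key computation, which I expect to be the main technical obstacle, is to show that
\begin{align*}
\sup \frac{1}{D}\log\frac{(M\beta + m)(\alpha + 1)}{(\beta + 1)(M\alpha + m)} = \tanh(\Delta/4),
\end{align*}
where the supremum is over the admissible parameters. I would substitute $M = e^{D/2}$, $m = e^{-D/2}$ (by homogeneity) and $\alpha = e^{s - \Delta/2}$, $\beta = e^{s + \Delta/2}$. The expression then becomes $g(s + \Delta/2) - g(s - \Delta/2)$ for a single function $g$.

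I would maximize over $s$ first; symmetry suggests the maximum is at $s = 0$. I would then show that the resulting function of $D$ is subadditive, or concave, in $D$, so that its ratio with $D$ is largest as $D \to 0$. In that limit, the derivative equals $\frac{\beta - \alpha}{(\alpha+1)(\beta+1)}$ up to normalization, and at $s = 0$ this is $\tanh(\Delta/4)$. This gives $d(Lx,Ly) \leq \tanh(\Delta/4)\, d(x,y)$.

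\textbf{Sharpness.} For the lower bound, pick $u, v \in C$ with $d(Lu,Lv)$ close to $\Delta$. I would take $x = u + v$ and $y = e^{\delta} u + e^{-\delta} v$, rescaling $u$ and $v$ beforehand so that the symmetric choice $s = 0$ is realized. Letting $\delta \to 0$, the inequalities in the computation above become asymptotic equalities inside the two-dimensional cone spanned by $Lu$ and $Lv$. The ratio $d(Lx,Ly)/d(x,y)$ then tends to $\tanh(d(Lu,Lv)/4)$. Taking the supremum over $u, v$ gives $\kappa(L) \geq \tanh(\Delta/4)$.

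If $\Delta = \infty$, the same construction with $d(Lu,Lv) \to \infty$ shows $\kappa(L) = 1$. This completes both cases.
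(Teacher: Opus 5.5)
The paper gives no proof of this statement (it is quoted from \cite[Theorem 2.9]{lemmens2013}), so your proposal has to stand on its own. Its architecture is the classical elementary one, and the planar algebra is right. From $(M-m)x=u+v$, $(M-m)y=Mu+mv$ and $\alpha Lv\leq_K Lu\leq_K\beta Lv$ you do get the two bounds on $M_K(Ly,Lx)$ and $m_K(Ly,Lx)$.

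\textbf{The key computation fails as written.} This is the step you flagged as the main obstacle. Use your substitution (so $\log(\beta/\alpha)=\Delta$) and set $h(t)=\log\cosh(t/2)$. Then $g(t)=h(t+D)-h(t)$ and
\begin{align*}
E(s):=g(s+\Delta/2)-g(s-\Delta/2)=\int_s^{s+D}\phi(c)\,dc,\qquad \phi(c)=\frac12\left[\tanh\left(\frac{c}{2}+\frac{\Delta}{4}\right)-\tanh\left(\frac{c}{2}-\frac{\Delta}{4}\right)\right].
\end{align*}
Since $\phi$ is even and decreasing in $|c|$, $E$ is symmetric about $s=-D/2$, not about $s=0$; indeed $E(-s)=E(s-D)$. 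For example, with $D=2$ and small $\Delta$ one has $E(0)\approx 0.38\,\Delta$ but $E(-1)\approx 0.46\,\Delta$. So fixing $s=0$ and then studying concavity in $D$ controls only one configuration, and does not bound $d(Lx,Ly)$ in general.

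The repair also removes the need to maximize. The display gives $E(s)\le D\,\phi(0)=D\tanh(\Delta/4)$ for every $s$ at once. Equivalently, normalize $m=1$, $M=e^D$. The $D$-derivative of your log cross-ratio is $\frac{b-a}{(1+a)(1+b)}$ with $a=e^D\alpha$, $b=e^D\beta$. Writing $a=\rho e^{-\Delta/2}$, $b=\rho e^{\Delta/2}$, this equals $\frac{2\sinh(\Delta/2)}{\rho+\rho^{-1}+2\cosh(\Delta/2)}\le\tanh(\Delta/4)$; now integrate over $[0,D]$. Since $\tanh$ is increasing, $\log(\beta/\alpha)\le\Delta$ suffices.

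\textbf{Two further points need to be made explicit.}

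First, $\Delta(L)$ is a supremum over pairs with $Lu\sim_K Lv$ only, so ``$Lu,Lv\in L(C)$'' does not by itself give $\log(\beta/\alpha)\le\Delta$. Your $\epsilon$-shift is what supplies comparability. For $0<\epsilon<m$, the vectors $u_\epsilon=y-(m-\epsilon)x$ and $v_\epsilon=(M+\epsilon)x-y$ both satisfy $\frac{\epsilon}{2}x\leq_C u_\epsilon,v_\epsilon\leq_C (M+\epsilon)x$. Hence $u_\epsilon\sim_C x\sim_C v_\epsilon$, and monotonicity of $L$ gives $Lu_\epsilon\sim_K Lx\sim_K Lv_\epsilon$. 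If $Lx=0$, then $Ly=0$ and there is nothing to prove.

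Second, in the sharpness step the planar cone that computes $d_K(Lx,Ly)$ is $K\cap\mathrm{span}\{Lu,Lv\}$, with boundary directions $Lu-\alpha Lv$ and $\beta Lv-Lu$. It is not the cone generated by $Lu$ and $Lv$: there $Lu$ and $Lv$ are not even comparable, and a smaller cone gives larger distances, which is the wrong direction for a lower bound. Take $\alpha=m_K(Lu,Lv)$ and $\beta=M_K(Lu,Lv)$ exactly. Then both planar bounds become equalities for every $\delta$, by extremality of $\alpha,\beta$ and pointedness of $K$, not merely asymptotically. Meanwhile $e^{-\delta}x\leq_C y\leq_C e^{\delta}x$ gives $d(x,y)\le 2\delta$. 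Hence
\begin{align*}
\frac{d(Lx,Ly)}{d(x,y)}\ge\frac{d(Lx,Ly)}{2\delta}\to\frac{\beta-\alpha}{(1+\alpha)(1+\beta)},
\end{align*}
which equals $\tanh(d_K(Lu,Lv)/4)$ once $u$ is rescaled so that $\alpha\beta=1$.

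With these repairs your argument is complete.
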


Stated informally, a linear map $L$ between cones is a contraction in the Hilbert metric, and the contraction can be quantified precisely in terms of the projective diameter of $L$. To foreshadow the relevance of this result, recall that we are attempting to quantify spectral gaps of operators with Lee-Yang radius at least one. In \cite{bravyi2026}, the key result towards obtaining a gap was the so-called ``squeezing lemma,'' which we aim to recast in terms of a result such as Theorem~\ref{th:birkhoff}. As we will see in Section~\ref{sec:contractions}, the set of Lee-Yang tensors forms a \textit{complex} cone, mapped to itself by operators with Lee-Yang radius at least one. Hence, we may anticipate that the operators with Lee-Yang radius at least one are contractions on the cone of Lee-Yang tensors via an appropriate analogue of Theorem~\ref{th:birkhoff}. Through this lens, we will be able to reprove and generalize \cite[Lemma 1]{bravyi2026}.

\section{Complex Cones and Projective Contractions}
\label{sec:contractions}
In extending the results in the preceding section to complex cones, we follow the setup in \cite[Theorem 2.3]{Dubois2009}, as it is most closely connected to the squeezing lemma from \cite{bravyi2026} and we found Dubois' presentation to be a bit easier to follow. However, essentially the same result can be extracted from \cite[Lemma 2.4]{Rugh2010} with little difficulty. We work on complex cones, defined as subsets of a complex Banach space $V$ if we wish to be as general as possible. The discussion below begins in this somewhat abstract setting, but we will quickly focus on the application of the theory from \cite{Dubois2009} to the complex cone of Lee-Yang tensors.
\begin{definition}[Definition 1 in \cite{Dubois2009}]
\label{def:complex-cone}
A non-empty subset $C \subset V$ is a complex cone if, for all $z \in \C^* := \C \setminus \{0\}$, we have that $z C \subseteq C$. Such a cone is called \textit{proper} if $\overline{C}$ contains no complex planes.
\end{definition}
For proper complex cones, we may define an exceptional set (in the language of \cite{dubois2010complexifiedconesspectralgaps}) for points $x, y \in C \setminus \{0\}$ as follows:
\begin{align}
\label{eq:exceptional-set}
E_C(x, y) := \{z \in \C : zx - y \notin C\}.
\end{align}
Note that, for proper cones, we always have that $E_C(x, y) \neq \emptyset$ for $x, y$ linearly independent, as $C$ would otherwise contain a complex plane. Moreover, if $0 \in E_C(x, y)$, then $y \notin C$, a contradiction. Thus, $E_C(x, y)$ is a non-empty subset of $\C^*$ for all $x, y \in C \setminus \{0\}$ that are linearly independent.

Using the notion of the exceptional set, we can define a projective pre-metric on the cone $C$.
\begin{align}
\label{eq:projective-metric}
\delta_C(x, y) = \begin{cases}
0 & y = zx \text{ for some } z \in \C, \\
\log\left(\frac{\sup|E_C(x, y)|}{\inf|E_C(x,y)|}\right) & \text{otherwise}.
\end{cases}
\end{align}
For arbitrary complex cones, $\delta_C$ is a pre-metric in the sense that it does \textit{not} necessarily satisfy the triangle inequality. However, $\delta_C$ is always symmetric and satisfies $\delta_C(\lambda x, \lambda y) = \delta_C(x, y)$ for all $\lambda \in \C^*$. To see why this last property holds, observe that
\begin{align*}
    E_C(x, \lambda y) = \{z \in \C : zx - \lambda y \notin C\}.
\end{align*}
Now, if $z \in E(x, \lambda y)$ is such that $z \notin E(x / \lambda, y)$, then $z x / \lambda - y \in C$. In turn, because $C$ is a complex cone, we have that $zx - \lambda y \in \lambda C \subseteq C$, a contradiction of the fact that $z \in E_C(x, \lambda y)$. Hence, $E_C(x, \lambda y) \subseteq E_C(x / \lambda, y)$. Conversely, if $z \in E_C(x / \lambda, y)$, then the same reasoning shows that $z \in E_C(x, \lambda y)$ so that the two sets are in fact equal. Hence, $\delta_C(x, \lambda y) = \delta_C(x / \lambda, y)$ for all $\lambda \in \C^*$. In particular,
\begin{align*}
    \delta_C(x, y) = \delta_C(x, \lambda(y / \lambda)) = \delta_C(x / \lambda, y / \lambda).
\end{align*}
As this holds for all $\lambda \in \C^*$, the claim follows. In fact, as we will show below, $\delta_C(\lambda_1 x, \lambda_2 y) = \delta_C(x, y)$ for all $\lambda_1, \lambda_2 \in \C^*$. This property explains why $\delta_C$ is referred to as a projective (pre)metric on the complex cone $C$: it is scale-invariant. In order to turn $\delta_C$ into a proper metric, we require an additional property of the cone $C$.
\begin{definition}
\label{def:linearly-convex-cone}
A complex cone $C \subset V$ is \textit{linearly convex} if for every point $v \in V \setminus C$, there exists a hyperplane passing through $v$ that does not intersect $C$.
\end{definition}
If $V'$ is the dual space to $V$ and $\langle \cdot, \cdot \rangle : V' \times V \to \C$ is the canonical pairing, then we can define the \textit{dual complement} of $C$ to be the set
\begin{align*}
    C' := \{f \in V' : \langle f, x \rangle \neq 0 \text{ for all } x \in C\}.
\end{align*}
For linearly convex cones, the following property holds.
\begin{lemma}
\label{lem:dual-complement-char}
A point $x \in V$ belongs to a linearly convex cone $C$ if and only if for all $f \in C'$, $\langle f, x \rangle \neq 0$.
\end{lemma}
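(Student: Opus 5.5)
The two directions have different content. One is a direct consequence of the definitions; the other uses linear convexity.

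\emph{Forward direction.} Suppose $x \in C$ and $f \in C'$. By the definition of the dual complement $C'$, $f$ does not vanish anywhere on $C$, so $\langle f, x\rangle \neq 0$.

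\emph{Converse.} We argue by contraposition. Suppose $x \notin C$; we exhibit some $f \in C'$ with $\langle f, x\rangle = 0$.

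First, by Definition~\ref{def:linearly-convex-cone}, there is a complex affine hyperplane $\Pi$ through $x$ with $\Pi \cap C = \emptyset$. Such a hyperplane has the form $\Pi = \{v : \langle g, v\rangle = c\}$ for some nonzero continuous functional $g \in V'$ and some $c \in \C$. Since $x \in \Pi$, we have $c = \langle g, x\rangle$.

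Next, we show that $\Pi$ must pass through the origin, i.e.\ $c = 0$. Suppose instead $c \neq 0$. Take any $y \in C$; note $C$ is non-empty. If $\langle g, y\rangle \neq 0$, put $z = c/\langle g, y\rangle \in \C^*$. Because $C$ is a complex cone, $zy \in C$, and $\langle g, zy\rangle = c$, so $zy \in \Pi \cap C$. This is a contradiction. Hence $\langle g, y\rangle = 0$ for every $y \in C$. In that case we replace $\Pi$ by the parallel hyperplane $\ker g$ through the origin, which certainly meets $C$; this is not yet the contradiction we want. So we must rule out $g$ vanishing identically on $C$, or else handle the case $c \neq 0$ differently.

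This is the main obstacle: showing that a separating hyperplane through $x$ can be taken to pass through $0$. Two natural routes exist.

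The first route is to adopt the standard convention of Dubois and Rugh that, for complex cones, the hyperplanes in Definition~\ref{def:linearly-convex-cone} are complex hyperplanes through the origin. Then $\Pi = \ker f$ with $f(x) = 0$ and $f$ nonvanishing on $C$, which is exactly a functional $f \in C'$ with $\langle f, x\rangle = 0$.

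The second route keeps affine hyperplanes. If $0 \in C$, then any affine $\Pi$ avoiding $C$ with $c = 0$ would contain $0$, which is impossible. In that case the cone condition, as computed above, forces $g|_C \equiv 0$. Then every $y \in C$ lies in $\ker g$, and $x \notin \ker g$ because $c = \langle g, x\rangle \neq 0$. To finish, one would use properness or closedness of $C$ to perturb $g$ into a functional that vanishes at $x$ and nowhere on $C \setminus\{0\}$.

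I would take the first route, treating $C'$ and the condition modulo $0$, i.e.\ with $C$ understood as $C \setminus \{0\}$, since $0 \in C$ makes $C'$ empty. With that convention the converse is immediate: the hyperplane $\ker f$ through $x$ avoiding $C$ supplies $f \in C'$ with $\langle f, x\rangle = 0$. This contradicts the hypothesis that every $f \in C'$ is nonzero at $x$, so $x \in C$.
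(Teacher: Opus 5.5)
Your chosen first route is correct and is exactly the paper's proof. The forward direction is the definition of $C'$, and for the converse the paper likewise reads the hyperplane supplied by linear convexity as $\ker f$ for some $f \in V'$ (a complex hyperplane through the origin), which immediately yields $f \in C'$ with $\langle f, x \rangle = 0$.

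The affine detour and the ``modulo $0$'' adjustment are not needed, since under the linear reading $0 \in C$ forces $C = V$ and the lemma holds vacuously. If you did want the affine reading with $0 \notin C$, the missing step is not properness or closedness. Instead, apply linear convexity at the point $0$ to get $f_0 \in C'$, and then use $f_0 - \bigl(\langle f_0, x\rangle / \langle g, x\rangle\bigr) g$, which vanishes at $x$ and agrees with $f_0$ on $C$ because $g$ vanishes on $C$.
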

\begin{proof}
If $x \in C$, it follows directly from the definition of $C'$ that $\langle f, x \rangle \neq 0$ for all $f \in C'$. Conversely, suppose that $x \notin C$. Then, because $C$ is a linearly convex cone, there exists a complex hyperplane passing through $x$ that does not intersect $C$. In particular, there exists $f \in V'$ such that $\langle f, x \rangle = 0$ but $\langle f, y \rangle \neq 0$ for all $y \in C$. In other words, there exists $f \in C'$ with $\langle f, x \rangle = 0$. Taking the contrapositive of what was just shown reveals that if $\langle f, x \rangle \neq 0$ for all $f \in C'$, then $x \in C$.
\end{proof}
This property gives an alternative characterization of $\delta_C$ when $C$ is linearly convex, upgrading $\delta_C$ from a pre-metric to a metric.
\begin{lemma}[Lemma 2.1 in \cite{Dubois2009}]
\label{lem:linearly-convex-metric}
If $C$ is a linearly convex proper complex cone, then $\delta_C$ is a projective metric on $C$, and it can be variationally characterized as
\begin{align*}
    \delta_C(x, y) = \sup_{f,g \in C'} \log \left|\frac{\langle f, y \rangle \langle g, x \rangle}{\langle f, x \rangle \langle g, y \rangle}\right|.
\end{align*}
\end{lemma}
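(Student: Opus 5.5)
The plan is to compute the exceptional set explicitly through the dual complement, and then to read off both the variational formula and the metric axioms from that description.

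\textbf{Step 1: describe $E_C(x,y)$ via $C'$.} Fix linearly independent $x, y \in C\setminus\{0\}$. By Lemma~\ref{lem:dual-complement-char}, $zx - y \notin C$ if and only if there is some $f \in C'$ with $\langle f, zx - y\rangle = 0$. Since $\langle f, x\rangle \neq 0$ for $f\in C'$, this gives
\begin{align*}
    E_C(x,y) = \left\{ \frac{\langle f, y\rangle}{\langle f, x\rangle} : f \in C' \right\}.
\end{align*}
Substituting into \eqref{eq:projective-metric} yields
\begin{align*}
    \delta_C(x,y) = \log \frac{\sup_f |\langle f,y\rangle/\langle f,x\rangle|}{\inf_g |\langle g,y\rangle/\langle g,x\rangle|} = \sup_{f,g\in C'} \log\left|\frac{\langle f, y \rangle \langle g, x \rangle}{\langle f, x \rangle \langle g, y \rangle}\right|.
\end{align*}
This is the claimed variational characterization. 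When $y = zx$, every ratio equals $z$, so the supremum is $0$, consistent with the first case of \eqref{eq:projective-metric}. The formula also makes $\delta_C(\lambda_1 x, \lambda_2 y) = \delta_C(x,y)$ evident, since the scalars cancel in the cross-ratio.

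\textbf{Step 2: verify the metric axioms on the projectivization.} Symmetry follows by swapping the roles of $f$ and $g$. Nonnegativity follows by taking $f = g$. For the triangle inequality, fix $x, y, w$ in $C\setminus\{0\}$ and $f, g \in C'$, and factor
\begin{align*}
    \frac{\langle f,y\rangle\langle g,x\rangle}{\langle f,x\rangle\langle g,y\rangle} = \frac{\langle f,w\rangle\langle g,x\rangle}{\langle f,x\rangle\langle g,w\rangle}\cdot\frac{\langle f,y\rangle\langle g,w\rangle}{\langle f,w\rangle\langle g,y\rangle}.
\end{align*}
Taking logarithms of absolute values and then suprema gives $\delta_C(x,y)\le \delta_C(x,w)+\delta_C(w,y)$. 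All denominators are nonzero because $w \in C$.

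\textbf{Step 3: show that $\delta_C(x,y)=0$ forces $y = zx$.} I expect this to be the main obstacle, and it is where properness enters. Suppose $x, y$ are linearly independent and $\delta_C(x,y)=0$. Then $|E_C(x,y)|$ is a single circle $|z| = \rho$, so every $z$ with $|z|\neq\rho$ satisfies $zx - y\in C$. Hence $\{zx - y : |z|\ne\rho\}\subseteq C$. Multiplying by arbitrary $\mu\in\C^*$ and taking closures, $\overline{C}$ contains $\mu(zx-y)$ for all $z,\mu\in\C$, that is, all of $\mathrm{span}(x,y)$. This is a complex plane in $\overline{C}$, contradicting properness. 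Nonemptiness of $E_C$ (noted before the lemma) is also what keeps $\inf|E_C| > 0$ and $\sup|E_C| < \infty$ meaningful, up to allowing the value $+\infty$ for the metric.
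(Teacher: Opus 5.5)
Your Steps 1 and 2 are the paper's proof essentially verbatim: Lemma~\ref{lem:dual-complement-char} identifies $E_C(x,y)$ with the set of ratios $\langle f,y\rangle/\langle f,x\rangle$ for $f\in C'$, which gives the cross-ratio formula, and the triangle inequality comes from factoring the cross-ratio through a third point. Your Step 3 is correct and goes beyond the paper, whose proof never checks that $\delta_C(x,y)=0$ forces $x\sim y$; that is the step where properness genuinely matters, and the only touch-up needed is that the set $\{\mu(zx-y)\}$ misses the nonzero multiples of $x$, which you recover in $\overline{C}$ by letting $\mu\to 0$ with $\mu z$ held fixed.
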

\begin{proof}
If $x, y \in C$, then by Lemma~\ref{lem:dual-complement-char}, we know that $zx - y \notin C$ if and only if $\langle f, zx - y \rangle = 0$ for some $f \in C'$. In turn,
\begin{align*}
    E_C(x, y) = \left\{\frac{\langle f, y \rangle}{\langle f, x \rangle} : f \in C'\right\}.
\end{align*}
Recalling the definition of the exceptional set from Equation~\eqref{eq:exceptional-set} gives the claimed formula for $\delta_C$ when $x$ and $y$ are not collinear. When $x, y \in C$ \textit{are} collinear so that $y = cx$ for some $c \in \C^*$, we have that
\begin{align*}
    \frac{\langle f, y \rangle}{\langle f, x \rangle} = c
\end{align*}
for all $f \in C'$, so the supremum of the above quantity over all $f, g \in C'$ is zero.

We can now recover the triangle inequality because if $x, y , z \in C \setminus \{0\}$, then for all $f, g \in C'$,
\begin{align*}
    \log \left|\frac{\langle f, y \rangle \langle g, x \rangle}{\langle f, x \rangle \langle g, y \rangle}\right| &= \log\left|\frac{\langle f, z \rangle \langle g, x \rangle}{\langle f, x \rangle \langle g, z \rangle} \cdot \frac{\langle f, y \rangle \langle g, z \rangle}{\langle f, z \rangle \langle g, y \rangle}\right| \\
    &= \log\left|\frac{\langle f, z \rangle \langle g, x \rangle}{\langle f, x \rangle \langle g, z \rangle} \right| + \log \left|\frac{\langle f, y \rangle \langle g, z \rangle}{\langle f, z \rangle \langle g, y \rangle}\right|.
\end{align*}
Taking the supremum over all $f, g \in C'$ then yields $\delta_C(x, y) \leq \delta_C(x, z) + \delta_C(y, z) = \delta_C(x, z) + \delta_C(z, y)$. Note that the fact that $\delta_C(x, y) = \delta_C(y, x)$ is apparent from the above characterization, although symmetry still holds even when $C$ is not linearly convex.
\end{proof}
Writing $x \sim y$ if and only if $\C^* x = \C^* y$, we can define the projective metric space $(C / {\sim}, \delta_C)$ for proper, linearly convex complex cones $C$. Under an additional assumption on the cone $C$ (see \cite[Definition 4]{Dubois2009}), this is a complete metric space. If $V$ is finite-dimensional, the additional regularity requirement on $C$ (that it has bounded sectional aperture) is always satisfied as shown in \cite[Lemma 8.2]{Rugh2010}, and we may freely invoke the following result.
\begin{lemma}[Lemma 2.6 in \cite{Dubois2009}]
\label{lem:projective-complete-metric-space}
If $C$ is a proper, linearly convex complex cone and $V$ is a finite-dimensional vector space over $\C$, then the projective space $(C / {\sim}, \delta_C)$ is a complete metric space.
\end{lemma}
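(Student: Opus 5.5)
We have to prove Lemma~\ref{lem:projective-complete-metric-space}: for a proper, linearly convex complex cone $C$ in a finite-dimensional $V$, the space $(C/{\sim},\delta_C)$ is complete. The plan is to take a $\delta_C$-Cauchy sequence $[x_k]$, normalize representatives so that $\|x_k\|=1$ and a fixed functional pins the phase, extract a norm-convergent subsequence by compactness of the unit sphere, and then show that the limit lies in $C$ (not merely in $\overline{C}$) and that $\delta_C(x_k,x)\to 0$.

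\textbf{Step 1: normalization.} Fix some $x_1 \in C\setminus\{0\}$. By Lemma~\ref{lem:linearly-convex-metric}, for every $f,g \in C'$ and every $k$,
\begin{align*}
\left|\frac{\langle f, x_k\rangle \langle g, x_1\rangle}{\langle f, x_1\rangle\langle g, x_k\rangle}\right| \le e^{D},
\end{align*}
where $D=\sup_k \delta_C(x_1,x_k)<\infty$ because Cauchy sequences are bounded. Hence all ratios $\langle f,x_k\rangle/\langle g,x_k\rangle$ are controlled uniformly in $k$.

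We rescale $x_k$ so that $\|x_k\|=1$ and $\langle f_0,x_k\rangle>0$ for a fixed $f_0\in C'$. By compactness of the unit sphere, a subsequence converges in norm to some $x$ with $\|x\|=1$.

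\textbf{Step 2: the main obstacle, $x\in C$.} This is where finite dimensionality, via bounded sectional aperture in \cite[Lemma 8.2]{Rugh2010}, enters, and it is the hardest step. By Lemma~\ref{lem:dual-complement-char}, it suffices to show $\langle f,x\rangle\neq 0$ for every $f\in C'$.

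For each $f\in C'$, take $g=f_0$ in the bound above. This gives
\begin{align*}
|\langle f,x_k\rangle| \ge e^{-D}\,|\langle f, x_1\rangle/\langle f_0,x_1\rangle|\,\langle f_0,x_k\rangle,
\end{align*}
so it suffices to bound $\langle f_0,x_k\rangle$ away from zero. For that I would use the aperture property: there is $K$ with $\|y\|\le K|\langle f_0,y\rangle|$ for all $y\in C$. This follows from properness together with compactness of the unit sphere in finite dimensions. With this, $\langle f_0,x_k\rangle\ge 1/K$, so $|\langle f,x\rangle|>0$ in the limit, and therefore $x\in C$.

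\textbf{Step 3: convergence in $\delta_C$.} We have $\delta_C(x_k,x)\le \liminf_j \delta_C(x_k,x_j)$. This follows from the variational formula, since each ratio in Lemma~\ref{lem:linearly-convex-metric} is continuous in $x_j$ at $x$ (the denominators are nonzero by Step 2), and a supremum of continuous functions is lower semicontinuous. By the Cauchy property, the right-hand side tends to $0$ as $k\to\infty$. Convergence along the full sequence then follows from the Cauchy property together with the triangle inequality.
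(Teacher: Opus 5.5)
Steps 1 and 3 are sound, apart from one small correction noted below, but Step 2 rests on a false claim. For your fixed but otherwise arbitrary $f_0\in C'$ there need not be any $K$ with $\|y\|\le K|\langle f_0,y\rangle|$ on $C$, and the justification does not work. Compactness only shows that $\inf\{|\langle f_0,y\rangle| : y\in C,\ \|y\|=1\}$ is attained on $\overline{C}\cap\{\|y\|=1\}$. Membership in $C'$ only guarantees that $f_0$ is nonvanishing on $C$, not on $\overline{C}\setminus\{0\}$, and properness says nothing about this.

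For a concrete counterexample, take $V=\C^2$ and $C=\{(\psi_0,\psi_1):\psi_0\neq 0,\ |\psi_1|<|\psi_0|\}$.
\begin{itemize}
\item It is a complex cone.
\item It is proper, since $\overline{C}=\{|\psi_1|\le|\psi_0|\}\neq\C^2$.
\item It is linearly convex: for $v\notin C$, the line $\C v$ (or $\{\psi_0=0\}$ if $v=0$) avoids $C$.
\item The functional $f_0(\psi)=\psi_0+\psi_1$ lies in $C'$.
\end{itemize}
Yet $y_t=(1,-t)\in C$ for $t\in[0,1)$, with $\|y_t\|\ge 1$ and $f_0(y_t)=1-t\to 0$.

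Your compactness argument would be valid if $C\cup\{0\}$ were closed, as it is for $\LY(s)$, but the lemma makes no such assumption. The bounded sectional aperture that the paper imports from \cite{Rugh2010} is a condition on the sections of $C$ by complex planes. It is not a global bound of your form for every element of $C'$.

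The small correction: $\delta_C$ can take the value $+\infty$, as the paper points out for $\LY(s)$. A Cauchy sequence is therefore only bounded from some index on, so your base point should be a late term $x_N$ rather than $x_1$.

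Step 2 can be repaired without any aperture property, using the other half of the comparison you already wrote down in Step 1.
\begin{itemize}
\item Applying the variational formula with the roles of $f$ and $f_0$ exchanged gives $|\langle f,x_k\rangle|\le e^{D}\,|\langle f,x_N\rangle/\langle f_0,x_N\rangle|\,|\langle f_0,x_k\rangle|$ for every $f\in C'$.
\item Hence, if $\langle f_0,x\rangle=0$, then $\langle f,x\rangle=0$ for all $f\in C'$.
\item Properness and linear convexity force $C'$ to span $V'$. Otherwise some $v\neq 0$ is annihilated by all of $C'$. Lemma~\ref{lem:dual-complement-char} then gives $y+\mu v\in C$ for every $y\in C$ and $\mu\in\C$. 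So $\overline{C}$ contains the complex plane spanned by $y$ and $v$, which are independent because $f_0(v)=0\neq f_0(y)$.
\item Therefore $x=0$, contradicting $\|x\|=1$.
\end{itemize}
So $\langle f_0,x\rangle\neq 0$. Your lower bound then yields $\langle f,x\rangle\neq 0$ for all $f\in C'$, hence $x\in C$, and Step 3 finishes the proof.

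With this change you have a self-contained finite-dimensional compactness proof. The paper itself does not prove the lemma. It simply cites \cite[Lemma 2.6]{Dubois2009} together with Rugh's result that finite-dimensional cones have bounded sectional aperture.
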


Although we will invoke this result once below in the setting of Lee-Yang tensors, we first turn to the content of \cite[Theorem 2.3]{Dubois2009}. In a nutshell, Dubois' result states that any linear map $T$ between proper complex cones is a contraction in projective space. Moreover, the scale of the contraction can be bounded explicitly in terms of the projective diameter of $T$.
\begin{theorem}[Projective Contraction Principle, Theorem 2.3 in \cite{Dubois2009}]
\label{th:projective-contraction-principle}
Suppose that $V_1, V_2$ are complex Banach spaces and $C_1 \subset V_1$ and $C_2 \subset V_2$ are proper complex cones. If $T : V_1 \to V_2$ is a linear map with $T(C_1 \setminus \{0\}) \subseteq C_2 \setminus \{0\}$ and
\begin{align*}
    \Delta_T := \sup_{x, y \in C_1 \setminus \{0\}} \delta_{C_2}(Tx, Ty)
\end{align*}
is finite, then
\begin{align}
\label{eq:contraction}
    \delta_{C_2}(T x, T y) \leq \tanh\left(\frac{\Delta_T}{4}\right) \delta_{C_1}(x, y)
\end{align}
for all $x, y \in C_1$.
\end{theorem}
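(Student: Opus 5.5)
The plan is to mimic the proof of Birkhoff's Theorem~\ref{th:birkhoff}, with the real interval replaced by the exceptional set in $\C$. Fix non-collinear $x, y \in C_1 \setminus \{0\}$; if $Tx, Ty$ are collinear there is nothing to prove.

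\textbf{Step 1: monotonicity of exceptional sets.} If $z \notin E_{C_1}(x,y)$, then $zx - y \in C_1$. By linearity $zTx - Ty = T(zx - y) \in C_2$, using $T(C_1\setminus\{0\}) \subseteq C_2\setminus\{0\}$; the case $zx=y$ is excluded by non-collinearity. Hence
\begin{align*}
E_{C_2}(Tx,Ty) \subseteq E_{C_1}(x,y),
\end{align*}
which already gives $\delta_{C_2}(Tx,Ty) \le \delta_{C_1}(x,y)$. The work is to improve the constant to $\tanh(\Delta_T/4)$.

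\textbf{Step 2: reparametrize through two points of $C_1$.} Write $r = \inf|E_{C_1}(x,y)|$ and $R = \sup|E_{C_1}(x,y)|$, and assume $0<r\le R<\infty$; otherwise $\delta_{C_1}(x,y)=\infty$ and the bound is trivial. Any $p$ with $|p|<r$ and any $q$ with $|q|>R$ lie outside $E_{C_1}(x,y)$. Therefore
\begin{align*}
u := px - y \in C_1, \qquad v := qx - y \in C_1,
\end{align*}
and $u, v$ are non-collinear. Solving for $x$ and $y$ gives
\begin{align*}
(p-q)(zx - y) = (z-q)\,u - (z-p)\,v.
\end{align*}
Applying $T$ and using that $C_2$ is invariant under $\C^*$ shows that $z \in E_{C_2}(Tx,Ty)$ if and only if $\frac{z-p}{z-q} \in E_{C_2}(Tu,Tv)$, up to the harmless choice of orientation of the ratio. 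By the definition of $\Delta_T$, the set $E_{C_2}(Tu,Tv)$ lies in an annulus $\{\rho \le |w| \le \rho e^{\Delta_T}\}$ for some $\rho>0$. Consequently, for \emph{every} admissible pair $(p,q)$, the set $E_{C_2}(Tx,Ty)$ lies in the preimage of this annulus under the Möbius map $z \mapsto (z-p)/(z-q)$. That preimage is the region between two Apollonius circles for the pair $p, q$.

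\textbf{Step 3: extremal estimate.} Pick $z_1, z_2 \in E_{C_2}(Tx,Ty)$ with $|z_2|/|z_1|$ nearly equal to $e^{\delta_{C_2}(Tx,Ty)}$. Step 2 gives, for all admissible $p, q$,
\begin{align*}
\left|\frac{(z_1-p)(z_2-q)}{(z_1-q)(z_2-p)}\right| \le e^{\Delta_T}.
\end{align*}
To exploit this, choose $p = -s\,z_1/|z_1|$ with $s \uparrow r$ and $q = -t\,z_2/|z_2|$ with $t \downarrow R$. These are points on the opposite ray, which makes the cross ratio as large as possible. With this choice the inequality reduces to a purely real cross-ratio inequality among the four moduli $r \le |z_1| \le |z_2| \le R$. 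From there one runs Birkhoff's one-dimensional computation: maximize $\log(|z_2|/|z_1|)$ subject to the cross-ratio bound $e^{\Delta_T}$ with $\log(R/r) = \delta_{C_1}(x,y)$ fixed. This optimization yields $\log(|z_2|/|z_1|) \le \tanh(\Delta_T/4)\log(R/r)$, for instance via the hyperbolic-line identity underlying Theorem~\ref{th:birkhoff}.

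\textbf{Main obstacle.} I expect Step 3 to be the hardest part. Two things need care. First, the choice of $p, q$ must be the worst case in the complex plane, not just on a line; the angular alignment heuristic has to be justified. Second, the elementary but fiddly optimization must reproduce exactly the constant $\tanh(\Delta_T/4)$. If the aligned choice turns out not to be extremal, the fallback is to exploit the freedom to choose any $p$ in $\D_r$ and any $q$ outside $\overline{\D_R}$. I would then apply the Schwarz–Pick lemma to the annulus $\{r<|z|<R\}$, whose modulus is $\log(R/r)$, to transfer the complex configuration to the real case.
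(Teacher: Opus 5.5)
\paragraph{Verdict: genuine gap in Step 3.} Your Steps 1 and 2 are sound. Step 1 is exactly the inclusion the paper records before deferring to Dubois for the constant. But Step 3, the only place the factor $\tanh(\Delta_T/4)$ can come from, fails as written.

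\paragraph{Why the opposite-ray choice fails.} Putting $p$ and $q$ on the rays \emph{opposite} to $z_1$ and $z_2$ maximizes only the numerator $|z_1-p|\,|z_2-q|$. The denominator $|z_1-q|\,|z_2-p|$ still depends on $\arg(z_2/z_1)$, so the inequality does not reduce to one in the four moduli. Worse, when $z_1,z_2$ lie on a common ray, your choice gives, with $\rho_i=|z_i|$,
\[
\left|\frac{(z_1-p)(z_2-q)}{(z_1-q)(z_2-p)}\right|=\frac{(\rho_1+s)(\rho_2+t)}{(\rho_1+t)(\rho_2+s)}\le 1,
\]
because $(t-s)(\rho_1-\rho_2)\le 0$. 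So your displayed inequality is vacuous there. Its reciprocal only gives $\cosh^2\bigl(\frac12\log(R/r)\bigr)\le e^{\Delta_T}$ as $\rho_1\to r$, $\rho_2\to R$. That holds, for instance, with $\log(R/r)=1$ and $\Delta_T=0.3$, where the theorem demands $\log(\rho_2/\rho_1)\le\tanh(0.075)\approx 0.075$. The alignment heuristic points the wrong way, and the optimization you defer is never actually set up.

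\paragraph{The missing idea.} Place $p$ and $q$ on the \emph{same} rays: take $p=s\,z_1/|z_1|$ with $s\uparrow r$ and $q=t\,z_2/|z_2|$ with $t\downarrow R$. Then use the reciprocal form of your bound, which is available because the annulus condition is symmetric under $w\mapsto 1/w$:
\[
\frac{|z_1-q|\,|z_2-p|}{|z_1-p|\,|z_2-q|}\le e^{\Delta_T}.
\]
Here $|z_1-p|=\rho_1-s$ and $|z_2-q|=t-\rho_2$ exactly. The triangle inequality gives $|z_1-q|\ge t-\rho_1$ and $|z_2-p|\ge\rho_2-s$. So the angle disappears, and in the limit
\[
\frac{(R-\rho_1)(\rho_2-r)}{(\rho_1-r)(R-\rho_2)}\le e^{\Delta_T}.
\]
For $\rho_1<\rho_2$ this also forces $r<\rho_1$ and $\rho_2<R$.

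\paragraph{Finishing with Birkhoff's computation.} This is precisely the one-dimensional situation in Birkhoff's theorem. Set $\lambda_i=(\rho_i-r)/(R-\rho_i)$, so that $\rho_i=(r+\lambda_iR)/(1+\lambda_i)$; the bound then reads $\lambda_2\le e^{\Delta_T}\lambda_1$. With $d=\log(R/r)$,
\[
\log\frac{\rho_2}{\rho_1}=\int_0^d\left(\frac{\lambda_2e^u}{1+\lambda_2e^u}-\frac{\lambda_1e^u}{1+\lambda_1e^u}\right)du\le d\,\sup_{\gamma>0}\left(\frac{e^{\Delta_T}\gamma}{1+e^{\Delta_T}\gamma}-\frac{\gamma}{1+\gamma}\right)=\tanh\left(\frac{\Delta_T}{4}\right)d.
\]
The inequality uses that $x\mapsto x/(1+x)$ is increasing, and the supremum is attained at $\gamma=e^{-\Delta_T/2}$. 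Taking the supremum over $z_1,z_2\in E_{C_2}(Tx,Ty)$ yields the theorem, so the Schwarz--Pick fallback is unnecessary.
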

The key step in the proof of Theorem~\ref{th:projective-contraction-principle} is the rather straightforward observation that if $Tx$ and $Ty$ are linearly independent (if not, the contraction principle holds trivially), then
\begin{align*}
    z x - y \in C_1 \implies z T x - T y \in C_2
\end{align*}
because $T(C_1 \setminus \{0\}) \subseteq C_2 \setminus \{0\}$. Hence, $E_{C_2}(Tx, Ty) \subseteq E_{C_1}(x, y)$. In other words, $T$ contracts exceptional sets. In fact, to prove the squeezing lemma, this is the \textit{only} fact that we will need!

\section{Applications to Lee-Yang Tensors}
\label{sec:ly-connection}
We now turn our attention to Lee-Yang tensors. In particular, consider the set $\LY(r)$ of Lee-Yang tensors with radius $r$, viewed as a subset of the complex vector space $(\C^2)^{\otimes n}$. Throughout this section, we fix $V := (\C^2)^{\otimes n}$ and abuse notation by writing $\LY(r)$ for $\LY(r) \cap V = \LY_n((r, \ldots, r))$. For an operator $K$ on $V$, we write $K \in \LY(r)$ to mean that $K$, viewed as an element of $(\C^2)^{\otimes 2n}$, belongs to $\LY_{2n}((r, \ldots, r))$.
\begin{proposition}
\label{prop:ly-tensors-cone}
The set $\LY(r) \subset V$ is a proper, linearly convex complex cone.
\end{proposition}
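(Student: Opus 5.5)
The plan is to verify the three properties in Proposition~\ref{prop:ly-tensors-cone} one at a time: that $\LY(r)$ is a complex cone, that it is linearly convex, and that it is proper. All three reduce to elementary facts about the generating polynomial $f_\psi$ of Definition~\ref{def:generating-poly}. The key structural observation is that, for each fixed $w \in \C^n$, the evaluation map $\phi_w(\psi) := f_\psi(w) = \langle v(w) | \psi \rangle$ is a linear functional on $V$. Consequently
\begin{align*}
    \LY(r) = \{\psi \in V : \phi_w(\psi) \neq 0 \text{ for all } w \in \D_{(r,\ldots,r)}\}.
\end{align*}
Note that $0 \notin \LY(r)$, since $f_0 \equiv 0$.

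\textbf{Complex cone.} This step is immediate from the linearity of $\psi \mapsto f_\psi$. We have $f_{z\psi} = z f_\psi$, and $z f_\psi$ has no zeros in $\D_{(r,\ldots,r)}$ whenever $z \in \C^*$ and $f_\psi$ has none. Hence $z\LY(r) \subseteq \LY(r)$.

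\textbf{Linear convexity.} Let $v \in V \setminus \LY(r)$. By definition there is a point $w \in \D_{(r,\ldots,r)}$ with $f_v(w) = 0$; if $v = 0$, any $w$ works. Consider the complex hyperplane $\ker \phi_w$. It passes through $v$, and it does not meet $\LY(r)$, because every $\psi \in \LY(r)$ satisfies $\phi_w(\psi) = f_\psi(w) \neq 0$. This is exactly Definition~\ref{def:linearly-convex-cone}. The argument also shows that each $\phi_w$ lies in the dual complement $\LY(r)'$.

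\textbf{Properness.} This is the only step that needs care, because it concerns the closure $\overline{\LY(r)}$. I would proceed in two sub-steps.

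First, I would show that $\overline{\LY(r)} \subseteq \LY(r) \cup \{0\}$. Suppose $\psi_k \to \psi$ with $\psi_k \in \LY(r)$. Then $f_{\psi_k} \to f_\psi$ locally uniformly on $\C^n$, since these are polynomials of bounded degree whose coefficients converge. By the multivariate Hurwitz theorem on the connected open polydisk $\D_{(r,\ldots,r)}$, the limit $f_\psi$ is either nowhere vanishing there or identically zero. If the multivariate version is inconvenient to cite, I would instead apply the one-variable Hurwitz theorem along complex lines through the polydisk. If $f_\psi$ vanishes identically on an open set, then $\psi = 0$, because a polynomial vanishing on an open set is zero.

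Second, suppose for contradiction that $\overline{\LY(r)}$ contains the complex plane spanned by linearly independent vectors $x$ and $y$. Evaluate at the origin $w = 0 \in \D_{(r,\ldots,r)}$, where $f_\psi(0) = \psi_{0\cdots0}$. Choose $(a,b) \neq (0,0)$ so that $(ax+by)_{0\cdots0} = 0$; this is one linear equation in two unknowns, so such a pair exists. By linear independence, $ax+by \neq 0$, so by the first sub-step $ax+by \in \LY(r)$. But its generating polynomial vanishes at $0 \in \D_{(r,\ldots,r)}$, which is a contradiction.

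I expect the main obstacle to be the first sub-step, the Hurwitz-type description of the closure. Everything else is linear algebra on the evaluation functionals $\phi_w$.
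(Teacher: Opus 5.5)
Your proof is correct and follows the same route as the paper. You use linearity of $\psi \mapsto f_\psi$ for the cone property and the kernel of an evaluation functional $\phi \mapsto f_\phi(z)$ for linear convexity. For properness, you produce a nonzero element of the plane annihilated by such a functional: you take $z = 0$, while the paper takes an arbitrary $z$ and the element $\rho = f_\psi(z)\phi - f_\phi(z)\psi$. The one addition is your Hurwitz argument that $\LY(r) \cup \{0\}$ is closed, which the paper asserts without proof.
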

\begin{proof}
If $\psi \in \LY(r)$, then $\lambda \psi \in \LY(r)$ for all $\lambda \in \C^*$ because
\begin{align*}
    f_{\lambda \psi}(z) = \lambda f_{\psi}(z).
\end{align*}
Hence, if $f_\psi$ is non-vanishing on $\D_r^n$, then so is $f_{\lambda \psi}$ for any $\lambda \in \C^*$.

To see that $\LY(r)$ is proper, it suffices to show that it contains no complex planes because $\LY(r) \cup \{0\}$ is closed. If $\psi, \phi \in \LY(r)$ are linearly independent, then for any $z \in \D_r^n$
\begin{align*}
    \rho := f_\psi(z) \phi - f_\phi(z) \psi
\end{align*}
is non-zero (by linear independence), but $f_{\rho}(z) = 0$ so that $\rho \notin \LY(r)$. Thus, the complex plane spanned by $\psi, \phi$ cannot lie in $\LY(r)$.

Finally, to see that $\LY(r)$ is linearly convex, take any point $\psi \notin \LY(r)$. For such $\psi$, there exists $z_\psi \in \D_r^n$ such that $f_\psi(z_\psi) = 0$. In turn, the hyperplane $\{\phi \in V : f_{\phi}(z_{\psi}) = 0\}$ passes through $\psi$ and avoids $\LY(r)$.
\end{proof}

For the purposes of understanding the squeezing lemma, it now suffices to focus on the following slightly more general setting in which $K : V \to V$ is a linear map belonging to $\LY(r)$. In particular, it follows that $K(\LY(s)) \subseteq \LY(r)$ for all $s > 1/r$ so that $K$ satisfies the hypotheses of Theorem~\ref{th:projective-contraction-principle}. In this setting, a slightly asymmetric generalization of Equation~\eqref{eq:exceptional-set} provides valuable information about the eigenvectors of $K$.

\begin{lemma}
\label{lem:ly-exceptional-set}
If $K$ is as above and $\phi, \psi \in V \setminus \{0\}$ are linearly independent with $\psi \in \LY(r)$, then
\begin{align*}
    E_{\LY(s)}(\psi, \phi) := \{\lambda \in \C : \lambda \psi - \phi \notin \LY(s)\} = \left\{\frac{f_\phi(z)}{f_\psi(z)} : z \in \D_s^n \right\}
\end{align*}
for all $s \leq r$, and
\begin{align*}
    E_{\LY(r)}(K \psi, K \phi) \subseteq E_{\LY(s)}(\psi, \phi)
\end{align*}
for all $s \in (1/r, r]$.
\end{lemma}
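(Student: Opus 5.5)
The plan is to prove the two assertions in order, the first being a direct unwinding of definitions and the second being the "contraction of exceptional sets" observation that follows Theorem~\ref{th:projective-contraction-principle}.

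\emph{The first identity.} Fix $s \leq r$. Since $\D_s^n \subseteq \D_r^n$, the hypothesis $\psi \in \LY(r)$ gives $f_\psi(z) \neq 0$ for every $z \in \D_s^n$, so the quotient $f_\phi(z)/f_\psi(z)$ is well defined there. By linearity of $\psi \mapsto f_\psi$, we have $f_{\lambda\psi - \phi}(z) = \lambda f_\psi(z) - f_\phi(z)$. The vector $\lambda\psi - \phi$ is nonzero for every $\lambda$, because $\psi$ and $\phi$ are linearly independent. Hence it fails to lie in $\LY(s)$ exactly when this polynomial vanishes at some $z \in \D_s^n$, that is, exactly when $\lambda = f_\phi(z)/f_\psi(z)$ for some such $z$. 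This gives both inclusions at once.

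\emph{The second inclusion.} Fix $s \in (1/r, r]$. The key fact I will use is $K(\LY(s)\setminus\{0\}) \subseteq \LY(r)$, which the paper notes just before the lemma. I would justify it with Proposition~\ref{prop:ly-contraction}:
\begin{itemize}
\item view $K\chi$ as the contraction of $K \otimes \chi$ over the $n$ pairs of indices joining the input legs of $K$ to the legs of $\chi$;
\item the generating polynomial of $K \otimes \chi$ factorizes, so $K\otimes\chi$ is Lee-Yang with radius $r$ on the legs of $K$ and $s$ on the legs of $\chi$;
\item each contracted pair has $r s > 1$, so the result lies in $\LY(r)$ on the remaining output legs.
\end{itemize}
The strict inequality $rs>1$ is what excludes the degenerate "identically zero" case of Proposition~\ref{prop:ly-contraction}. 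It therefore also gives $K\chi \neq 0$.

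With this in hand, take $\lambda \notin E_{\LY(s)}(\psi,\phi)$. Then $\lambda\psi - \phi \in \LY(s)$, and it is nonzero by linear independence. So $K(\lambda\psi - \phi) = \lambda K\psi - K\phi \in \LY(r)$, which means $\lambda \notin E_{\LY(r)}(K\psi, K\phi)$. Taking contrapositives gives the inclusion.

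\emph{Main obstacle.} The only delicate point is making sense of $E_{\LY(r)}(K\psi,K\phi)$ when $K\psi$ and $K\phi$ are linearly dependent. The set is defined directly by $\{\lambda : \lambda K\psi - K\phi \notin \LY(r)\}$, and the argument above never used independence of $K\psi$ and $K\phi$, so it holds regardless. In the dependent case $K\phi = cK\psi$, the set is just $\{c\}$, since $K\psi \in \LY(r)\setminus\{0\}$. Nonvanishing of $f_{K\psi}$ on $\D_r^n$ then puts $K\psi$ in $\LY(r)$, and the first identity with $s=r$ applies to the image pair whenever it is independent.
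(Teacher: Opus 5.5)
Your proof is correct and follows the paper's argument. The first identity comes from unwinding the definition, using $f_\psi \neq 0$ on $\D_s^n$ and linearity of $\chi \mapsto f_\chi$; the inclusion comes from pushing $\lambda\psi - \phi \in \LY(s)$ through $K(\LY(s)) \subseteq \LY(r)$. Your extra justification of that mapping property via Proposition~\ref{prop:ly-contraction}, which the paper only asserts, and your remark on the case where $K\psi$ and $K\phi$ are linearly dependent are both sound additions.
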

\begin{remark}
The exceptional set in Equation~\eqref{eq:exceptional-set} was only defined for $x, y \in C \setminus \{0\}$, where $C$ is a proper complex cone. Above, we do \textit{not} require that $\phi \in \LY(r)$, although $\phi \neq 0$. This subtle distinction is important, as we will clarify below.
\end{remark}

\begin{proof}[Proof of Lemma~\ref{lem:ly-exceptional-set}]
For the first part of the lemma, the fact that $s \leq r$ and $\psi \in \LY(r)$ implies that $f_\psi(z) \neq 0$ for all $z \in \D_s^n$. With this in mind, if $\lambda \notin E_{\LY(s)}(\psi, \phi)$, then observe that $\lambda \psi - \phi \in \LY(s)$ if and only if $\lambda f_{\psi}(z) - f_{\phi}(z) \neq 0$ for any $z \in \D_s^n$. This, in turn, holds if and only if
\begin{align*}
    \lambda \notin \left\{\frac{f_\phi(z)}{f_\psi(z)} : z \in \D_s^n \right\}.
\end{align*}
The characterization of $E_{\LY(s)}(\psi, \phi)$ follows.

Next, because $K(\LY(s)) \subseteq \LY(r)$ for all $s \in (1/r, r]$, we have if $\lambda \notin E_{\LY(s)}(\psi, \phi)$, then the fact that $\lambda \psi - \phi \in \LY(s)$ again implies that
\begin{align*}
    \lambda K \psi - K \phi \in K(\LY(s)) \subseteq \LY(r).
\end{align*}
Thus, $\lambda \notin E_{\LY(r)}(K \psi, K \phi)$, proving the second part of the lemma.
\end{proof}

Equipped with this rather general lemma, we note the following general fact: if $\lambda_1, \lambda_2 \in \C^*$ and $C$ is a proper complex cone, then
\begin{align*}
    E_C(\lambda_1 x, \lambda_2 y) = \frac{\lambda_2}{\lambda_1} E_C(x, y)
\end{align*}
for all $x, y \in C \setminus \{0\}$. Even in the asymmetric formulation of Lemma~\ref{lem:ly-exceptional-set}, this property holds true. To see why, note that when $\zeta \notin E_C(\lambda_1 x, \lambda_2 y)$, then
\begin{align*}
    \zeta \lambda_1 x - \lambda_2 y \in C \implies \lambda_2 \left(\zeta \frac{\lambda_1}{\lambda_2} x - y\right) \in C \implies \zeta\frac{\lambda_1}{\lambda_2} x - y \in \frac{1}{\lambda_2} C \subseteq C.
\end{align*}
Hence, $\zeta\lambda_1 / \lambda_2 \notin E_C(x, y)$ so that $\zeta\notin \frac{\lambda_2}{\lambda_1} E_C(x, y)$. Conversely, if $\zeta\lambda_1 / \lambda_2 \notin E_C(x, y)$, then
\begin{align*}
    \zeta\frac{\lambda_1}{\lambda_2} x - y \in C \implies \zeta\lambda_1 x - \lambda_2 y \in \lambda_2 C \subseteq C \implies \zeta\notin E_C(\lambda_1 x, \lambda_2 y).
\end{align*}
This property of the exceptional set will prove particularly useful when $\phi$ and $\psi$ are eigenvectors of an operator $K$.

\subsection{The Squeezing Lemma and Diameter Bounds}
Taking $\phi$ and $\psi$ to be eigenvectors of $K$ with nonzero eigenvalues $\lambda_\phi$ and $\lambda_\psi$, we can recover the squeezing lemma of \cite{bravyi2026}.

\begin{lemma}[Generalized Squeezing Lemma]
\label{lem:gen-squeezing-lemma}
Suppose that $\phi$ and $\psi$ are linearly independent eigenvectors of $K$ with nonzero eigenvalues $\lambda_\phi$ and $\lambda_\psi$ and $\psi \in \LY(r)$. Then, for $s \in (1/r, r]$,
\begin{align*}
    \frac{\lambda_\phi}{\lambda_\psi} E_{\LY(r)}(\psi, \phi) \subseteq E_{\LY(s)}(\psi, \phi).
\end{align*}
\end{lemma}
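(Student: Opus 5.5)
The plan is to combine the second part of Lemma~\ref{lem:ly-exceptional-set} with the scaling identity for exceptional sets established just before this subsection. Both facts are already available, so the argument should be short. The only real care is in checking that each is applied within its hypotheses, since $\phi$ is not assumed to lie in any $\LY(\cdot)$.

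First, apply Lemma~\ref{lem:ly-exceptional-set} directly to the pair $(\psi,\phi)$. Its hypotheses are that $\psi\in\LY(r)$, that $\phi\neq 0$, and that $\phi,\psi$ are linearly independent; all three are given. Since $K\in\LY(r)$, we have $K(\LY(s))\subseteq\LY(r)$ for every $s\in(1/r,r]$, and the lemma then yields
\begin{align*}
E_{\LY(r)}(K\psi,K\phi)\subseteq E_{\LY(s)}(\psi,\phi).
\end{align*}
Next, substitute $K\psi=\lambda_\psi\psi$ and $K\phi=\lambda_\phi\phi$, which gives $E_{\LY(r)}(\lambda_\psi\psi,\lambda_\phi\phi)$ on the left.

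Then use the scaling property with $\lambda_1=\lambda_\psi$ and $\lambda_2=\lambda_\phi$, both of which are nonzero:
\begin{align*}
E_{\LY(r)}(\lambda_\psi\psi,\lambda_\phi\phi)=\frac{\lambda_\phi}{\lambda_\psi}E_{\LY(r)}(\psi,\phi).
\end{align*}
Substituting this into the inclusion above gives the claimed statement.

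The one step needing care is this scaling identity in the asymmetric setting, where $\phi$ need not lie in $\LY(r)$. I will recheck it directly from the definition. We have $\zeta\lambda_\psi\psi-\lambda_\phi\phi\in\LY(r)$ if and only if $\zeta(\lambda_\psi/\lambda_\phi)\psi-\phi\in\lambda_\phi^{-1}\LY(r)=\LY(r)$. This step uses only that $\LY(r)$ is invariant under nonzero complex scalars (Proposition~\ref{prop:ly-tensors-cone}); membership of $\phi$ in the cone is never needed. Linear independence of $\lambda_\psi\psi$ and $\lambda_\phi\phi$ also follows immediately from that of $\psi$ and $\phi$, so Lemma~\ref{lem:ly-exceptional-set} applies as stated and no step should present a genuine obstacle.
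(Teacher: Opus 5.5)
Your proposal is correct and is the paper's argument: the chain $\frac{\lambda_\phi}{\lambda_\psi}E_{\LY(r)}(\psi,\phi)=E_{\LY(r)}(\lambda_\psi\psi,\lambda_\phi\phi)=E_{\LY(r)}(K\psi,K\phi)\subseteq E_{\LY(s)}(\psi,\phi)$ combines the scaling identity with the second part of Lemma~\ref{lem:ly-exceptional-set}. Your direct recheck that the scaling identity uses only invariance of $\LY(r)$ under nonzero complex scalars, and not $\phi\in\LY(r)$, matches the paper's remark that the identity survives the asymmetric setting.
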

\begin{proof}
Observe that
\begin{align*}
    \frac{\lambda_\phi}{\lambda_\psi} E_{\LY(r)}(\psi, \phi) = E_{\LY(r)}(\lambda_\psi \psi, \lambda_{\phi} \phi) &= E_{\LY(r)}(K \psi, K \phi) \\
    &\subseteq E_{\LY(s)}(\psi, \phi)
\end{align*}
as desired.
\end{proof}
In the notation of \cite{bravyi2026}, with
\begin{align}
\label{eq:eigenvector-g}
    g(z) := \frac{f_\phi(z)}{f_\psi(z)},
\end{align}
Lemma~\ref{lem:gen-squeezing-lemma} and the first part of Lemma~\ref{lem:ly-exceptional-set} combine to yield
\begin{align*}
    \frac{\lambda_\phi}{\lambda_\psi} g(\D_r^n) \subseteq g(\D_s^n)
\end{align*}
for any $s \in (1/r, r]$. Note that Lemma~\ref{lem:gen-squeezing-lemma} streamlines the diameter lower bound from \cite{bravyi2026}. Taking
\begin{align*}
    \Delta_s(g) := \diam(g(\overline{\D_s^n})),
\end{align*}
then for $1 / r < s < t < r$, we observe that
\begin{align*}
    \frac{\lambda_\phi}{\lambda_\psi} g(\overline{\D_t^n}) \subseteq \frac{\lambda_\phi}{\lambda_\psi} g(\D_r^n) \subseteq g(\D_s^n) \subseteq g(\overline{\D_s^n}),
\end{align*}
which immediately gives the lower bound
\begin{align}
\label{eq:diam-lower-bound}
\left|\frac{\lambda_\phi}{\lambda_\psi}\right| \Delta_t(g) \leq \Delta_s(g).
\end{align}
The diameter upper bound is then a multivariate generalization of the Landau--Toeplitz theorem. One formulation of this result can be found in \cite[Theorem 1.2]{Gentili2013} although the result was apparently proven in the same year as Schwarz's lemma \cite{Burckel2008}.
\begin{theorem}[Landau--Toeplitz Theorem]
\label{th:landau-toeplitz}
If $f : \D \to \C$ is holomorphic in $\D$ and $\diam(f(\D)) = 2$, then
\begin{align*}
    \diam(f(r \D)) \leq 2 r
\end{align*}
for all $r \in (0, 1)$ and $|f'(0)| \leq 1$.
\end{theorem}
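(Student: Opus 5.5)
The plan is to reduce both claims to the classical Schwarz lemma, applied to an auxiliary function built from $f$ and a rotated copy of $f$. The hypothesis $\diam(f(\D)) = 2$ means exactly that $|f(\zeta) - f(\eta)| \leq 2$ for all $\zeta, \eta \in \D$. Any difference of the form $f(\zeta) - f(e^{i\theta}\zeta)$ is therefore a holomorphic map from $\D$ into the closed disk of radius $2$ that vanishes at the origin.

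First, I would reduce the diameter bound to pairs of points on a common circle. Fix $r \in (0,1)$. Since $f(r\D) \subseteq f(\overline{r\D})$, it suffices to bound $|f(z) - f(w)|$ for $z, w \in \overline{r\D}$.
\begin{itemize}
\item For fixed $w$, the function $z \mapsto f(z) - f(w)$ is holomorphic on a neighborhood of $\overline{r\D}$. By the maximum modulus principle, its modulus is maximized at some $z$ with $|z| = r$.
\item Fixing that $z$ and applying the same argument in $w$, the supremum of $|f(z) - f(w)|$ over $\overline{r\D} \times \overline{r\D}$ is attained at a pair $a, b$ with $|a| = |b| = r$.
\item Consequently $b = e^{i\theta} a$ for some $\theta \in \R$.
\end{itemize}

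Second, with this $\theta$ fixed, I would define $h(\zeta) := f(\zeta) - f(e^{i\theta}\zeta)$ on $\D$. This function is holomorphic, satisfies $h(0) = 0$, and obeys $|h| \leq 2$ by the diameter hypothesis, since both $\zeta$ and $e^{i\theta}\zeta$ lie in $\D$. Applying Schwarz's lemma to $h/2$ gives $|h(\zeta)| \leq 2|\zeta|$. In particular,
\begin{align*}
|f(a) - f(b)| = |h(a)| \leq 2r,
\end{align*}
which yields $\diam(f(r\D)) \leq 2r$.

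Third, for the derivative bound I would take $\theta = \pi$, so that $h(\zeta) = f(\zeta) - f(-\zeta)$. Then $h'(0) = 2f'(0)$, and the derivative part of Schwarz's lemma gives $|h'(0)| \leq 2$, hence $|f'(0)| \leq 1$. Alternatively, this follows by dividing the diameter bound by $r$ and letting $r \to 0$.

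The only step needing care is the first: pushing the extremal pair to the boundary circle so that the two points are related by a rotation about the origin. This is what makes a single-variable Schwarz argument sufficient. It is handled by applying the maximum principle separately in each variable, which is legitimate because $f$ is holomorphic on a neighborhood of $\overline{r\D}$ when $r < 1$.
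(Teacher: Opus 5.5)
Your proof is correct. The paper does not prove this theorem itself; it quotes it from the literature. It does, however, say that the closely related Proposition~\ref{prop:diameter-upper-bound} follows ``via the Schwarz lemma,'' and your argument rests on the same key step: Schwarz's lemma applied to half the difference of two copies of $f$, which vanishes at the origin and is bounded by $1$ thanks to the diameter hypothesis.

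The only difference is your maximum-modulus reduction to a pair on the circle $|z| = r$ related by a rotation, and it is not needed. For arbitrary $z, w \in \overline{r\D}$, the function $\zeta \mapsto \tfrac12\bigl(f(\zeta z/r) - f(\zeta w/r)\bigr)$ maps $\D$ into $\overline{\D}$, because $|\zeta z/r|, |\zeta w/r| \leq |\zeta| < 1$. It also vanishes at $0$, so evaluating the Schwarz bound at $\zeta = r$ gives $|f(z) - f(w)| \leq 2r$ directly.

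This direct rescaling, i.e.\ restricting to the complex line $\lambda \mapsto (\lambda z, \lambda w)$, is also the form that carries over to the polydisk statement of Proposition~\ref{prop:diameter-upper-bound}. There, pushing points to the boundary would only relate $z$ and $w$ by a coordinatewise rotation rather than a scalar one. Your detour is harmless, but it adds a step without buying anything.
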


We can recover the following, more general result that gives the diameter upper bound from \cite{bravyi2026}, with the exact same proof via the Schwarz lemma.

\begin{proposition}
\label{prop:diameter-upper-bound}
If $g : \D_r^n \to \C$ is a non-constant holomorphic function, in the sense that it is separately holomorphic in each variable, and $s \leq t < r$, then
\begin{align}
\label{eq:diameter-expansion}
    \diam(g(\overline{\D_s^n})) \leq \frac{s}{t} \diam(g(\overline{\D_t^n})).
\end{align}
\end{proposition}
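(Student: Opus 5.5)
The plan is to reduce to the one-variable Schwarz lemma by restricting $g$ to complex lines through the center of the polydisk. Set $D := \diam(g(\overline{\D_t^n}))$. We may assume $s < t$, since the case $s = t$ is trivial. We may also assume $D > 0$: otherwise $g$ is constant on $\overline{\D_t^n}$, hence constant on $\D_r^n$ by the identity theorem applied variable by variable. I will take separate holomorphy as giving a jointly holomorphic function, by Hartogs' theorem.

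Fix two points $z, w \in \overline{\D_s^n}$. The goal is to show $|g(z) - g(w)| \leq \frac{s}{t} D$; taking the supremum over $z, w$ then gives \eqref{eq:diameter-expansion}. I would use a two-point trick in a single disk variable. Consider
\begin{align*}
    h(\zeta) := g\!\left(\tfrac{\zeta}{s} z\right) - g\!\left(-\tfrac{\zeta}{s} w\right), \qquad \zeta \in \overline{\D_t}.
\end{align*}
For $|\zeta| \leq t$, both points $\tfrac{\zeta}{s} z$ and $-\tfrac{\zeta}{s} w$ lie in $\overline{\D_t^n}$, because $|z_a|, |w_a| \leq s$. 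Hence $|h(\zeta)| \leq D$ on $\overline{\D_t}$.

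The function $h$ is holomorphic on a neighborhood of $\overline{\D_t}$. This holds because $t < r$, so for $|\zeta| < t'$ with $t < t' < r/1$ the arguments stay in $\D_r^n$. Moreover $h(0) = g(0) - g(0) = 0$.

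The Schwarz lemma applied to $\zeta \mapsto h(t\zeta)/D$ on the unit disk then gives $|h(\zeta)| \leq D|\zeta|/t$. Evaluating at the real point $\zeta = s$ yields
\begin{align*}
    |g(z) - g(-w)| \leq \frac{s}{t} D.
\end{align*}

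This controls $g(z) - g(-w)$ rather than $g(z) - g(w)$. Since $w$ ranges over all of $\overline{\D_s^n}$, which is symmetric under $w \mapsto -w$, replacing $w$ by $-w$ covers every pair of points. The main subtlety is precisely this: choosing the path so that the difference vanishes at $\zeta = 0$ while both endpoints stay inside the larger polydisk. The antipodal parametrization handles both requirements at once.
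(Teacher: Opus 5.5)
Your argument is correct and is essentially the Schwarz-lemma proof that the paper defers to \cite{bravyi2026}: restrict $g$ to a complex line through the origin, bound the resulting one-variable difference by $\diam(g(\overline{\D_t^n}))$ on $\overline{\D_t}$ with a zero at $\zeta = 0$, and evaluate at $\zeta = s$. One small correction to your closing remark: the antipodal sign is unnecessary, since $\zeta \mapsto g(\zeta z/s) - g(\zeta w/s)$ already vanishes at $\zeta = 0$ and equals $g(z) - g(w)$ at $\zeta = s$, so what you call the ``main subtlety'' is not actually an obstacle.
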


It is possible to prove Proposition~\ref{prop:diameter-upper-bound} as a corollary of Theorem~\ref{th:landau-toeplitz}, but the argument in \cite{bravyi2026} is simpler and more direct. Lemma~\ref{lem:gen-squeezing-lemma} and Proposition~\ref{prop:diameter-upper-bound} combine to show that
\begin{align*}
    \left|\frac{\lambda_\phi}{\lambda_\psi}\right| \Delta_t(g) \leq \Delta_s(g) \leq \frac{s}{t} \Delta_t(g).
\end{align*}
for the choice of $g$ in Equation~\eqref{eq:eigenvector-g} and $1/r < s < t < r$.
Because $0 < \Delta_s(g) \leq \Delta_t(g) < \infty$ for any non-constant, holomorphic function $g$, taking $s \to 1/r$ and $t \to r$ as in \cite{bravyi2026} gives
\begin{align*}
    \left|\frac{\lambda_\phi}{\lambda_\psi}\right| \leq r^{-2}.
\end{align*}
\subsection{Extension to Non-Normal Operators}
The spectral ratio bound found in \cite[Theorem 2]{bravyi2026} establishes that for any positive semidefinite $n$-qubit operator $K \in \LY(r)$, the largest eigenvalue of $K$ is positive and non-degenerate with the spectral ratio bound
\begin{align*}
    \frac{\lambda_2}{\lambda_1} \leq r^{-2}.
\end{align*}
For non-normal matrices, the power iteration argument used to show that the dominant eigenvector $\psi$ of $K$ belongs to $\LY(r)$ breaks down. By carrying out an analogue of power iteration in projective space and applying Theorem~\ref{th:projective-contraction-principle}, we can entirely sidestep this obstacle and prove a spectral ratio for any operator that belongs to $\LY(r)$, regardless of whether it is normal or not. From this perspective, the operators in $\LY(r)$ are another broad class of operators that satisfy a Perron-Frobenius type theorem, as discussed in \cite{dubois2010complexifiedconesspectralgaps}.

Below, we assume that $r > 1$ and take $s \in (1/r, r)$. From Proposition~\ref{prop:ly-tensors-cone}, we know that $\LY(s) \subset V$ is a proper, linearly convex complex cone. Additionally, if $K \in \LY(r)$ is an operator, then $K(\LY(s)) \subseteq \LY(r) \subseteq \LY(s)$. Our first step is to show that the \textit{projective} diameter $\Delta_K$ appearing in Theorem~\ref{th:projective-contraction-principle} is finite so that $K$ is a strict contraction in projective space by way of two technical lemmas.
\begin{lemma}
\label{lem:bounded-oscillation}
Suppose $r > 1$ and $s \in (1/r, r)$, with $\psi \in \LY(r)$. Then, for any $z, w \in \D_s^n$, we have that
\begin{align}
\label{eq:mult-osc-bound}
\left|\frac{f_\psi(w)}{f_\psi(z)}\right| \leq \left(\frac{r+s}{r-s}\right)^n.
\end{align}
\end{lemma}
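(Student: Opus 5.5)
Since $f_\psi$ is multi-affine and nonvanishing on $\D_r^n$, I would change one coordinate at a time, bound each single-variable ratio by $(r+s)/(r-s)$, and multiply the $n$ bounds. Concretely, for $z, w \in \D_s^n$ define the interpolating points $u^{(k)} := (w_1, \ldots, w_k, z_{k+1}, \ldots, z_n)$ for $k = 0, \ldots, n$. Then $u^{(0)} = z$, $u^{(n)} = w$, and each $u^{(k)} \in \D_s^n$. Telescoping gives
\begin{align*}
\frac{f_\psi(w)}{f_\psi(z)} = \prod_{k=1}^n \frac{f_\psi(u^{(k)})}{f_\psi(u^{(k-1)})},
\end{align*}
and every denominator is nonzero because $s < r$ and $\psi \in \LY(r)$. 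It therefore suffices to bound each factor by $(r+s)/(r-s)$.

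For a fixed $k$, consecutive points differ only in coordinate $k$. Freeze the other coordinates and set $h(\zeta) := f_\psi(w_1, \ldots, w_{k-1}, \zeta, z_{k+1}, \ldots, z_n)$. Since $f_\psi$ is multi-affine, $h(\zeta) = a + b\zeta$, and $h$ is nonvanishing on the open disk $\D_r$, because the frozen coordinates lie in $\D_s^{n-1} \subseteq \D_r^{n-1}$. In particular $a = h(0) \neq 0$.

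The nonvanishing forces $|b| \leq |a|/r$: if $b \neq 0$, the only root $-a/b$ must satisfy $|a/b| \geq r$. Hence for $|\zeta| < s$,
\begin{align*}
|a|(1 - s/r) \leq |a + b\zeta| \leq |a|(1 + s/r).
\end{align*}
So for $\zeta, \zeta' \in \D_s$ we get $|h(\zeta')/h(\zeta)| \leq (r+s)/(r-s)$. Applying this with $\zeta = z_k$ and $\zeta' = w_k$ bounds the $k$-th factor, and multiplying the $n$ factors gives \eqref{eq:mult-osc-bound}.

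I do not expect a real obstacle here. The one point to get right is that every intermediate point $u^{(k)}$ stays inside $\D_s^n \subseteq \D_r^n$. This guarantees both that the denominators are nonzero and that each one-variable restriction $h$ is zero-free on all of $\D_r$, which is what gives $|b| \leq |a|/r$. The hypothesis $s > 1/r$ is not needed for this lemma; it is used only later, in Lemma~\ref{lem:K-projective-diameter}.
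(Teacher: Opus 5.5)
Your proof is correct and follows essentially the same route as the paper: telescope one coordinate at a time using multi-affineness, then bound each single-variable ratio by $(r+s)/(r-s)$ using that the affine restriction has no root in $\D_r$. Your parametrization $h(\zeta)=a+b\zeta$ with $|b|\le |a|/r$ is just the paper's $c(\zeta-\lambda_i)$ with $|\lambda_i|\ge r$, and it has the minor advantage of covering the constant case $b=0$ without a separate argument; your remark that the hypothesis $s>1/r$ is not used in this lemma is also correct.
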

\begin{proof}
Write $z = (z_1, \ldots, z_n)$ and $w = (w_1, \ldots, w_n)$. For $i = 1, \ldots, n$, we define
\begin{align*}
    h_i(\zeta) := f_\psi(z_1,\ldots, z_{i-1}, \zeta, w_{i+1}, \ldots, w_n),
\end{align*}
noting that each $h_i$ is affine and non-vanishing on $\D_r$ because $f_\psi$ is multiaffine and belongs to $\LY(r)$. If $h_i$ is constant, then
\begin{align*}
    \left|\frac{h_i(w_i)}{h_i(z_i)}\right| \leq 1 \leq \frac{r + s}{r - s}.
\end{align*}
If $h_i$ is non-constant, then it admits a root $\lambda_i \in \C \setminus \D_r$, and we may write $h_i(\zeta) = c(\zeta - \lambda_i )$ for some $c \in \C^*$. In this case,
\begin{align*}
    \left|\frac{h_i(w_i)}{h_i(z_i)}\right| = \left|\frac{w_i - \lambda_i}{z_i - \lambda_i}\right| \leq \frac{|w_i| + |\lambda_i|}{|\lambda_i| - |z_i|} \leq \frac{s + |\lambda_i|}{|\lambda_i| - s} \leq \frac{r + s}{r - s}.
\end{align*}
To complete the proof, the fact that each of the $h_i$ is non-vanishing on $\D_r$ allows us to write
\begin{align*}
    f_\psi(w_1, \ldots, w_n) =  f_\psi(z_1, w_2, \ldots, w_n) \frac{h_1(w_1)}{h_1(z_1)} = \ldots = f_\psi(z_1, \ldots, z_n) \prod_{i=1}^n \frac{h_i(w_i)}{h_i(z_i)}
\end{align*}
so that
\begin{align*}
    \left|\frac{f_\psi(w)}{f_\psi(z)}\right| \leq \prod_{i=1}^n \left|\frac{h_i(w_i)}{h_i(z_i)}\right| \leq \left(\frac{r+s}{r-s}\right)^n
\end{align*}
as claimed.
\end{proof}

\begin{lemma}
\label{lem:K-projective-diameter}
If $r > 1$, $s \in (1/r, r)$, and $\delta_{\LY(s)}$ is the projective metric on $\LY(s)$ (cf. Equation~\eqref{eq:projective-metric}), then
\begin{align*}
    \Delta_K(s) := \sup_{\phi, \psi \in \LY(s)} \delta_{\LY(s)}(K\phi, K\psi) \leq \diam_{\delta_{\LY(s)}}(\LY(r)) \leq 2n \log\left(\frac{r + s}{r - s}\right),
\end{align*}
where
\begin{align*}
    \diam_{\delta_{\LY(s)}}(\LY(r)) := \sup_{\phi, \psi \in \LY(r)} \delta_{\LY(s)}(\phi,\psi).
\end{align*}
\end{lemma}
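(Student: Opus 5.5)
The first inequality is a containment argument. The second comes from the variational description of the exceptional set combined with Lemma~\ref{lem:bounded-oscillation}.

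\textbf{First inequality.} Since $K \in \LY(r)$ and $s > 1/r$, the discussion preceding Lemma~\ref{lem:ly-exceptional-set} gives $K(\LY(s)) \subseteq \LY(r)$. So for $\phi, \psi \in \LY(s)$, both $K\phi$ and $K\psi$ lie in $\LY(r)$. Taking the supremum over such pairs yields
\[
\Delta_K(s) \leq \sup_{\phi', \psi' \in \LY(r)} \delta_{\LY(s)}(\phi', \psi') = \diam_{\delta_{\LY(s)}}(\LY(r)).
\]

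\textbf{Second inequality.} Fix $\phi, \psi \in \LY(r)$. If they are collinear, $\delta_{\LY(s)}(\phi,\psi)=0$ and there is nothing to show, so assume they are linearly independent.

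Because $s < r$, the first part of Lemma~\ref{lem:ly-exceptional-set} identifies the exceptional set as
\[
E_{\LY(s)}(\psi, \phi) = \left\{ \frac{f_\phi(z)}{f_\psi(z)} : z \in \D_s^n \right\}.
\]
This set is nonempty, since $\LY(s)$ is a proper cone by Proposition~\ref{prop:ly-tensors-cone}. Both $f_\phi$ and $f_\psi$ are nonvanishing on $\D_s^n \subseteq \D_r^n$. Consequently
\[
\delta_{\LY(s)}(\psi,\phi) = \sup_{z, w \in \D_s^n} \log \left|\frac{f_\phi(w)\, f_\psi(z)}{f_\phi(z)\, f_\psi(w)}\right|.
\]

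The ratio inside splits as $\left|f_\phi(w)/f_\phi(z)\right| \cdot \left|f_\psi(z)/f_\psi(w)\right|$. Applying Lemma~\ref{lem:bounded-oscillation} to each factor, once with $\phi \in \LY(r)$ and once with $\psi \in \LY(r)$, bounds each by $\left(\frac{r+s}{r-s}\right)^n$. Taking logarithms gives the bound $2n \log\frac{r+s}{r-s}$, uniformly in $\phi,\psi$. Symmetry of $\delta$ takes care of the ordering of the arguments.

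\textbf{Main point to check.} The only step needing care is the identification of $\delta_{\LY(s)}$ with a supremum over points $z,w \in \D_s^n$. This requires that every point of $E_{\LY(s)}(\psi,\phi)$ is realized as a ratio $f_\phi(z)/f_\psi(z)$, with no contributions from a closure or from the boundary of the polydisk. The exact set equality in Lemma~\ref{lem:ly-exceptional-set} provides this, and the finiteness of the bound shows that both $\sup|E|$ and $\inf|E|>0$ are controlled.
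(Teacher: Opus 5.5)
Your proposal is correct and follows the paper's proof essentially verbatim. The first inequality comes from the containment $K(\LY(s))\subseteq\LY(r)$. For the second, you identify the exceptional set via Lemma~\ref{lem:ly-exceptional-set} and apply Lemma~\ref{lem:bounded-oscillation} once to $f_\phi$ and once to $f_\psi$; writing $\delta_{\LY(s)}$ as a supremum of the cross-ratio over pairs $z,w\in\D_s^n$ just repackages the paper's split into the oscillations of $\log|f_\phi|$ and $\log|f_\psi|$.
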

\begin{proof}
The quantity $\diam_{\delta_{\LY(s)}}(\LY(r)) \in [0, \infty]$ is well-defined (although not finite, a priori) because $\LY(r) \subseteq \LY(s)$. The first inequality then follows immediately because $K(\LY(s)) \subseteq \LY(r)$. Now, fix linearly independent $\phi, \psi \in \LY(r)$. From Lemma~\ref{lem:ly-exceptional-set}, we know that
\begin{align*}
    E_{\LY(s)}(\psi, \phi) = \left\{\frac{f_\phi(z)}{f_\psi(z)} : z \in \D_s^n \right\}.
\end{align*}
By Equation~\eqref{eq:projective-metric}, we can then compute the projective distance between $\phi$ and $\psi$:
\begin{align*}
    \delta_{\LY(s)}(\phi, \psi) &= \log\left(\sup_{z \in \D_s^n} \left|\frac{f_\phi(z)}{f_\psi(z)}\right|\right) - \log\left(\inf_{z \in \D_s^n} \left|\frac{f_\phi(z)}{f_\psi(z)}\right|\right) \\
    &\leq \sup_{z \in \D_s^n} \log\left|\frac{f_\phi(z)}{f_\psi(z)}\right| - \inf_{z \in \D_s^n} \log\left|\frac{f_\phi(z)}{f_\psi(z)}\right| \\
    &\leq (\sup_{z \in \D_s^n} \log|f_\psi(z)| - \inf_{z \in \D_s^n} \log|f_\psi(z)|) + (\sup_{z \in \D_s^n} \log|f_\phi(z)| - \inf_{z \in \D_s^n} \log|f_\phi(z)|).
\end{align*}
Now, for each $z, w \in \D_s^n$, Lemma~\ref{lem:bounded-oscillation} shows that
\begin{align*}
    \log|f_\psi(z)| -  \log|f_\psi(w)| \leq n \log\left(\frac{r + s}{r - s}\right).
\end{align*}
After taking supremum over $z, w \in \D_s^n$, this shows that
\begin{align*}
    \delta_{\LY(s)}(\phi, \psi) \leq 2n \log\left(\frac{r + s}{r - s}\right).
\end{align*}
The proof is then complete upon taking the supremum over all $\phi, \psi \in \LY(r)$.
\end{proof}

We now carry out a procedure akin to power iteration, but in projective space. The key inequality is Equation~\eqref{eq:contraction}, which in this setting, tells us that
\begin{align*}
    \delta_{\LY(s)}(K \psi, K \phi) \leq \tanh\left(\frac{n}{2} \log\left(\frac{r + s}{r - s}\right)\right) \delta_{\LY(s)}(\phi, \psi).
\end{align*}
Taking
\begin{align*}
    \rho := \tanh\left(\frac{n}{2} \log\left(\frac{r + s}{r - s}\right)\right) < 1,
\end{align*}
it follows that $K$ is a strict contraction in projective distance. Moreover, from Lemma~\ref{lem:projective-complete-metric-space}, we know that $(\LY(s) / {\sim}, \delta_{\LY(s)})$ is a complete metric space. To carry out this projective version of power iteration, we therefore fix $\psi_0 \in \LY(s)$ and define $\psi_k := K^k \psi_0$. Emulating the proof of the Banach fixed point theorem, it is easy to show that the equivalence classes of $\psi_k$ form a Cauchy sequence in $(\LY(s) / {\sim}, \delta_{\LY(s)})$ and hence converges. Recognizing that $K \psi \sim \psi$ implies that $\psi$ is an eigenvector of $K$ with non-zero eigenvalue, we obtain the following:
\begin{lemma}
\label{lem:projective-cauchy-sequence}
There exists $\psi \in \LY(r)$ satisfying $K \psi = \lambda_* \psi$ for some $\lambda_* \in \C^*$.
\end{lemma}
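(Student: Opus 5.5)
We carry out the projective power iteration sketched above. Fix $s \in (1/r, r)$ and any nonzero $\psi_0 \in \LY(r) \subseteq \LY(s)$; for instance $\psi_0 = \ket{0}^{\otimes n}$, whose generating polynomial is the constant $1$. Since $K(\LY(s)) \subseteq \LY(r)$ and elements of $\LY(r)$ are nonzero by definition (the zero vector has a vanishing generating polynomial), every iterate $\psi_k := K^k \psi_0$ is nonzero and lies in $\LY(r)$. By Theorem~\ref{th:projective-contraction-principle} together with the diameter bound of Lemma~\ref{lem:K-projective-diameter}, we have
\begin{align*}
\delta_{\LY(s)}(\psi_{k+1}, \psi_k) \leq \rho^k \, \delta_{\LY(s)}(\psi_1, \psi_0), \qquad \rho < 1.
\end{align*}
The initial distance is finite because $\psi_0, \psi_1 \in \LY(r)$ and $\diam_{\delta_{\LY(s)}}(\LY(r)) < \infty$. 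By the triangle inequality from Lemma~\ref{lem:linearly-convex-metric}, the classes $[\psi_k]$ form a Cauchy sequence. Lemma~\ref{lem:projective-complete-metric-space} then gives a limit $[\psi] \in \LY(s)/{\sim}$.

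Next we show that $K\psi \sim \psi$. Since $K$ is $\delta_{\LY(s)}$-Lipschitz with constant $\rho$ on $\LY(s)$, we get
\begin{align*}
\delta(K\psi, \psi) \leq \delta(K\psi, \psi_{k+1}) + \delta(\psi_{k+1}, \psi) \leq \rho\, \delta(\psi, \psi_k) + \delta(\psi_{k+1}, \psi) \to 0.
\end{align*}
So $\delta_{\LY(s)}(K\psi, \psi) = 0$. This requires that $\delta$ be a genuine metric on the projective space, meaning zero distance forces collinearity. That holds because for linearly independent $x, y \in \LY(s)$, the exceptional set $E = \{f_y(z)/f_x(z) : z \in \D_s^n\}$ is the image of an open connected set under a nonconstant holomorphic map. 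It is therefore open, so $\sup|E| > \inf|E|$. Thus $K\psi = \lambda_* \psi$ for some $\lambda_* \in \C$. We have $\lambda_* \neq 0$ because $K\psi \in \LY(r)$ is nonzero. Finally, $\psi = \lambda_*^{-1} K\psi \in \LY(r)$, since $\LY(r)$ is a complex cone.

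The main obstacle is the limit step. It requires care that the completeness statement of Lemma~\ref{lem:projective-complete-metric-space} really produces a representative $\psi$ lying in $\LY(s)$, and not merely in its closure where $f_\psi$ might vanish somewhere in $\D_s^n$. If this is delicate, we would argue directly by compactness instead. Normalize $\psi_k$ to unit norm and extract a convergent subsequence $\psi_{k_j} \to \psi$ with $\|\psi\| = 1$. Then $f_\psi$ is a locally uniform limit of functions that are nonvanishing on $\D_r^n$. By Hurwitz's theorem, $f_\psi$ is either identically zero (excluded, since $\|\psi\| = 1$) or nonvanishing on $\D_r^n$, so $\psi \in \LY(r)$.

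Eigenvector status then follows from the contraction. We have
\begin{align*}
\delta(K\psi_{k_j}, \psi_{k_j}) \leq \rho^{k_j}\delta(\psi_1, \psi_0) \to 0.
\end{align*}
Moreover, $\delta_{\LY(s)}$ is continuous under the norm-limit for points in $\LY(r)$. This uses the variational formula over evaluation functionals at $z \in \D_s^n$, which are uniformly bounded away from zero on compact pieces by Lemma~\ref{lem:bounded-oscillation}. Hence $\delta(K\psi, \psi) = 0$, and we conclude as before.
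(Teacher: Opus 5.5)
Your main argument is correct and is essentially the paper's proof: projective power iteration, the contraction bound from Theorem~\ref{th:projective-contraction-principle} with the finite diameter of Lemma~\ref{lem:K-projective-diameter}, and completeness from Lemma~\ref{lem:projective-complete-metric-space}. You also make explicit several steps the paper leaves implicit, and your compactness--Hurwitz fallback is valid as well and would let you avoid the cited completeness lemma. The implicit steps are that $\pi(K\psi)=\pi(\psi)$ via the $\rho$-Lipschitz bound, that $\delta_{\LY(s)}=0$ forces collinearity by the open mapping theorem, that $\lambda_*\neq 0$, and that $\psi=\lambda_*^{-1}K\psi\in\LY(r)$ rather than merely $\LY(s)$.
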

\begin{proof}
If $\pi : \LY(s) \to \LY(s) / {\sim}$ is the canonical projection, consider the map $\widehat{K} : \LY(s) /{\sim}\to \LY(s) / {\sim}$ given by
\begin{align*}
    \widehat{K}(\pi(\phi)) = \pi(K \phi).
\end{align*}
This is a well-defined map because if $\phi \in \LY(s)$ is such that $\phi \sim \phi'$, then $\phi' = c \phi$ for some $c \in \C^*$. Thus, $\phi' \in \LY(s)$ as well, and $\pi(K \phi') = \pi(c K \phi) = \pi(K \phi)$, noting that $K\phi, K \phi' \in \LY(s)$ too.

Now, $\delta_{\LY(s)}(K \psi, K \phi) \leq \rho\, \delta_{\LY(s)}(\phi, \psi)$ for all $\phi, \psi \in \LY(s)$, with $\rho < 1$. Because $f_\psi$ may vanish on the boundary of $\D_s^n$, the metric $\delta_{\LY(s)}$ may be infinite. As a result, instead of applying the Banach fixed point theorem directly, we replicate its proof with a few small modifications. Fix $\psi_0 \in \LY(s)$ and let $\psi_k := K^k \psi_0$. For $i \geq 1$,
\begin{align*}
    \delta_{\LY(s)}(\psi_{i+1}, \psi_i)
    \leq \rho^{i-1} \delta_{\LY(s)}(\psi_2, \psi_1)
    = \rho^{i-1} \delta_{\LY(s)}(K\psi_1, K\psi_0)
    \leq \rho^{i-1} \Delta_K(s),
\end{align*}
which is finite by Lemma~\ref{lem:K-projective-diameter}. Next, for $1 \leq j < k$,
\begin{align*}
    \delta_{\LY(s)}(\psi_j, \psi_k)
    \leq \sum_{i=j}^{k-1} \rho^{i-1} \Delta_K(s)
    \leq \frac{\rho^{j-1}}{1 - \rho} \Delta_K(s),
\end{align*}
so $(\pi(\psi_k))_{k \geq 1}$ is a Cauchy sequence in $(\LY(s)/{\sim}, \delta_{\LY(s)})$. By Lemma~\ref{lem:projective-complete-metric-space}, it converges to some $\pi(\psi)$ with $\psi \in \LY(s)$. In fact, $\pi(K\psi) = \pi(\psi)$. In other words, there exists $\lambda_* \in \C^*$ such that $K \psi = \lambda_* \psi$.
\end{proof}

We have shown that if $K \in \LY(r)$ is an $n$-qubit operator, then it admits an eigenvector $\psi \in \LY(r)$, with eigenvalue $\lambda_* \in \C^*$. We can now invoke the machinery that we developed above almost verbatim to prove Theorem~\ref{th:lee-yang-gap}, which holds for \textit{any} operator belonging to $\LY(r)$.
\begin{proof}[Proof of Theorem~\ref{th:lee-yang-gap}]
From Lemma~\ref{lem:projective-cauchy-sequence}, we know that there exists an eigenvector $\psi \in \LY(r)$ such that $K \psi = \lambda_* \psi$ for some $\lambda_* \in \C^*$. 

First, if $\phi$ is any other eigenvector of $K$ with nonzero eigenvalue $\lambda$ such that $\phi$ and $\psi$ are linearly independent, then Lemma~\ref{lem:gen-squeezing-lemma} reveals that
\begin{align*}
    \frac{\lambda}{\lambda_*} E_{\LY(r)}(\psi, \phi) \subseteq E_{\LY(s)}(\psi, \phi).
\end{align*}
Following Equation~\eqref{eq:diam-lower-bound}, we have that
\begin{align*}
    \left|\frac{\lambda}{\lambda_*}\right| \Delta_t(g) \leq \Delta_s(g)
\end{align*}
for all $1 / r < s < t < r$. On the other hand, Proposition~\ref{prop:diameter-upper-bound} gives
\begin{align*}
    \Delta_s(g) \leq \frac{s}{t} \Delta_t(g),
\end{align*}
and because $0 < \Delta_s(g) \leq \Delta_t(g) < \infty$ exactly as before, we may take $s \to 1/r$ and $t \to r$, obtaining
\begin{align*}
    \left|\frac{\lambda}{\lambda_*}\right| \leq r^{-2}
\end{align*}
for any other eigenvalue $\lambda$ of $K$. In particular, $\lambda_1 = \lambda_*$. Note that this shows that the geometric multiplicity of $\lambda_1$ is one because we would otherwise obtain $1 \leq r^{-2}$, which is impossible.

However, we must also rule out the presence of a generalized eigenvector belonging to the $\lambda_*$-generalized eigenspace. If such a generalized eigenvector $\phi$ were to exist, then we would have that
\begin{align*}
    K \phi = \lambda_* \phi + \psi
\end{align*}
due to the presence of a Jordan block of size at least two corresponding to $\lambda_*$. Now, for any $\zeta \in \C$, this implies that
\begin{align*}
    \zeta K \psi - K \phi = \lambda_*\left[\left(\zeta - \frac{1}{\lambda_*}\right)\psi - \phi\right],
\end{align*}
so that $\zeta \in E_{\LY(r)}(K \psi, K \phi)$ if and only if 
\begin{align*}
    \zeta - \frac{1}{\lambda_*} \in E_{\LY(r)}(\psi, \phi).
\end{align*}
In other words,
\begin{align*}
    \frac{1}{\lambda_*} + E_{\LY(r)}(\psi, \phi) = E_{\LY(r)}(K \psi, K \phi) \subseteq E_{\LY(s)}(\psi, \phi).
\end{align*}
The lefthand side contains $\frac{1}{\lambda_*} + g(\overline{\D_t^n})$ and the righthand side is contained in $g(\overline{\D_s^n})$, so because the diameter of a set is invariant under translations, we have that
\begin{align*}
    \Delta_t(g) \leq \Delta_s(g) \leq \frac{s}{t} \Delta_t(g)
\end{align*}
for all $1 / r < s < t < r$, which is impossible. In particular, no such Jordan block can exist, and we conclude that the algebraic multiplicity of $\lambda_*$ is one.
\end{proof}

\begin{remark}
As noted in \cite{bravyi2026}, the characterization of the spectral ratio provided in Theorem~\ref{th:lee-yang-gap} is sharp, even when restricted to positive semidefinite operators. As the most straightforward example of this tightness, consider the family of one-qubit diagonal operators
\begin{align*}
    K = \begin{pmatrix}
    1 & 0 \\ 0 & r^{-2}
    \end{pmatrix}
\end{align*}
for any $r > 1$.
Then, the spectral ratio of $K$ is exactly $r^{-2}$, and its generating polynomial is
\begin{align*}
    F_K(z, w) = 1 + r^{-2} z w,
\end{align*}
which is zero-free on $\D_r^2$ so that $K \in \LY(r)$.
\end{remark}

\subsection{A Local-to-Global Principle for Spectral Gaps}
\label{sec:local-to-global}
Because the spectral ratio result from Theorem~\ref{th:lee-yang-gap} holds for non-normal operators, Trotterization suddenly becomes a much more useful tool. To demonstrate this new technique, we focus on the case of EPR-like Hamiltonians from \cite{wong2026}. However, the technique that follows may be useful as a local-to-global principle for establishing gaps via Theorem~\ref{th:lee-yang-gap} and Trotterization. The outline of the strategy is quite simple:
\begin{enumerate}
\item[(1)] Establish that each local term in the Hamiltonian has Lee-Yang radius greater than one.
\item[(2)] Choose a Trotterization of the Hamiltonian such that the Lee-Yang radius of the resulting product remains greater than one.
\item[(3)] Use Theorem~\ref{th:lee-yang-gap} to establish the spectral ratio of the Trotterization itself and pass to the limit. 
\end{enumerate}
Without Theorem~\ref{th:lee-yang-gap}, it is not possible to say anything about the spectral ratio of the Trotterization itself, although Steps (1) and (2) of the above procedure were already accessible using the tools from \cite{wong2026} and \cite{bravyi2026}.

To see this strategy in action, we recall that the EPR-like Hamiltonian on an arbitrary connected interaction graph $G = (V, E)$ is of the following form:
\begin{align*}
    H_s := - \sum_{(i, j) \in E} \ket{\phi_s} \bra{\phi_s}_{ij}, \quad \ket{\phi_s} = \ket{00} + s \ket{11}.
\end{align*}
For simplicity below, we may write each local term in the Hamiltonian as
\begin{align*}
    h_{ij}^{(s)} := -\ket{\phi_s} \bra{\phi_s}_{ij}.
\end{align*}

The only other key fact that we will need is the following result from \cite{wong2026}, concerning the EPR-like Hamiltonian on a single edge.

\begin{lemma}[Lemma 11 in \cite{wong2026}]
\label{lem:epr-like-edge-radius}
Suppose that $n = 2$ and that $G$ is the graph on a single edge. Then, $\exp(-\beta H_s) \in \LY(r_\beta(s))$, where
\begin{align*}
    r_\beta(s) = \left(\frac{1 + s^2 e^{-\beta(1 + s^2)}}{s^2 + e^{-\beta(1 + s^2)}}\right)^{1/4}.
\end{align*}
\end{lemma}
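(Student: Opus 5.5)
The plan is to compute $\exp(-\beta H_s)$ in closed form, write down its generating polynomial $F(z_1,z_2,w_1,w_2)$ explicitly, and then show directly that $F$ has no zeros in $\D_{r}^4$ with $r = r_\beta(s)$. I take $s$ real here, as in the setting of \cite{wong2026}, so that $\ket{\phi_s}\bra{\phi_s}$ is a rank-one projector up to normalization with $\|\phi_s\|^2 = 1 + s^2$.

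\textbf{Step 1: closed form for $F$.} Since $(\ket{\phi_s}\bra{\phi_s})^2 = (1+s^2)\ket{\phi_s}\bra{\phi_s}$, we have
\begin{align*}
    \exp(-\beta H_s) = I + c\,\ket{\phi_s}\bra{\phi_s}, \qquad c := \frac{e^{\beta(1+s^2)} - 1}{1+s^2} \geq 0.
\end{align*}
Using $F_A(z,w) = \langle v(z)|A|v(w^*)\rangle$, the identity contributes $(1+z_1w_1)(1+z_2w_2)$. The projector contributes $\langle v(z)|\phi_s\rangle\langle\phi_s|v(w^*)\rangle = (1+s z_1z_2)(1+s w_1w_2)$. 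Hence
\begin{align*}
    F(z,w) = (1+z_1w_1)(1+z_2w_2) + c\,(1+s z_1 z_2)(1+s w_1 w_2).
\end{align*}

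\textbf{Step 2: reduce to a one-variable extremal problem.} $F$ is multiaffine and invariant under $z_1\leftrightarrow z_2$, $w_1 \leftrightarrow w_2$ and $z\leftrightarrow w$. The goal is the largest $\rho$ such that $F$ has no zero with all $|z_i|,|w_i|<\rho$.

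I would treat $F$ as affine in one variable, say $w_2$:
\begin{align*}
    F = A + B w_2.
\end{align*}
A zero-free condition on the disk is $|A| \geq \rho|B|$ whenever the other three variables lie in $\D_\rho$. By the maximum principle applied to $A/B$, where this quotient is holomorphic, it suffices to check this on the distinguished boundary $|z_1|=|z_2|=|w_1|=\rho$. This reduces the problem to an inequality over three phases.

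The symmetries suggest the extremal configuration is the fully symmetric one. There one sets $z_1w_1 = z_2 w_2 = u$ and $z_1z_2 = w_1w_2 = u$, i.e.\ all variables equal to $x$ with $u = x^2$. The equation then becomes
\begin{align*}
    (1+u)^2 + c(1+su)^2 = 0, \quad\text{i.e.}\quad 1+u = \pm i\sqrt{c}\,(1+su).
\end{align*}
Solving gives
\begin{align*}
    |u|^2 = \frac{1+c}{s^2+c}.
\end{align*}
Rewriting via $1 + c(1+s^2) = e^{\beta(1+s^2)}$ should convert $|u|^2 = |x|^4$ into exactly $r_\beta(s)^4$. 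I would first verify this algebraic identity, since it confirms that the symmetric configuration is the right candidate.

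\textbf{Step 3 (main obstacle): the symmetric configuration is extremal.} I must show that no asymmetric choice of moduli and phases produces a zero of smaller maximal modulus. My first attempt would be an Asano--Ruelle style factorization, in the spirit of Proposition~\ref{prop:ly-contraction}.

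Writing $a = z_1 w_1$, $b = z_2 w_2$, $p = z_1z_2$, $q = w_1w_2$, we have the constraint $ab = pq$. The product $(1+a)(1+b)$ is zero-free for $|a|,|b|<1$. For $|a|,|b|,|p|,|q|<\rho^2$, I would bound
\begin{align*}
    \left|\frac{(1+sp)(1+sq)}{(1+a)(1+b)}\right| < \frac{1}{c}
\end{align*}
and check that the worst case forces $a=b=p=q$ with aligned phases.

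If this direct bound proves too lossy, the fallback is the following. Apply the one-variable reduction of Step 2 successively in each variable, in the manner of Lemma~\ref{lem:bounded-oscillation}. At each stage, use the symmetry to argue that equalizing the moduli can only move a zero inward. This monotonicity claim is the step I expect to require the most care.
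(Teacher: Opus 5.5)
\textbf{There is a genuine gap: Step 3, which carries the whole content of the lemma, is not proved, and the concrete inequality you propose for it is false.}

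Your Step 1 is correct. The symmetric computation has a slip: solving $1+u=\pm i\sqrt{c}\,(1+su)$ gives $|u|^2=(1+c)/(1+s^2c)$, not $(1+c)/(s^2+c)$. The corrected value does equal $r_\beta(s)^4$, since substituting $e^{\beta(1+s^2)}=1+c(1+s^2)$ turns $r_\beta(s)^4$ into $(1+c)/(1+cs^2)$.

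However, that computation produces a zero whose coordinates all have modulus exactly $r_\beta(s)$. It only shows the radius cannot be improved, which is the sharpness direction. The statement itself, zero-freeness on $\D_r^4$, is your Step 3, and there you give no argument.

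The one inequality you propose fails. Because $r>1$, the factor $(1+z_1w_1)(1+z_2w_2)$ vanishes inside $\D_r^4$, for example at $z=(1,1)$, $w=(-1,-1)$. At that point $c(1+sz_1z_2)(1+sw_1w_2)=c(1+s)^2\neq 0$, so the ratio you want to bound is infinite. Zero-freeness of $(1+a)(1+b)$ for $|a|,|b|<1$ says nothing about $|a|,|b|<r^2$. The reverse domination also fails, at $z=w=0$, whenever $c\leq 1$. So no termwise comparison of the two summands can work.

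The fallback ("equalizing moduli moves zeros inward") has no mechanism behind it. Note that $F$ is not invariant under $z_1\leftrightarrow z_2$ alone. It is invariant only under swapping both qubits simultaneously and under $z\leftrightarrow w$, so a Grace--Walsh--Szeg\H{o} symmetrization is not available either.

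\textbf{How to close the gap.} Your Step 2 already finishes the argument once you notice that $F$ is self-inversive at radius $r$. Expanding,
$F=(1+c)+z_1w_1+z_2w_2+cs(z_1z_2+w_1w_2)+(1+cs^2)z_1z_2w_1w_2$. Write $F=A+Bw_2$ with
\begin{align*}
A=(1+c)+z_1w_1+cs\,z_1z_2,\qquad B=z_2+cs\,w_1+(1+cs^2)\,z_1z_2w_1 .
\end{align*}
On $|z_1|=|z_2|=|w_1|=r$, substitute $\bar z_1=r^2/z_1$ (and likewise for the others) and use $r^4=(1+c)/(1+cs^2)$. This gives $rB=r^{-3}z_1z_2w_1\,\overline{A}$, so $|A|=r|B|$ holds identically there. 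Your "inequality over three phases" is therefore an identity, and no extremality claim is needed.

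What does need checking is that $A\neq 0$ on the closed tridisk $\overline{\D_r}^3$, so that the maximum principle applies to $B/A$. It must be $B/A$, not $A/B$ as you wrote, since you need an upper bound on $|B/A|$. By the triangle inequality, $A\neq 0$ follows from $r^2(1+cs)<1+c$. This is equivalent to $(1+cs)^2<(1+c)(1+cs^2)$, i.e.\ $c(1-s)^2>0$, which holds for $\beta>0$ and $s\in[0,1)$.

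Then $|B/A|\leq 1/r$ on $\overline{\D_r}^3$. So for $(z_1,z_2,w_1)\in\D_r^3$ the equation $A+Bw_2=0$ has no solution with $|w_2|<r$, which is the lemma. The symmetric zero you found then serves only to show that the radius is sharp.
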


We will prove the following result about EPR-like Hamiltonians by way of the strategy described above:
\begin{theorem}
\label{th:epr-gap}
Suppose that $H_s$ is the EPR-like Hamiltonian on a connected interaction graph $G = (V, E)$, with parameter $s \in [0, 1)$. Then, the spectral gap of $H_s$, denoted by $\gamma(H_s)$, satisfies
\begin{align}
\label{eq:epr-gap}
\gamma(H_s) \geq \frac{1 - s^2}{2}.
\end{align}
\end{theorem}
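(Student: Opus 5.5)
The plan is to apply the three-step strategy above to the Gibbs state $\exp(-\beta H_s)$, using the Lie--Trotter product formula. For $\tau > 0$, define the (generally non-normal) Trotter step
\begin{align*}
    T_\tau := \prod_{(i,j) \in E} \exp(-\tau h^{(s)}_{ij}),
\end{align*}
taken in a fixed ordering of the edges, with each factor acting as the identity on the qubits outside $\{i,j\}$. Then $T_{\beta/m}^m \to \exp(-\beta H_s)$ as $m \to \infty$. By Lemma~\ref{lem:epr-like-edge-radius}, each two-qubit block $\exp(-\tau h^{(s)}_{ij})$ lies in $\LY(r_\tau(s))$. For $s \in [0,1)$ and $\tau > 0$ one checks directly that $r_\tau(s) > 1$: the numerator minus the denominator of $r_\tau(s)^4$ equals $(1-s^2)(1 - e^{-\tau(1+s^2)}) > 0$.

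\textbf{Step 1: $T_\tau \in \LY(r_\tau(s))$.} This is the key bookkeeping step, and I expect it to be the main obstacle. The single-qubit identity has generating polynomial $1 + zw$, so it is only Lee-Yang with radius $1$. Padding each local factor by identities therefore does not directly give an operator in $\LY(r_\tau)$.

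Instead, I will view $T_\tau$ as a tensor network. Start from the tensor product of the two-qubit blocks $\exp(-\tau h^{(s)}_{ij})$ over all edges. This product lies in $\LY(r_\tau)$, since generating polynomials multiply. Then, for each vertex $v$, contract the output index of one block acting on $v$ with the input index of the next block acting on $v$ in the edge ordering. Every such contraction pairs two indices of radius $r_\tau$, so $r_\tau^2 > 1$ and Proposition~\ref{prop:ly-contraction} applies.

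The identity wires never need to appear, because composing with the identity just relabels the wire. Connectivity of $G$ guarantees every vertex lies in some edge, so every remaining open index has radius $r_\tau$. Hence $T_\tau \in \LY(r_\tau(s))$, and by the product closure noted after Proposition~\ref{prop:ly-contraction}, also $T_\tau^m \in \LY(r_\tau(s))$.

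\textbf{Step 2: apply Theorem~\ref{th:lee-yang-gap} to $T_{\beta/m}^m$.} Applied to $T_\tau$ itself, the theorem gives a simple top eigenvalue $\mu_1(\tau)$, with every other eigenvalue satisfying $|\mu| \leq r_\tau^{-2}|\mu_1(\tau)|$. Taking $m$-th powers, all eigenvalues of $T_{\beta/m}^m$ other than the top one, counted with algebraic multiplicity, are bounded in modulus by $r_{\beta/m}(s)^{-2m}$ times the top one.

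A first-order expansion of $r_\tau(s)^4$ handles the exponent. Write $a = 1+s^2$. Both the numerator $1 + s^2 e^{-\tau a}$ and the denominator $s^2 + e^{-\tau a}$ equal $a$ at $\tau = 0$. Their derivatives at $\tau = 0$ are $-s^2 a$ and $-a$ respectively. Hence
\begin{align*}
    \log r_\tau(s) = \tfrac{1}{4}(1 - s^2)\tau + O(\tau^2),
\end{align*}
and so $r_{\beta/m}(s)^{-2m} \to \exp(-\beta(1-s^2)/2)$ as $m \to \infty$.

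\textbf{Step 3: pass to the limit $m \to \infty$.} The spectrum of a matrix, viewed as a multiset, depends continuously on its entries, and $T_{\beta/m}^m \to \exp(-\beta H_s)$. Consider the statement ``at most one eigenvalue, with multiplicity, exceeds $\rho$ times the largest modulus''. This is preserved under limits with $\rho$ replaced by its limit. It therefore holds for $\exp(-\beta H_s)$ with $\rho = \exp(-\beta(1-s^2)/2)$.

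The eigenvalues of $\exp(-\beta H_s)$ are $e^{-\beta E_k}$, where the $E_k$ are the eigenvalues of $H_s$. So the ratio of the second to the first is $e^{-\beta \gamma(H_s)}$, which yields $e^{-\beta\gamma(H_s)} \leq e^{-\beta(1-s^2)/2}$. This gives Equation~\eqref{eq:epr-gap}. Any fixed $\beta > 0$ suffices.
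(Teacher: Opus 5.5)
Your proposal is correct and follows the paper's proof: Trotterize, show the Trotter step lies in $\LY(r_{\beta/M}(s))$, apply Theorem~\ref{th:lee-yang-gap} to the non-normal Trotter step, raise to the $M$-th power, and pass to the limit using $\log r_\tau(s) = \frac{1-s^2}{4}\tau + O(\tau^2)$. The only variation is in how you show $T_\tau \in \LY(r_\tau(s))$: you contract the edge blocks directly, so every contracted pair has radius product $r_\tau^2 > 1$, whereas the paper pads with identities of radii $r_\tau^{\pm 1}$, gets products equal to one, and must use invertibility to rule out the zero tensor; your explicit continuity-of-spectrum argument also makes the limit step more careful than the paper's.
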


Below, it will be useful to note that for small $\beta$ and $s \in [0, 1)$, we can write
\begin{align*}
    \log r_\beta(s) = \frac{1 - s^2}{4} \beta + O(\beta^2),
\end{align*}
a computation carried out in detail in Appendix~\ref{app:computations}.

With this in mind, we write the Gibbs state on an arbitrary interaction graph via the Lie product formula as
\begin{align}
\label{eq:trotterization}
    \exp(-\beta H_s) &= \exp\left(-\beta \sum_{(i, j) \in E} h_{ij}^{(s)}\right) \notag \\
    &= \lim_{M \to \infty} \left(\prod_{(i, j) \in E} \exp\left(-\frac{\beta}{M} h_{ij}^{(s)}\right)\right)^M,
\end{align}
with the convergence occurring in operator norm. For an operator on $n$ qubits, the following proof will require slightly more granular information about its Lee-Yang radius. In particular, following the convention of \cite{wong2026}, we can view such an operator as a tensor with $2n$ indices, enumerated as $(i_1, \ldots, i_n, i_1', \ldots, i_n')$. We sometimes refer to the first $n$ indices as ``ingoing'' indices, while the last $n$ may be called ``outgoing'' indices.

Now, Lemma~\ref{lem:epr-like-edge-radius} tells us that each local term
\begin{align}
\label{eq:full-ly-indices}
\exp\left(-\frac{\beta}{M} h_{ij}^{(s)}\right) \in \LY(1, \ldots, \underbrace{r_{\beta/M}(s)}_{\text{slot } i}, \ldots,
  \underbrace{r_{\beta/M}(s)}_{\text{slot } j}, \ldots, 1, 1, \ldots, \underbrace{r_{\beta/M}(s)}_{\text{slot } i'}, \ldots,
  \underbrace{r_{\beta/M}(s)}_{\text{slot } j'}, \ldots, 1).
\end{align}
To simplify notation below, we suppress all indices in Equation~\eqref{eq:full-ly-indices} that are equal to one, writing
\begin{align*}
\exp\left(-\frac{\beta}{M} h_{ij}^{(s)}\right) \in \LY(\underbrace{r_{\beta/M}(s)}_{\text{slot } i},
\underbrace{r_{\beta/M}(s)}_{\text{slot } j},\underbrace{r_{\beta/M}(s)}_{\text{slot } i'}, \underbrace{r_{\beta/M}(s)}_{\text{slot } j'}),
\end{align*}
for example.
Moreover, note that the identity operator on $n$ qubits admits the generating polynomial
\begin{align*}
    F_I(z, w) = \prod_{i=1}^n (1 + z_i w_i).
\end{align*}
Although it is certainly the case that $I \in \LY(1, \ldots, 1)$, it is often more useful to note that for any $r > 0$, we have that
\begin{align*}
    I \in \LY(1, \ldots, \underbrace{r}_{\text{slot } i}, \ldots, 1, 1, \ldots, \underbrace{r^{-1}}_{\text{slot } i'}, \ldots, 1) \cap \LY(1, \ldots, \underbrace{r^{-1}}_{\text{slot } i}, \ldots, 1, 1, \ldots, \underbrace{r}_{\text{slot } i'}, \ldots, 1)
\end{align*}

Using this, we can inductively produce a Lee-Yangness guarantee for the whole product that appears in the Trotterization.
\begin{lemma}
\label{lem:trotterized-ly}
For all $M \in \N$ and any ordering of local terms in the Trotterization in Equation~\eqref{eq:trotterization},  we have that
\begin{align*}
     \left(\prod_{(i, j) \in E} \exp\left(-\frac{\beta}{M} h_{ij}^{(s)}\right)\right)^M \in \LY(r_{\beta / M}(s)).
\end{align*}
\end{lemma}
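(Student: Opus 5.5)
The plan is to realize the Trotter product as one large tensor contraction and apply Proposition~\ref{prop:ly-contraction} with \emph{non-uniform} radii. This relies on two facts: a tensor product of Lee-Yang tensors is Lee-Yang with the concatenated multi-radius, since the generating polynomial factorizes, and the identity on a single qubit has the flexible radii $(r, r^{-1})$ or $(r^{-1}, r)$ on its (ingoing, outgoing) slots.

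Write $r := r_{\beta/M}(s)$ and $N := M|E|$. List the factors in the product as $A_1 A_2 \cdots A_N$, where each $A_k = \exp(-\frac{\beta}{M} h^{(s)}_{e_k})$. Decompose each $A_k$ as a genuine two-qubit operator $B_k$ on the edge $e_k=\{i,j\}$, tensored with single-qubit identities $I_\ell$ on every other qubit $\ell \notin e_k$. By Lemma~\ref{lem:epr-like-edge-radius}, $B_k$ has radius $r$ on all four of its slots $i,j,i',j'$. The full product is then the contraction of $\bigotimes_k \bigl(B_k \otimes \bigotimes_{\ell\notin e_k} I_\ell\bigr)$ along the ``wires'': for each qubit $\ell$ and each $k$, the outgoing slot $\ell'$ of the $k$-th factor is contracted with the ingoing slot $\ell$ of the $(k+1)$-st factor.

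The key step is to assign radii wire by wire. Fix a qubit $\ell$. Since $G$ is connected (with at least one edge) and $M \ge 1$, some factor $A_{k_\ell}$ has $\ell \in e_{k_\ell}$, and we take $k_\ell$ to be the first such index.
\begin{itemize}
\item For identity factors $I_\ell$ with $k<k_\ell$, use the radius assignment $(r, r^{-1})$ on (ingoing, outgoing).
\item For every factor with $k>k_\ell$, use $(r^{-1}, r)$; this is legitimate for the identity factors, and also for later edge factors $B_k$, whose radius $r$ dominates $r^{-1}$.
\item The factor $B_{k_\ell}$ keeps $(r, r)$.
\end{itemize}
Each internal contraction on wire $\ell$ pairs an outgoing radius with the next ingoing radius, and these products are $r^{-1}\cdot r = 1$ before $k_\ell$, $r \cdot r^{-1} = 1$ immediately after $k_\ell$, and $r\cdot r^{-1}=1$ thereafter, so they are always at least $1$. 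The uncontracted slots are the ingoing slot of the first factor and the outgoing slot of the last factor. The first has radius $r$: it is either an identity factor assigned $(r,r^{-1})$ or $B_{k_\ell}$ itself with $k_\ell=1$. The last has radius $r$ by the assignment $(r^{-1}, r)$, or by $B_{k_\ell}$ if $k_\ell = N$. Applying Proposition~\ref{prop:ly-contraction} repeatedly then places the product in $\LY_{2n}((r,\dots,r)) \subseteq \LY(r)$.

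The main obstacle is the borderline case, since all contractions have $r_ir_j = 1$ exactly. Proposition~\ref{prop:ly-contraction} then only gives that each intermediate contracted tensor is either identically zero or Lee-Yang with the stated radii. I would handle this by observing that a zero intermediate tensor forces the final product to be zero, because further contractions of the zero tensor remain zero. The final product, however, is a product of matrix exponentials and hence invertible, so it is nonzero. Consequently no intermediate tensor vanishes, and the Lee-Yang conclusion propagates at each step.

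The remaining care is purely bookkeeping: one must check that the contraction order is irrelevant (contractions of distinct index pairs commute) and that the argument is insensitive to the chosen ordering of the edges within each Trotter step. The latter holds because the construction only used the first occurrence $k_\ell$ of each qubit in whatever ordering is given.
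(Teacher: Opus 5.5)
Your argument is correct and is essentially the paper's. Both rely on the same three ingredients: the flexible radii $(r,r^{-1})$ and $(r^{-1},r)$ for identity factors, Asano--Ruelle contractions whose radius products are exactly $1$, and invertibility of the product of exponentials to rule out the zero alternative in Proposition~\ref{prop:ly-contraction}. The difference is only bookkeeping: you assign radii globally by the first occurrence of each qubit across all $M|E|$ factors, whereas the paper inducts over the factors of a single Trotter layer to place that layer in $\LY(r)$, then takes the $M$-th power using closure of $\LY(r)$ under products for $r>1$.
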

\begin{proof}
Suppose that the first two local terms that appear in the Trotterized product correspond to edge $(i, j)$ and edge $(k, \ell)$, respectively. There are two possible cases that we must consider. First, if $\{i, j\} \cap \{k, \ell\} = \emptyset$, then in the notation of Equation~\eqref{eq:full-ly-indices}, indices $i, j, k, \ell$ and $i', j', k', \ell'$ are all occupied by $r_{\beta / M}(s)$ upon taking the product of the two operators. Indeed,
\begin{align*}
    \exp\left(-\frac{\beta}{M} h_{ij}^{(s)}\right) \exp\left(-\frac{\beta}{M} h_{k \ell}^{(s)}\right)
\end{align*}
can be expressed as the tensor product of the two local terms, with identity matrices occupying the four remaining vertices.

On the other hand, if $|\{i,j\} \cap \{k,\ell\}| = 1$, assume without loss of generality that $i = k$ and that $i = k < j < \ell$. Using the observation about the identity matrix made above, we note that
\begin{align*}
\exp\left(-\frac{\beta}{M} h_{ij}^{(s)}\right) \in \LY(\underbrace{r_{\beta/M}(s)}_{\text{slot } i},
\underbrace{r_{\beta/M}(s)}_{\text{slot } j}, \underbrace{r_{\beta/M}(s)}_{\text{slot } \ell}, \underbrace{r_{\beta/M}(s)}_{\text{slot } i'}, \underbrace{r_{\beta/M}(s)}_{\text{slot } j'}, \underbrace{r_{\beta/M}(s)^{-1}}_{\text{slot } \ell'}),
\end{align*}
and
\begin{align*}
\exp\left(-\frac{\beta}{M} h_{k \ell}^{(s)}\right) \in \LY(\underbrace{r_{\beta/M}(s)}_{\text{slot } i},
\underbrace{r_{\beta/M}(s)^{-1}}_{\text{slot } j}, \underbrace{r_{\beta/M}(s)}_{\text{slot } \ell}, \underbrace{r_{\beta/M}(s)}_{\text{slot } i'}, \underbrace{r_{\beta/M}(s)}_{\text{slot } j'}, \underbrace{r_{\beta/M}(s)}_{\text{slot } \ell'}),
\end{align*}
Recalling from \cite[Lemma 1]{wong2026} the closure properties of the cone of Lee-Yang tensors under tensor contraction, it follows that either the resulting product is zero or
\begin{align*}
\exp\left(-\frac{\beta}{M} h_{ij}^{(s)}\right) \exp\left(-\frac{\beta}{M} h_{k \ell}^{(s)}\right) \in \LY(\underbrace{r_{\beta/M}(s)}_{\text{slot } i},
\underbrace{r_{\beta/M}(s)}_{\text{slot } j}, \underbrace{r_{\beta/M}(s)}_{\text{slot } \ell}, \underbrace{r_{\beta/M}(s)}_{\text{slot } i'}, \underbrace{r_{\beta/M}(s)}_{\text{slot } j'}, \underbrace{r_{\beta/M}(s)}_{\text{slot } \ell'}). 
\end{align*}
The product of the two local terms cannot be zero because all matrix exponentials involved are invertible and hence have nonzero product. Thus, the second option must occur.
In either case, for all vertices involved in the first two terms in the Trotterization, the Lee-Yang radius along the corresponding indices is given by $r_{\beta / M}(s)$.

We repeat this procedure inductively: at step $m + 1$ of the induction, suppose that local terms on vertices $\{i_1, \ldots, i_{n(m)}\}$ have already appeared in the Trotterization, where $n(m) \leq 2m$. If the next local term involves vertices $\{j, k\}$, then one of \textit{three} cases can now occur. If $\{j, k\} \cap \{i_1, \ldots, i_{n(m)}\} = \emptyset$, then the same reasoning as above ensures that all Lee-Yang radii along indices $\{i_1, \ldots, i_{n(m)}\} \cup \{j,k\}$ are $r_{\beta / M}(s)$ after evaluating the product of the first $m + 1$ terms in the Trotterization.

If $|\{j, k\} \cap \{i_1, \ldots, i_{n(m)}\}| = 1$ or $\{j, k\} \subset \{i_1, \ldots, i_{n(m)}\}$, then to all indices in $\{i_1, \ldots, i_{n(m)}\} \setminus \{j, k\}$, we associate an ingoing Lee-Yang radius of $r_{\beta / M}(s)^{-1}$ and an outgoing Lee-Yang radius of $r_{\beta / M}(s)$. When $|\{j, k\} \cap \{i_1, \ldots, i_{n(m)}\}| = 1$, then supposing that $j \notin \{i_1, \ldots, i_{n(m)}\}$, we must also associate an ingoing Lee-Yang radius of $r_{\beta / M}(s)$ and an outgoing Lee-Yang radius of $r_{\beta / M}(s)^{-1}$ with index $j$ in the product of the first $m$ terms of the Trotterization (which would otherwise still have radius one). After carrying out this bookkeeping, we observe that the Lee-Yang radius along all ingoing \textit{and} outgoing indices in $\{i_1, \ldots, i_{n(m)}\} \cup \{j, k\}$ is $r_{\beta / M}(s)$ upon taking the product of the first $m + 1$ terms in the Trotterization.

To conclude the proof, we note that because the interaction graph is connected, all vertices appear in at least one edge. Hence, upon carrying out the above inductive procedure until it terminates, we have that
\begin{align*}
    \prod_{(i, j) \in E} \exp\left(-\frac{\beta}{M} h_{ij}^{(s)}\right) \in \LY(r_{\beta / M}(s))
\end{align*}
regardless of the ordering of the edges in the Trotterization. It then immediately follows that
\begin{align*}
     \left(\prod_{(i, j) \in E} \exp\left(-\frac{\beta}{M} h_{ij}^{(s)}\right)\right)^M \in \LY(r_{\beta / M}(s)).
\end{align*}
because integer powers of Lee-Yang operators remain in the Lee-Yang cone of the same radius, again invoking \cite[Lemma 1]{wong2026}.
\end{proof}
At this point, it is tempting to simply take the limit as $M \to \infty$. However, because $r_{\beta / M}(s) \to 1$ as $M \to \infty$, taking such a limit only tells us that $\exp(-\beta H_s) \in \LY(1)$, which is not sufficient to obtain a spectral gap. Rather, we apply Theorem~\ref{th:lee-yang-gap} \textit{prior} to taking the $M \to \infty$ limit.

\begin{proof}[Proof of Theorem~\ref{th:epr-gap}]
Fixing $\beta \in (0, \infty)$ and letting
\begin{align*}
    T_M := \prod_{(i, j) \in E} \exp\left(-\frac{\beta}{M} h_{ij}^{(s)}\right),
\end{align*}
Theorem~\ref{th:lee-yang-gap} implies that
\begin{align*}
    \left|\frac{\lambda_2(T_M)}{\lambda_1(T_M)}\right| \leq r_{\beta / M}(s)^{-2},
\end{align*}
with the eigenvalues of $T_M$ ordered as usual. In turn, it follows that
\begin{align*}
    \left|\frac{\lambda_2(T_M^M)}{\lambda_1(T_M^M)}\right| \leq r_{\beta / M}(s)^{-2 M}.
\end{align*}
Because $T_M^M \to \exp(-\beta H_s)$ in operator norm,
\begin{align*}
    \log \left|\frac{\lambda_2(\exp(-\beta H_s))}{\lambda_1(\exp(-\beta H_s))}\right| &\leq \lim_{M \to \infty} [-2 M \log r_{\beta / M}(s)] \\
    &= \lim_{M \to \infty} \left[-\left(\frac{1 - s^2}{2}\right)\beta + O (\beta^2 / M) \right] \\
    &= -\left(\frac{1 - s^2}{2}\right)\beta.
\end{align*}
Writing the lefthand side in terms of the spectral gap $\gamma(H_s)$ of the EPR-like Hamiltonian, note that
\begin{align*}
    \frac{\lambda_2(\exp(-\beta H_s))}{\lambda_1(\exp(-\beta H_s))} = \exp(-\beta \gamma(H_s)),
\end{align*}
In particular, taking logarithms and rearranging gives
\begin{align*}
    \gamma(H_s) \geq \frac{1 - s^2}{2}
\end{align*}
as claimed.
\end{proof}

This conclusion is also an immediate corollary of \cite[Theorem 1]{bravyi2026}, although we hope that the local-to-global principle presented in this example may find use elsewhere. In particular, we only required that each local term has Lee-Yang radius $r_\beta(s)$ satisfying
\begin{align*}
    \log r_\beta(s) = c \beta + O(\beta^2)
\end{align*}
for some $c > 0$ as $\beta \to 0$. As the results in the following section demonstrate, this method of bounding spectral gaps cannot be extended beyond the setting of Suzuki-Fisher Hamiltonians. However, for Suzuki-Fisher Hamiltonians with zero external field, the methods in \cite{bravyi2026} do not apply. Hence, it still may be the case that the technique used to obtain Theorem~\ref{th:epr-gap} may yield useful information about the gaps of such Hamiltonians, with the caveat that many such models (e.g., the ferromagnetic XXZ chain) have a degenerate ground state \cite{koma1997}.

\section{Characterizing Lee-Yang Tensors}
\label{sec:ly-characterization}
In this section, we describe the limitations of the framework for Lee-Yangness introduced in \cite{wong2026}. First, Section~\ref{sec:k-local} demonstrates that $k$-local Hamiltonians cannot satisfy $\exp(-\beta H) \in \LY(1)$ for all inverse temperatures $\beta \geq 0$, even under arbitrary local changes-of-basis (see Definition~\ref{def:local-unitary}). Along the way, we provide a powerful yet elementary perturbative method for proving that the unnormalized Gibbs state $\exp(-\beta H)$ of a Hamiltonian $H$ does \textit{not} belong to $\LY(1)$ for sufficiently small $\beta > 0$. In particular, we apply the same perturbative proof technique in Section~\ref{sec:ly-to-sf} to show that if a Hamiltonian $H$ satisfies $\exp(-\beta H) \in \LY(1)$ for all $\beta \geq 0$, then $H$ is a Suzuki-Fisher Hamiltonian. Hence, Lee-Yangness at all temperatures is \textit{equivalent} to membership in the Suzuki-Fisher class of Hamiltonians.

See the remark in the introduction for a comparison of our work with that of \cite{wei2026}, which recovers many of the same results using entirely different machinery.

\subsection{$k$-Local Hamiltonians}
\label{sec:k-local}
Following the results presented above and the work in \cite{bravyi2026}, it is natural to ask whether the framework of Lee-Yangness can be fruitfully applied beyond the class of Suzuki-Fisher Hamiltonians. One might hope that certain $k$-local Hamiltonians, in which each local term involves at most $k$ qubits, benefit from a result akin to Theorem~\ref{th:lee-yang-gap}. Unfortunately, this doesn't appear to be the case: for any $k \geq 3$, we can exhibit a gapped $k$-local Hamiltonian $H$ such that the Gibbs state $\exp(-\beta H) \notin \LY(1)$ for all sufficiently small $\beta > 0$. By extension, the local-to-global principle presented in the previous section, which relies on a Lee-Yang radius strictly greater than one at \textit{all} sufficiently small positive inverse temperatures, is no longer applicable.

We use a similar notion of $k$-locality as in \cite{Harrow2020}, although our definition extends to arbitrary interaction graphs and requires the presence of a term of support of size exactly $k$.

\begin{definition}
\label{def:k-local}
A Hamiltonian on an interaction graph $G = (V, E)$ is $k$-local if it can be written in the form
\begin{align*}
    H = \sum_{\substack{X \subseteq V \\ |X| \leq k}} H_X
\end{align*}
for Hermitian operators $H_X$ that act non-trivially on sites in $X \subseteq V$.
\end{definition}

Below, it will be useful to express Hamiltonians in terms of the Pauli basis. First, let $n := |V|$, the number of qubits. For each multi-index $\balpha =(\alpha_1, \ldots, \alpha_n) \in \{I, x, y, z\}^n$, we take
\begin{align*}
    P_{\balpha} := \bigotimes_{i = 1}^n \sigma_{\balpha_i},
\end{align*}
where $\sigma_{\alpha}$ are the standard Pauli matrices
\begin{align*}
    \sigma_I = \begin{pmatrix}
    1 & 0 \\ 0 & 1 
    \end{pmatrix}, \quad \sigma_x = \begin{pmatrix}
    0 & 1 \\ 1 & 0
    \end{pmatrix}, \quad \sigma_y = \begin{pmatrix}
    0 & -i \\ i & 0
    \end{pmatrix}, \quad \sigma_z = \begin{pmatrix}
    1 &  0 \\ 0 & -1
    \end{pmatrix}
\end{align*}
for $\alpha \in \{I, x,y,z\}$. If we write the nontrivial support of a multi-index $\balpha \in \{I, x,y,z\}^n$ as
\begin{align*}
    \mathrm{supp}(\balpha) := \{i \in [n] : \alpha_i \in \{x,y,z\}\},
\end{align*}
then any $k$-local Hamiltonian $H$ can be expressed
\begin{align}
\label{eq:pauli-form}
    H = \sum_{\substack{\balpha \in \{I, x,y,z\}^n \\ |\mathrm{supp}(\balpha)| \leq k}} h_{\balpha} P_{\balpha}
\end{align}
where $h_{\balpha}$ are real coefficients. We say that $H$ is \textit{genuinely} $k$-local if there exists $\balpha \in \{I,x,y,z\}^n$ with $|\supp(\balpha)| = k$ and $h_{\balpha} \neq 0$.

Before stating our main result for $k$-local Hamiltonians, we also define a local unitary operator. The restriction to such unitaries is necessary because arbitrary entangled changes-of-basis can alter the locality of the Hamiltonian in question.
\begin{definition}
\label{def:local-unitary}
A unitary operator $U$ on $n$ qubits is \textit{local} if it tensorizes as $U = U_1 \otimes \ldots \otimes U_n$, where $U_i$ is a unitary operator on a single qubit for each $i \in [n]$.
\end{definition}

Equipped with these preliminaries, we prove that Gibbs states of genuinely $k$-local Hamiltonians fail to belong to $\LY(1)$ at sufficiently high temperature regardless of the choice of local basis.

\begin{theorem}
\label{th:k-local-obstruction}
Suppose that $H$ is a genuinely $k$-local Hamiltonian on $n$ qubits, with $k \geq 3$. Then, there exists an inverse temperature $\beta(H) > 0$ such that for all $\beta \in (0, \beta(H))$ and any local unitary $U$,
\begin{align*}
    U^* \exp(-\beta H) U \notin \LY(1).
\end{align*}
\end{theorem}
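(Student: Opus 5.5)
The plan is to expand the generating polynomial of the Gibbs state to first order in $\beta$ and show that, for $k \ge 3$, the first-order term always pushes some zero of $F_{\exp(-\beta H)}$ from the boundary into the open unit polydisk.

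\textbf{Reductions.} First I reduce to the case $U = I$. Conjugation by a local unitary $U = U_1 \otimes \cdots \otimes U_n$ maps each Pauli string $P_{\balpha}$ to a real combination of Pauli strings with the same support. Hence $U^* H U$ is again a genuinely $k$-local Hamiltonian, and $U^*\exp(-\beta H)U = \exp(-\beta U^* H U)$. The threshold $\beta(H)$ must be uniform in $U$. I will get this from compactness of $\mathrm{U}(2)^n$: every estimate below depends continuously on the Pauli coefficients of $U^*HU$, and the key nondegeneracy quantity (a $k$-body coefficient block) will be bounded below uniformly over $U$.

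\textbf{Perturbative setup.} Write $\exp(-\beta H) = I - \beta H + O(\beta^2)$, so that
\begin{align*}
F_\beta(z,w) = \prod_{i=1}^n (1 + z_i w_i) - \beta F_H(z,w) + O(\beta^2),
\end{align*}
uniformly on compact sets. The polynomial $F_I$ has no zeros in the open polydisk $\D_1^{2n}$, but it vanishes on the boundary, for instance wherever $z_1 w_1 = -1$ with $|z_1| = |w_1| = 1$.

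Fix such a boundary point, and keep the other variables $(z_i, w_i)_{i \ne 1}$ in the open disk. Regard $F_\beta$ as a function of $\zeta = w_1$ alone, with $z_1 = e^{i\theta}$ fixed. By Rouché's theorem applied on a small circle around $\zeta_0 = -e^{-i\theta}$, $F_\beta$ has a simple zero at
\begin{align*}
z_1 w_1 = -1 + \beta G + O(\beta^2), \qquad G := \frac{F_H(z,w)}{\prod_{i \ne 1}(1 + z_i w_i)}\bigg|_{z_1 w_1 = -1}.
\end{align*}
Since $|{-1} + \beta G|^2 = 1 - 2\beta\,\mathrm{Re}\,G + O(\beta^2)$, this zero lies strictly inside the disk as soon as $\mathrm{Re}\,G > 0$. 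This conclusion holds for all sufficiently small $\beta$ once a single configuration with $\mathrm{Re}\,G > 0$ is exhibited, because the configuration can be chosen with every coordinate other than $w_1$ strictly inside the disk. That would show $\exp(-\beta H) \notin \LY(1)$ for all small $\beta$.

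\textbf{Main step.} The crux is to produce a configuration with $\mathrm{Re}\,G > 0$ using a $k$-body Pauli term. I would pick a support $X$ with $|X| = k$ and $h_{\balpha} \ne 0$ for some $\balpha$ supported on $X$, and take site $1 \in X$. Then I would send the remaining $k-1 \ge 2$ sites $j \in X \setminus \{1\}$ toward their own singular loci $z_j w_j \to -1$ from inside the disk. In that regime the denominators $\prod_{j}(1 + z_j w_j)$ blow up. The dominant part of $G$ comes from the $X$-supported block of $F_H$, divided by $\prod_{j \in X\setminus\{1\}}(1 + z_j w_j)$.

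Each factor $1/(1 + z_j w_j)$ can be given essentially any phase by approaching $-1$ from different interior directions. With at least two such factors, together with the phases of $z_j$ and $w_j$ themselves, I expect to be able to rotate the phase of the leading term freely. That term is nonzero precisely because $h_{\balpha} \ne 0$, so rotating it makes $\mathrm{Re}\,G > 0$. For $k = 2$ there is only one such factor, and the constraint is exactly the Suzuki-Fisher sign condition; this is why $k \ge 3$ is needed.

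\textbf{Expected obstacle.} I expect the main obstacle to be carrying out this phase-freedom argument rigorously. Two things need care. First, the $k$-body block of $F_H$, restricted to the singular locus, must not vanish identically; I would verify this by computing the generating polynomials of $\sigma_x, \sigma_y, \sigma_z$ on one qubit, which are linearly independent multiaffine functions of $(z_i, w_i)$. Second, the lower-order terms must stay dominated while the Rouché radius and $\beta$ are chosen compatibly and uniformly in $U$.
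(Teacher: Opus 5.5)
Your route differs from the paper's and can be completed. However, the step you call the crux rests on a false claim, and the zero you produce is not yet in the open polydisk.

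\textbf{Comparison with the paper.} The paper places all $k$ sites of a support on the singular locus at once, at $z_j=e^{i\theta_j}$, $w_j=-e^{-i\theta_j}$ for $j\le k$ and zeros elsewhere. It then restricts to the line $s\mapsto(sz_\theta,sw_\theta)$. There $F_I=(1-s^2)^k$ has a $k$-fold zero at $s=1$. The perturbation $-\beta C$, with $C=F_{H(U)}(z_\theta,w_\theta)\neq0$, splits this zero into roots near $1-\beta^{1/k}c_m/2$. The condition $k\ge3$ enters because some $k$th root $c_m$ of $C$ has positive real part. This costs the $\beta^{1/k}$ scaling and a chain of Rouch\'e comparisons.

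You instead keep a simple zero, since only $1+z_1w_1$ vanishes, so it moves linearly in $\beta$. You trade the $k$-fold degeneracy for the amplification $\prod_{j\in X\setminus\{1\}}(1+z_jw_j)^{-1}$ coming from nearly singular sites. Your leading coefficient is the same quantity $C$, namely the $X$-block on the product of singular loci. Its nonvanishing for some angles, uniformly in $U$, is exactly Lemma~\ref{lem:boundary-root}.

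Your version gives a clean first-order criterion $\Re G>0$. It also gives easier uniformity in $U$: the denominator $\prod_{i\neq1}(1+z_iw_i)$ does not depend on $U$, and $|F_{M_\beta}|$ is bounded by a $U$-independent multiple of $\|M_\beta\|$. The price is a two-scale choice (configuration first, then $\beta$), whose threshold degrades as the configuration nears the singular locus.

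\textbf{First repair: the phase argument.} Each factor $1/(1+z_jw_j)$ cannot be given essentially any phase. Since $|z_jw_j|<1$, we have $\Re(1+z_jw_j)>0$, so each factor lies in the open right half-plane. If one factor could take any phase, your argument would also rule out $k=2$, for example $H=-\sigma_z^1\sigma_z^2$, contradicting the Suzuki--Fisher theorem.

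The correct mechanism is as follows. Take $z_j=(1-t)e^{i\phi_j}$ and $w_j=-(1-t)e^{-i\phi_j}e^{i\kappa_j t}$. Then $1+z_jw_j=(2-i\kappa_j)t+O(t^2)$, so each factor attains every phase in $(-\pi/2,\pi/2)$ with modulus of order $1/t$. With $k-1\ge2$ factors, the product reaches every phase in $(-\pi,\pi)$. That is enough to rotate any $C\neq0$ into the open right half-plane, and this is precisely where $k\ge3$ enters. Varying the phases of $z_j,w_j$ is not an independent rotation, because it changes the angles $\phi_j$ and hence $C$ itself.

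You also need the lower-order terms controlled. Set the sites outside $X$ to $0$ and take all $|1+z_jw_j|\asymp t$. Then the $X$-block is of order $t^{-(k-1)}$. Every other contribution is $O(t^{-(k-2)})$: any other support containing site $1$ meets $X\setminus\{1\}$ in at most $k-2$ sites, and the numerator errors are $O(t)$. Hence $\Re G>0$ for small $t$.

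\textbf{Second repair: the boundary point.} Your zero has $|z_1|=1$, so it lies on the boundary rather than in $\D_1^n\times\D_1^n$. Your sentence that every coordinate other than $w_1$ is strictly inside conflicts with $z_1=e^{i\theta}$.

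To fix this, note that for fixed $\beta$, $F_\beta$ is affine in $w_1$ with nonzero leading coefficient. So the zero $w_1(z_1)$ depends continuously on $z_1$. Replacing $z_1$ by $(1-\eta)e^{i\theta}$ with $\eta=\eta(\beta)$ small keeps $|w_1|<1$. The parameter $\eta$ must be chosen after $\beta$: for fixed $|z_1|<1$, the zero tends to $-1/z_1$, which is outside the disk, as $\beta\to0$.

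With these repairs and the compactness argument you sketch, the proof goes through.
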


To prove this result, we carry out a perturbative analysis of the generating polynomial of the identity matrix. Recall that
\begin{align*}
    F_I(z, w) = \prod_{i=1}^n (1 + z_i w_i),
\end{align*}
which admits roots on the boundary of the unit polydisk. Moving towards this analysis, observe that for any unitary operator $U$, we have that $U^* \exp(-\beta H) U = \exp(-\beta U^* H U)$. To simplify notation, let $H(U) := U^* H U$, the image of the $k$-local Hamiltonian $H$ under conjugation by a local unitary $U$. 
Our general strategy proceeds by defining
\begin{align}
    M_\beta(U) := \exp(-\beta H(U)) - I + \beta H(U)
\end{align}
and writing
\begin{align*}
    U^* \exp(-\beta H) U = \exp(-\beta H(U)) = I - \beta H(U) + M_\beta(U).
\end{align*}
At the level of generating polynomials,
\begin{align*}
    F_{U^* \exp(-\beta H) U}(z, w) = F_I(z, w) - \beta F_{H(U)}(z, w) + F_{M_\beta}(z, w).
\end{align*}

As noted above, the first term has roots of multiplicity $k$ on the boundary of the unit polydisk. For sufficiently small $\beta$, the latter two terms perturb these roots symmetrically, ensuring that at least one enters the interior of the polydisk, as depicted in Figure~\ref{fig:root-shattering}.

\begin{figure}[h]
    \centering
    \includegraphics[width=0.8\linewidth]{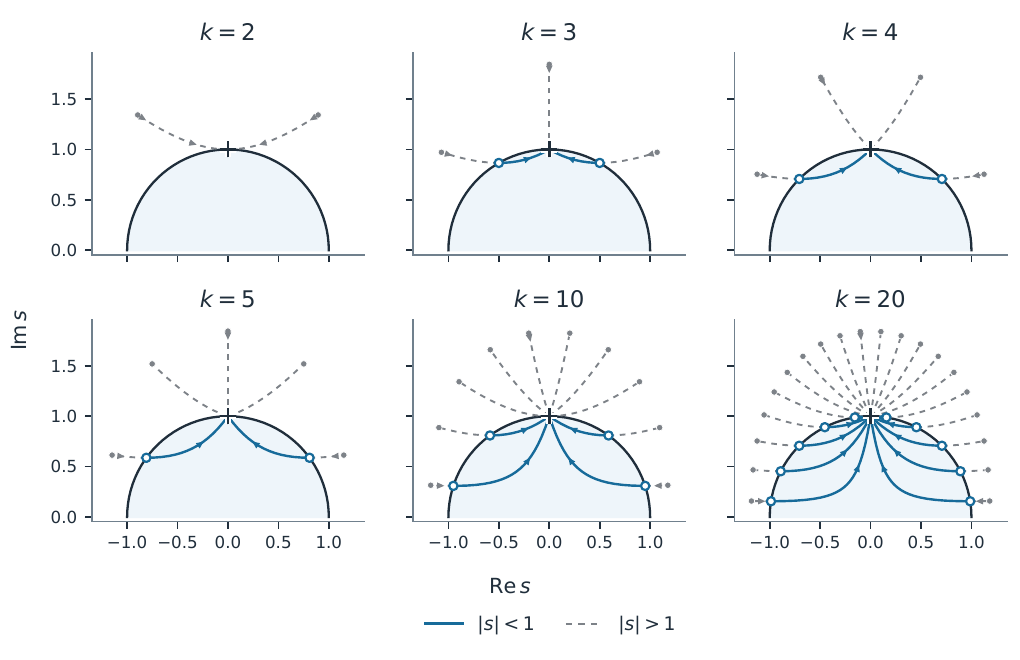}
    \caption{The roots of the univariate polynomial $F_{\exp(-\beta H)}(s, \ldots, s)$ for a $k$-local Hamiltonian of the form $H = I - \ket{\psi} \bra{\psi}$, where $\ket{\psi} = \ket{0}^{\otimes k}$. For $k = 2$, the resulting Hamiltonian is Suzuki-Fisher and hence its roots avoid the interior of the unit disk, but for all $k \geq 3$, roots shatter symmetrically from $s = i$ for sufficiently small $\beta > 0$, ensuring that at least one enters the interior of the disk.}
    \label{fig:root-shattering}
\end{figure}

Our perturbative analysis essentially proceeds by reducing to the study of the roots of a univariate polynomial of the form
\begin{align}
\label{eq:toy-example}
    f_\beta(s) = [2(1 - s)]^k - \beta C,
\end{align}
where $C \neq 0$ and $\beta \geq 0$. For $\beta > 0$, the $k$ complex roots of this univariate polynomial may be enumerated as
\begin{align*}
    r_m = 1 - \frac{\beta^{1/k} c_m}{2}, \quad m = 1, \ldots, k,
\end{align*}
where $c_1, \ldots, c_k$ are the $k$th roots of $C$. For $k \geq 3$, the fact that the arguments of these $k$ roots are equispaced on the unit circle ensures that, for some $m \in [k]$, $\Re(c_m) > 0$. Up to relabeling, we may assume that $\Re(c_1) > 0$. In turn, for sufficiently small $\beta > 0$, we have that
\begin{align*}
    |r_1|^2 = 1 - \beta^{1/k} \Re(c_1) + \frac{\beta^{2/k} |c_1|^2}{4}.
\end{align*}
As soon as
\begin{align}
\label{eq:beta-threshold}
    \beta < \left(\frac{\Re(c_1)}{4 |c_1|^2}\right)^k,
\end{align}
it therefore follows that $|r_1| < 1$. In other words, for sufficiently small $\beta$, at least one root of $f_\beta(s)$ must lie inside the unit disk. Crucially, we used that $\Re(c_1) > 0$ above to give a threshold below which a zero enters the disk. Below, we use a classical tool from complex analysis, Rouché's theorem, to compare the roots of an appropriate univariate specialization of the generating polynomial of a $k$-local Hamiltonian to a univariate polynomial like the one in Equation~\eqref{eq:toy-example}.

We assume, without loss of generality, that given a genuinely $k$-local Hamiltonian $H$, there is a local term $P_{\balpha}$ acting non-trivially on the qubits at sites $\{1, \ldots, k\}$, where we have enumerated the vertices $V = \{1, \ldots, n\}$ of the underlying interaction graph. Given this convention, we can always select a point on the boundary of the unit polydisk for which $F_I(z, w) = 0$ but $F_{H(U)}(z, w)$ is nonzero. Selecting such a point, which depends on the Hamiltonian $H$ and the local unitary $U$ in question, will give us access to the nonzero constant $C \neq 0$ that appears in Equation~\eqref{eq:toy-example}. The proof, via basic Fourier analysis, can be found in Appendix~\ref{app:proofs}.

\begin{lemma}
\label{lem:boundary-root}
For any genuinely $k$-local Hamiltonian $H$ and any local unitary operator $U$, there exists a point of the form $(z_\theta(U), w_\theta(U)) = (e^{i \theta_1(U)}, \ldots, e^{i \theta_k(U)}, 0, \ldots, 0, -e^{-i \theta_1(U)}, \ldots, -e^{-i \theta_k(U)}, 0, \ldots, 0)$ such that $F_{H(U)}(z_\theta(U), w_\theta(U)) \neq 0$. Here, $\theta(U) = (\theta_1(U), \ldots, \theta_k(U)) \in [0, 2\pi)^k$ depends on $U$.
\end{lemma}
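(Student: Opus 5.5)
The plan is to turn $F_{H(U)}$ at points of the stated form into a trigonometric polynomial in $\theta = (\theta_1, \ldots, \theta_k)$. I will show that its Fourier coefficients are exactly the Pauli coefficients of $H(U)$ supported on $\{1, \ldots, k\}$. A nonzero trigonometric polynomial cannot vanish identically on $[0,2\pi)^k$, which gives the required $\theta(U)$.

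\emph{Step 1: rewrite the evaluation as a trace.} Since $F_A(z,w) = \langle v(z)|A|v(w^*)\rangle$, at a point with $z_j = e^{i\theta_j}$, $w_j = -e^{-i\theta_j}$ for $j \le k$ and $z_j = w_j = 0$ for $j > k$, each site contributes a rank-one factor. Sites $j \le k$ contribute $B_{\theta_j} := (\ket{0} - e^{-i\theta_j}\ket{1})(\bra{0} + e^{i\theta_j}\bra{1})$, and sites $j > k$ contribute $\ket{0}\bra{0}$. Hence
\begin{align*}
F_{H(U)}(z_\theta, w_\theta) = \Tr\Big(H(U)\, \big(B_{\theta_1} \otimes \cdots \otimes B_{\theta_k} \otimes \ket{0}\bra{0}^{\otimes (n-k)}\big)\Big).
\end{align*}
Expanding gives $B_\theta = \sigma_z + e^{i\theta}\ket{0}\bra{1} - e^{-i\theta}\ket{1}\bra{0}$. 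Its Fourier coefficients at frequencies $0, 1, -1$ are $\sigma_z$, $\ket{0}\bra{1}$ and $-\ket{1}\bra{0}$, and these form a basis of the traceless $2\times 2$ matrices.

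\emph{Step 2: identify the Fourier coefficients.} Expand $H(U) = \sum_{\balpha} h_{\balpha}(U) P_{\balpha}$ in the Pauli basis. Since $\Tr(\sigma_\alpha B_\theta) = 0$ for $\alpha = I$ (as $B_\theta$ is traceless), only Paulis that are nontrivial on every site $1, \ldots, k$ contribute. On sites $j > k$, the factor $\bra{0}\sigma_{\alpha_j}\ket{0}$ is $1$ for $\alpha_j \in \{I, z\}$ and $0$ for $\alpha_j \in \{x, y\}$.

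The key point is that $k$-locality is preserved by local unitaries. Conjugation by $U_j$ maps the traceless span $\{\sigma_x, \sigma_y, \sigma_z\}$ to itself and fixes $I$. Therefore $U^*P_{\balpha}U$ is a combination of Paulis with support exactly $\supp(\balpha)$. So $H(U)$ is still $k$-local: every support has size at most $k$. Any Pauli that is nontrivial on all of $\{1, \ldots, k\}$ therefore has support exactly $\{1, \ldots, k\}$, and the sites $j > k$ contribute trivially with no possible cancellation. Consequently the trigonometric polynomial $\theta \mapsto F_{H(U)}(z_\theta, w_\theta)$ is identically zero iff the component $H_{\{1..k\}}(U) := \sum_{\supp(\balpha) = \{1..k\}} h_{\balpha}(U) P_{\balpha}$ vanishes. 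This follows because the functions $\theta \mapsto \Tr(Q\, B_{\theta_1} \otimes \cdots \otimes B_{\theta_k})$, for $Q$ ranging over $\{1, \ldots, k\}$-supported traceless tensors, have distinct Fourier modes that detect every coordinate of $Q$ in the basis $\{\sigma_z, \ket{0}\bra{1}, \ket{1}\bra{0}\}^{\otimes k}$.

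\emph{Step 3: show $H_{\{1..k\}}(U) \neq 0$.} The map $A \mapsto U^*AU$ restricted to the span of Paulis with support exactly $\{1, \ldots, k\}$ is an invertible linear map of that span, with inverse given by conjugation by $U^*$. The component $H_{\{1..k\}}$ of $H$ is nonzero by the normalization that $h_{\balpha} \ne 0$ for some $\balpha$ supported on $\{1, \ldots, k\}$. Hence $H_{\{1..k\}}(U) = U^* H_{\{1..k\}} U \neq 0$. The trigonometric polynomial is then nonzero, so some $\theta(U) \in [0,2\pi)^k$ makes $F_{H(U)}(z_\theta, w_\theta) \ne 0$.

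I expect the main obstacle to be Step 2: ensuring that contributions from Paulis with $\sigma_z$ or $I$ on sites outside $\{1, \ldots, k\}$ cannot cancel the support-$\{1, \ldots, k\}$ terms. This is exactly where the assumption that the support size $k$ is maximal, combined with support-preservation under local unitaries, is essential.
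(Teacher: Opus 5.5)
Your proof is correct and follows essentially the same route as the paper. Both arguments reduce to the Pauli component of $H(U)$ supported exactly on $\{1,\ldots,k\}$, using $k$-locality and the fact that local conjugation preserves supports, and then use linear independence of the resulting trigonometric functions. The paper phrases that step as $L^2$-orthogonality of $2i\sin\theta$, $2i\cos\theta$, $2$ together with Hilbert--Schmidt invariance, whereas you use distinct Fourier modes in the basis $\{\sigma_z, |0\rangle\langle 1|, |1\rangle\langle 0|\}$ and invertibility of conjugation. Your explicit check that $\sigma_z$ factors on sites outside $\{1,\ldots,k\}$ cannot contribute makes precise a point the paper leaves implicit.
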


We now fix a local unitary $U$ and suppress the dependence of all quantities on $U$. In particular, we fix the point $(z_\theta, w_\theta) = (z_\theta(U), w_\theta(U))$ furnished by Lemma~\ref{lem:boundary-root} and simply write $H$ in place of $H(U)$. To proceed, we expand the operator exponential into its linearization plus an error term of the form $M_\beta := (\exp(-\beta H) - I + \beta H)$, as described above:
\begin{align}
\label{eq:exp-decomp}
\exp(-\beta H) = I - \beta H + M_\beta.
\end{align}
A coarse upper bound on $M_\beta$, in operator norm, can be given as follows:
\begin{align*}
    \|M_\beta\| = \left\|\sum_{m=2}^\infty \frac{(-\beta H)^m}{m!}\right\| \leq \frac{\beta^2 \|H\|^2}{2} \exp(\beta \|H\|).
\end{align*}
Hence, we anticipate that, even at the level of generating polynomials, the error term in Equation~\eqref{eq:exp-decomp} decays as $O(\beta^2)$ so if $F_{I - \beta H}(z, w)$ has a root inside the unit polydisk, then so does $F_{\exp(-\beta H)}(z, w)$. With this intuition in mind, we narrow our focus to the study of the roots of
\begin{align*}
    P_\beta(s) := F_{I - \beta H}(s z_\theta, s w_\theta) &= F_I(s z_\theta, s w_\theta) - \beta F_H(s z_\theta, s w_\theta) \\
    &= (1 - s^2)^k - \beta F_H(s z_\theta, s w_\theta).
\end{align*}
To complete our reduction to the simple example presented in Equation~\eqref{eq:toy-example}, note that near $s = 1$, we anticipate that $(1 - s^2)^k \approx [2(1 - s)]^k$ while $F_H(s z_\theta, s w_\theta) \neq 0$. Hence, we may instead study the roots of
\begin{align*}
    Q_\beta(s) := [2(1 - s)]^k - \beta C(H),
\end{align*}
where $C(H) := F_H(z_\theta, w_\theta) \neq 0$ from Lemma~\ref{lem:boundary-root}. We already know that $Q_\beta(s)$ admits a root inside the unit disk for sufficiently small $\beta$. We now apply Rouché's theorem, recalled below, twice: first, to show that $P_\beta(s)$ has a root inside the disk by comparison to $Q_\beta(s)$, and second, to show that $F_{\exp(-\beta H)}(s z_\theta, s w_\theta)$ also has a root inside the disk by comparison to $P_\beta(s)$.

\begin{fact}[Rouché's Theorem]
\label{fact:rouche}
Suppose that $f$ and $g$ are holomorphic on an open set containing a closed disk $D$. If $|f(s)| > |g(s)|$ for all $s \in \partial D$, then $f$ and $f + g$ have the same number of zeros in the interior of $D$.
\end{fact}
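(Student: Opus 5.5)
This is classical, and I would prove it with the argument principle together with a homotopy (continuity) argument. Both $f$ and $f+g$ are holomorphic on an open set $\Omega \supseteq D$. Shrinking $\Omega$ if needed, I may take $\Omega$ to be a slightly larger concentric open disk, since a closed disk in an open set has a positive distance to the complement. In particular $\Omega$ is connected.

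\textbf{Step 1: no zeros on the boundary.} The hypothesis $|f(s)| > |g(s)| \geq 0$ on $\partial D$ shows $f$ has no zeros on $\partial D$. For every $t \in [0,1]$, the reverse triangle inequality gives
\begin{align*}
|f(s) + t g(s)| \geq |f(s)| - |g(s)| > 0, \qquad s \in \partial D,
\end{align*}
so $h_t := f + t g$ is also zero-free on $\partial D$. In particular no $h_t$ vanishes identically on $\Omega$. By the identity theorem its zeros are isolated, so only finitely many lie in the compact set $D$, and all of these lie in the interior of $D$.

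\textbf{Step 2: count zeros and show the count is constant.} By the argument principle, the number of zeros of $h_t$ in the interior of $D$, counted with multiplicity, is
\begin{align*}
N(t) = \frac{1}{2\pi i} \oint_{\partial D} \frac{f'(s) + t g'(s)}{f(s) + t g(s)} \, ds .
\end{align*}
On the compact set $\partial D \times [0,1]$, the denominator is continuous and nonvanishing, so it is bounded below by some $\varepsilon > 0$. The numerator is bounded. Hence the integrand is jointly continuous, and therefore uniformly continuous, on $\partial D \times [0,1]$. It follows that $N(t)$ is continuous in $t$.

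A continuous integer-valued function on the connected interval $[0,1]$ is constant. Therefore $N(0) = N(1)$, which says exactly that $f$ and $f+g$ have the same number of zeros in the interior of $D$.

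\textbf{Alternative route.} One can avoid the homotopy by writing $f + g = f\,(1 + g/f)$ on a neighborhood of $\partial D$. The logarithmic derivative of a product is the sum of the logarithmic derivatives, so the zero counts of $f+g$ and $f$ differ by the winding number about $0$ of the curve $1 + g/f$ along $\partial D$. Since $|g/f| < 1$ on $\partial D$, this curve stays in the disk of radius $1$ centered at $1$, which lies in the right half-plane. There a continuous branch of $\log$ exists, so the winding number is zero.

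\textbf{Main obstacle.} The only delicate point is justifying the argument principle itself: that the contour integral counts zeros with multiplicity. This follows from the local factorization $h_t(s) = (s - a)^m u(s)$ with $u(a) \neq 0$ at each zero $a$, together with the residue theorem. Beyond that, one needs the uniform positive lower bound on $|h_t|$ over $\partial D$, which is what makes $N(t)$ continuous and is supplied by the compactness argument in Step 2.
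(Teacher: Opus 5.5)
Your proof is correct. Step 1 shows that no $f + tg$ vanishes on $\partial D$, which also rules out identically vanishing functions on the connected disk $\Omega$. Step 2 then uses joint continuity of the integrand on the compact set $\partial D \times [0,1]$ to make the integer-valued $N(t)$ constant, giving $N(0) = N(1)$. Your alternative via the winding number of $1 + g/f$ is also correct: its image lies in $\{w : |w - 1| < 1\}$, which is contained in the right half-plane.

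The paper gives no proof of this statement. It simply recalls Rouché's theorem as a classical fact before applying it in Lemmas~\ref{lem:linearization-root} and~\ref{lem:specialization-root} and in the proof of Theorem~\ref{th:k-local-obstruction}. In all of these, $f$ and $g$ are polynomials in $s$, so the holomorphy hypothesis is automatic.

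Comparing your own two routes: the winding-number argument is slightly more economical, since it needs no control of parameter dependence. The homotopy argument has the advantage that it extends verbatim to any continuous family that stays zero-free on $\partial D$.
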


Taking $c_1$ to be a $k$th root of $C(H)$ with positive real part and
\begin{align*}
    r_1(\beta) = 1 - \frac{\beta^{1/k} c_1}{2}
\end{align*}
the corresponding root of $Q_\beta(s)$ as above, we know that for $\beta$ as in Equation~\eqref{eq:beta-threshold}, $r_1(\beta) \in \D$. In particular, for $\varepsilon > 0$ sufficiently small (to be determined later), the ball $B_\varepsilon(r_1(\beta))$ of radius $\varepsilon$ about $r_1(\beta)$ is entirely enclosed in the unit disk and contains a single root of $Q_\beta(s)$. Letting $D_\varepsilon(r_1(\beta)) := \overline{B_\varepsilon(r_1(\beta))}$, we can carry out the first comparison, yielding a root of $P_\beta(s)$ inside the unit disk.

\begin{lemma}
\label{lem:linearization-root}
For all sufficiently small $\beta > 0$, the univariate polynomial $P_\beta(s)$ has a single root inside $D_\varepsilon(r_1(\beta)) \subset \D$, where
\begin{align*}
    \varepsilon = \frac{\beta^{1/k}}{4} \min\left\{\Re(c_1), |c_1 - c_2|, \ldots, |c_1 - c_k|\right\}.
\end{align*}
\end{lemma}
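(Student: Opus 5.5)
The plan is to apply Rouché's theorem (Fact~\ref{fact:rouche}) on the closed disk $D := D_\varepsilon(r_1(\beta))$, taking $f = Q_\beta$ and $g = P_\beta - Q_\beta$. We will show $|Q_\beta(s)| > |P_\beta(s) - Q_\beta(s)|$ on $\partial D$. Then $P_\beta$ has the same number of zeros in $D$ as $Q_\beta$, namely exactly one, since $r_1(\beta)$ is the unique root of $Q_\beta$ in $D$.

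We first check that $D \subset \D$ and that it contains only $r_1(\beta)$. The roots of $Q_\beta$ are $r_m = 1 - \beta^{1/k}c_m/2$, so
\begin{align*}
|r_1 - r_m| = \tfrac{\beta^{1/k}}{2}|c_1 - c_m| \geq 2\varepsilon .
\end{align*}
Hence no other root lies in $D$. For the containment, we have $1 - |r_1|^2 = \beta^{1/k}\Re(c_1) - \beta^{2/k}|c_1|^2/4$, so $1 - |r_1| \geq \tfrac{1}{2}\beta^{1/k}\Re(c_1)(1 - o(1))$. This exceeds $\varepsilon \leq \beta^{1/k}\Re(c_1)/4$ once $\beta$ is small.

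\textbf{Lower bound on $|Q_\beta|$ on $\partial D$.} Factor
\begin{align*}
Q_\beta(s) = 2^k\prod_m (r_m - s).
\end{align*}
For $s \in \partial D$ we have $|s - r_1| = \varepsilon$. For $m \neq 1$, $|s - r_m| \geq |r_1 - r_m| - \varepsilon \geq \varepsilon$. Therefore $|Q_\beta(s)| \geq 2^k\varepsilon^k = c\,\beta$, where $c := 2^k 4^{-k}\min\{\Re c_1, |c_1 - c_m|\}^k > 0$ is independent of $\beta$. This is the key point: $|Q_\beta|$ on $\partial D$ is of exact order $\beta$.

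\textbf{Upper bound on $|P_\beta - Q_\beta|$.} Write
\begin{align*}
P_\beta - Q_\beta = \bigl[(1-s^2)^k - (2(1-s))^k\bigr] - \beta\bigl[F_H(sz_\theta, sw_\theta) - C(H)\bigr].
\end{align*}
For $s \in \partial D$ we have $|1 - s| = O(\beta^{1/k})$. Since $1 - s^2 = (1-s)(1+s)$ and $|(1+s) - 2| = |1 - s|$, the first bracket equals $(1-s)^k\bigl[(1+s)^k - 2^k\bigr] = O(\beta \cdot \beta^{1/k})$. The function $s \mapsto F_H(sz_\theta, sw_\theta)$ is a polynomial with coefficients bounded in terms of $H$ (uniformly over local $U$, since $\|H(U)\|_{\mathrm{HS}} = \|H\|_{\mathrm{HS}}$ bounds the coefficients). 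So it is Lipschitz near $s = 1$, and the second bracket is $\beta \cdot O(\beta^{1/k})$. Altogether $|P_\beta - Q_\beta| = O(\beta^{1+1/k}) < c\beta$ for $\beta$ small, and Rouché's theorem applies.

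The main obstacle is keeping the constants honest. The lower bound $c$ depends on $c_1, \dots, c_k$, i.e.\ on $C(H) = F_{H(U)}(z_\theta, w_\theta)$, so the smallness threshold depends on $U$. That is acceptable for this lemma, since $U$ is fixed. Within the lemma, the delicate step is making sure the separation $|s - r_m| \geq \varepsilon$ holds, so that the lower bound is genuinely of order $\beta$. This is exactly why $\varepsilon$ includes the factor $\min_m |c_1 - c_m|/4$. The $k$th roots are distinct because $C(H) \neq 0$, so this minimum is positive.
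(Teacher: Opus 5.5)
Your proposal is correct and follows essentially the same route as the paper. You apply Rouché's theorem on $D_\varepsilon(r_1(\beta))$ with $f = Q_\beta$ and $g = P_\beta - Q_\beta$, use the separation of the roots $r_m(\beta)$ to get a lower bound on $|Q_\beta|$ of order $\beta$ on the boundary, and combine the factorization $(1-s)^k[(1+s)^k - 2^k]$ with Lipschitz continuity of $s \mapsto F_H(sz_\theta, sw_\theta)$ to get $|P_\beta - Q_\beta| = O(\beta^{1+1/k})$. The only differences are cosmetic: you check containment via $|r_1(\beta)| + \varepsilon < 1$ rather than bounding $|s|^2$ pointwise, and you use the cruder bound $|s - r_m(\beta)| \geq \varepsilon$. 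Neither change affects the argument.
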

\begin{proof}
As indicated above, we simply apply Rouché's theorem with $f(s) = Q_\beta(s)$ and $g(s) = P_\beta(s) - Q_\beta(s)$. First, we verify that $D_\varepsilon(r_1(\beta)) \subset \D$ under our choice of $\varepsilon$. Indeed, for all $s \in D_\varepsilon(r_1(\beta))$, we have that
\begin{align*}
    \Re(1 - s) \geq \frac{\beta^{1/k} \Re(c_1)}{2} - \varepsilon \geq \frac{\beta^{1/k} \Re(c_1)}{4},
\end{align*}
while
\begin{align*}
    |1 - s| \leq |s - r_1(\beta)| + |r_1(\beta) - 1| \leq \varepsilon + \frac{\beta^{1/k}|c_1|}{2} \leq \beta^{1/k} |c_1|.
\end{align*}
In particular, we may bound
\begin{align*}
    |s|^2 = 1 - 2\Re(1 - s) + |1 - s|^2 \leq 1 - \frac{\beta^{1/k} \Re(c_1)}{4} +\beta^{2/k} |c_1|^2 < 1
\end{align*}
for all $\beta > 0$ satisfying Equation~\eqref{eq:beta-threshold}. Additionally, our choice of $\varepsilon$ ensures that $Q_\beta(s)$ has no roots other than $r_1(\beta)$ inside $D_\varepsilon(r_1(\beta))$, as
\begin{align*}
    |r_m(\beta) - r_1(\beta)| = \frac{\beta^{1/k}}{2}|c_1 - c_m| > \varepsilon
\end{align*}
for each $m = 2, \ldots, k$.

Now, we may fully factorize $Q_\beta(s)$ in terms of its roots, writing
\begin{align*}
    Q_\beta(s) = (-2)^k \prod_{m=1}^k (s - r_m(\beta)).
\end{align*}
On $\partial D_\varepsilon(r_1(\beta))$, observe that for $m = 2, \ldots, k$
\begin{align*}
    |s - r_m(\beta)| \geq |r_1(\beta) - r_m(\beta)| - \varepsilon \geq \frac{\beta^{1/k} |c_1 - c_m|}{4}.
\end{align*}
Hence, on $\partial D_\varepsilon(r_1(\beta))$,
\begin{align*}
    |Q_\beta(s)| = 2^k \prod_{m=1}^k |s - r_m(\beta)| \geq c \beta,
\end{align*}
where $c > 0$ is a constant that depends only on the roots of $C(H)$ (and, of course, $k$). On the other hand, we have that
\begin{align*}
    P_\beta(s) - Q_\beta(s) = (1 - s)^k ((1 + s)^k - 2^k) - \beta[F_H(s z_\theta, s w_\theta) - F_H(z_\theta, w_\theta)].
\end{align*}
The univariate polynomial $s \mapsto F_H(s z_\theta, s w_\theta)$ is Lipschitz on compact domains (e.g., the closed disk of radius two about the origin), so there exists another constant $L = L(H) > 0$, depending only on $H$, such that
\begin{align*}
    |F_H(s z_\theta, s w_\theta) - F_H(z_\theta, w_\theta)| \leq L|1 - s| \leq L |c_1| \beta^{1/k}.
\end{align*}
Taking $L > 0$ larger if necessary, the same reasoning applied to the map $s \mapsto (1 + s)^k$ reveals that
\begin{align*}
    |P_\beta(s) - Q_\beta(s)| \leq L |1 - s|^{k+1} + \beta^{1 + 1/k}L |c_1| \leq L(|c_1| + |c_1|^{k+1}) \beta^{1 + 1/k}.
\end{align*}
Crucially, $\beta^{1 + 1/k}$ decays faster than $\beta$ as $\beta \to 0$, so as long as
\begin{align*}
    \beta < \left(\frac{c}{L(|c_1| + |c_1|^{k+1})}\right)^k,
\end{align*}
a threshold that depends only on $H$, we have that
\begin{align}
\label{eq:rouche-estimate-1}
|Q_\beta(s)| \geq c \beta > L(|c_1| + |c_1|^{k+1}) \beta^{1 + 1/k} \geq |P_\beta(s) - Q_\beta(s)|
\end{align}
for all $s \in \partial D_\varepsilon(r_1(\beta))$. By Fact~\ref{fact:rouche}, we immediately conclude that $P_\beta(s)$ has precisely one root inside the disk $D_\varepsilon(r_1(\beta)) \subset \D$.
\end{proof}

Having established that $P_\beta(s)$, a univariate specialization of the generating polynomial of $I - \beta H$, has at least one root inside the unit disk (so that $I - \beta H \notin \LY(1)$ for sufficiently small $\beta > 0$), we can carry out a similar strategy to extend this result to $F_{\exp(-\beta H)}(sz_\theta, sw_\theta)$ itself.

\begin{lemma}
\label{lem:specialization-root}
Taking $r_1(\beta)$ and $\varepsilon > 0$ as in Lemma~\ref{lem:linearization-root}, the univariate polynomial
\begin{align*}
    F_{\exp(-\beta H)}(s z_\theta, s w_\theta) = P_\beta(s) + F_{M_\beta}(s z_\theta, s w_\theta)
\end{align*}
has precisely one root inside the disk $D_\varepsilon(r_1(\beta)) \subset \D$ for all sufficiently small $\beta > 0$.
\end{lemma}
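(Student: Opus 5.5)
Apply Rouché's theorem (Fact~\ref{fact:rouche}) once more, now with $f(s) = P_\beta(s)$ and $g(s) = F_{M_\beta}(s z_\theta, s w_\theta)$ on the circle $\partial D_\varepsilon(r_1(\beta))$. Lemma~\ref{lem:linearization-root} says $P_\beta$ has exactly one zero in $D_\varepsilon(r_1(\beta))$. So the conclusion follows once we show $|F_{M_\beta}(s z_\theta, s w_\theta)| < |P_\beta(s)|$ on that circle for all sufficiently small $\beta$.

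First, a lower bound on $|P_\beta|$ on the boundary. Equation~\eqref{eq:rouche-estimate-1} already gives $|Q_\beta(s)| \geq c\beta$ and $|P_\beta(s) - Q_\beta(s)| \leq L(|c_1| + |c_1|^{k+1})\beta^{1+1/k}$ there. By the triangle inequality, $|P_\beta(s)| \geq c\beta - L(|c_1|+|c_1|^{k+1})\beta^{1+1/k} \geq \tfrac{c}{2}\beta$ once $\beta$ is below an explicit threshold depending only on $H$ (and $U$).

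Second, an $O(\beta^2)$ upper bound on the error term. The generating polynomial is linear in the operator: $F_{M_\beta}(z,w) = \langle v(z)|M_\beta|v(w^*)\rangle$. Hence
\begin{align*}
    |F_{M_\beta}(s z_\theta, s w_\theta)| \leq \|v(s z_\theta)\| \, \|M_\beta\| \, \|v((s w_\theta)^*)\|.
\end{align*}
The points in question have only the first $k$ coordinates nonzero, each of modulus $|s| \leq 1$ (indeed $D_\varepsilon(r_1(\beta)) \subset \D$). So each vector norm is at most $\prod_{i}(1+|s|^2)^{1/2} \leq 2^{k/2}$. Combining this with the bound $\|M_\beta\| \leq \tfrac{\beta^2\|H\|^2}{2} e^{\beta\|H\|}$ gives
\begin{align*}
    |F_{M_\beta}(s z_\theta, s w_\theta)| \leq 2^{k-1}\beta^2 \|H\|^2 e^{\beta\|H\|}.
\end{align*}
Here $\|H(U)\| = \|H\|$ because $U$ is unitary, so this bound is uniform in $U$.

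Comparing the two bounds, the error is $O(\beta^2)$ while $|P_\beta| \gtrsim \beta$. The strict inequality therefore holds for $\beta$ small enough, and Rouché's theorem yields exactly one zero of $F_{\exp(-\beta H)}(s z_\theta, s w_\theta)$ in $D_\varepsilon(r_1(\beta))$.

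The only delicate point is bookkeeping. The constant $c$ in $|Q_\beta| \geq c\beta$ must really be $\beta$-independent, and it is: the distances $|s - r_m(\beta)|$ are all of order $\beta^{1/k}$, with $\varepsilon$ proportional to $\beta^{1/k}$, so the product of $k$ of them scales like $\beta$. All thresholds on $\beta$ then combine into a single one by taking a minimum. The final $\beta(H)$ in Theorem~\ref{th:k-local-obstruction}, uniform over $U$, would need a compactness argument over $\theta(U)$ and $C(H(U))$, but that is not needed for this lemma.
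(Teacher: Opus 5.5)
Your proposal is correct and follows the paper's proof essentially verbatim. You apply Rouch\'e with $f = P_\beta$ and $g = F_{M_\beta}(s z_\theta, s w_\theta)$, inherit the lower bound $|P_\beta(s)| \geq c\beta - L(|c_1| + |c_1|^{k+1})\beta^{1+1/k}$ from Lemma~\ref{lem:linearization-root}, and derive the $O(\beta^2)$ bound $|g(s)| \leq 2^k \|M_\beta\|$ by the same Cauchy--Schwarz argument as Lemma~\ref{lem:gen-poly-bound}; your remark that $\|H(U)\| = \|H\|$ also makes explicit a uniformity in $U$ that the paper uses only implicitly.
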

\begin{proof}
We again apply Fact~\ref{fact:rouche}, this time taking $f(s) = P_\beta(s)$ and $g(s) = F_{M_\beta}(s z_\theta, s w_\theta)$. Now, a straightforward computation (see Appendix~\ref{app:computations}) shows that for all $s \in \partial D_\varepsilon(r_1(\beta)) \subset \D$
\begin{align*}
    |g(s)| \leq 2^k \|M_\beta\| \leq (2^{k - 1} \|H\|^2 \exp(\beta \|H\|)) \beta^2.
\end{align*}
If $\beta < \|H\|^{-1}$, we can remove the dependence of the coefficient on $\beta$ and bound $|g(s)| < 2^{k+1} \|H\|^{2} \beta^2$ for $s \in \partial D_\varepsilon(r_1(\beta))$.

Using the computations from the proof of Lemma~\ref{lem:linearization-root}, we can also give a lower bound on $P_\beta(s)$ on $\partial D_\varepsilon(r_1(\beta))$ by noting that
\begin{align*}
    |P_\beta(s)| \geq |Q_\beta(s)| - |P_\beta(s) - Q_\beta(s)| \geq c\beta - L(|c_1| + |c_1|^{k+1}) \beta^{1 + 1/k},
\end{align*}
with $c, L > 0$ constants that depend only on $H$. Because the upper bound on $|g(s)|$ decays as $O(\beta^2)$, while the lower bound on $|f(s)|$ decays as $\Omega(\beta)$, it is clear that for sufficiently small $\beta > 0$, the conditions of Rouché's theorem will be met. More explicitly, a tedious algebraic manipulation shows that if we further take
\begin{align*}
    \beta < \min\left\{\|H\|^{-1}, \left(\frac{c}{2 L(|c_1| + |c_1|^{k+1})}\right)^k, \frac{c}{2^{k+2} \|H\|^2}\right\},
\end{align*}
then $|P_\beta(s)| > |F_{M_\beta}(s z_\theta, s w_\theta)|$ for all $s \in \partial D_\varepsilon(r_1(\beta))$. In turn, we may apply Fact~\ref{fact:rouche} to conclude that 
\begin{align*}
    F_{\exp(-\beta H)}(s z_\theta, s w_\theta) = P_\beta(s) + F_{M_\beta}(s z_\theta, s w_\theta)
\end{align*}
has precisely one zero in the disk $D_\varepsilon(r_1(\beta)) \subset \D$.
\end{proof}

As indicated in the statement of Theorem~\ref{th:k-local-obstruction}, the inverse temperature threshold $\beta(H)$ below which $U^* \exp(-\beta H) U \notin \LY(1)$ is \textit{uniform} across all local unitaries $U$. We are now equipped to prove said theorem by a compactness argument.

\begin{proof}[Proof of Theorem~\ref{th:k-local-obstruction}]
The set of local unitaries $\sL := \{U_1 \otimes \ldots \otimes U_n : U_i \in U(2)\}$ is a closed subset of $U(2^n)$ and hence compact. As a result, we proceed by covering $\sL$ by a finite collection of neighborhoods and showing that on each neighborhood, we have uniform control over the threshold afforded by Lemma~\ref{lem:specialization-root}.

To this end, fix a unitary $U \in \sL$ and choose $(z_\theta, w_\theta) = (z_\theta(U), w_\theta(U))$ according to Lemma~\ref{lem:boundary-root} so that $F_{H(U)}(z_\theta(U), w_\theta(U)) \neq 0$. Additionally, let $\varepsilon = \varepsilon(U) > 0$ be the radius and $\beta(U) > 0$ the inverse temperature threshold from Lemma~\ref{lem:specialization-root}, both depending on $U$. 

If $U' \in \sL$ is another local unitary, then we have that
\begin{align*}
    \|\exp(-\beta H(U)) - \exp(-\beta H(U'))\| \leq 2 \beta \|H\| \exp(\beta \|H\|) \|U - U'\|.
\end{align*}
In turn, for all $|s| \leq 1$, we show in Appendix~\ref{app:computations} that
\begin{align*}
    |F_{\exp(-\beta H(U))}(s z_\theta, s w_\theta) - F_{\exp(-\beta H(U'))}(s z_\theta, s w_\theta)| &\leq 2^{k} \|\exp(-\beta H(U)) - \exp(-\beta H(U'))\| \\
    &\leq 2^{k+1} \beta \|H\| \exp(\beta \|H\|) \|U - U'\|.
\end{align*}
On the other hand, our estimate from Lemma~\ref{lem:specialization-root} shows that if $\beta < \beta(U)$, then for $s \in \partial D_\varepsilon(r_1(\beta))$,
\begin{align*}
    |F_{\exp(-\beta H(U))}(s z_\theta, s w_\theta)| \geq c(U) \beta - c'(U) \beta^{1 + 1/k} > \frac{c(U)}{2}\beta
\end{align*}
where $c(U), c'(U) > 0$ are constants depending on $U$. If necessary, we may decrease $\beta(U)$ so that this bound holds for all $\beta < \beta(U)$. Hence, as long as 
\begin{align}
\label{eq:neighborhood-radius}
    \|U - U'\| < \frac{c(U)}{2^{k+4}\|H\|},
\end{align}
we can ensure that
\begin{align*}
    |F_{\exp(-\beta H(U))}(s z_\theta, s w_\theta) - F_{\exp(-\beta H(U'))}(s z_\theta, s w_\theta)| \leq \frac{c(U)}{2} \beta < |F_{\exp(-\beta H(U))}(s z_\theta, s w_\theta)|
\end{align*}
for $s \in \partial D_\varepsilon(r_1(\beta))$. Applying Fact~\ref{fact:rouche} yet again with $f(s) = F_{\exp(-\beta H(U))}(s z_\theta, s w_\theta)$ and 
\begin{align*}
    g(s) = F_{\exp(-\beta H(U'))}(s z_\theta, s w_\theta) - F_{\exp(-\beta H(U))}(s z_\theta, s w_\theta)
\end{align*}
then shows that $F_{\exp(-\beta H(U'))}(s z_\theta, s w_\theta)$ has a root inside $D_\varepsilon(r_1(\beta)) \subset \D$. 

In less opaque language, we have shown that for all $U' \in \sL$ in the neighborhood of $U$ specified by Equation~\eqref{eq:neighborhood-radius}, we have that $(U')^* \exp(-\beta H) U' \notin \LY(1)$ for all $\beta \in (0, \beta(U))$. As noted above, $\sL$ is compact. Thus, covering $\sL$ by a finite collection of such neighborhoods as in Equation~\eqref{eq:neighborhood-radius}, each centered at a local unitary $U^{(1)}, \ldots, U^{(N)}$, we may then take
\begin{align*}
    \beta(H) := \min_{i = 1, \ldots, N} \beta(U^{(i)}),
\end{align*}
completing the proof.
\end{proof}

\subsection{Lee-Yangness at All Temperatures and Suzuki-Fisher Hamiltonians}
\label{sec:ly-to-sf}
When this paper was nearing completion, the author noticed the work of \cite{wei2026}, a wide-ranging preprint that recovers many of the results above, albeit using quite different techniques. Part of \cite[Theorem 5.2]{wei2026} stands out in particular: if $\exp(-\beta H) \in \LY(1)$ for all $\beta$, then $H$ is a Suzuki-Fisher Hamiltonian. This apparent converse of \cite[Theorem 2]{Suzuki1971} reveals that the local-to-global principle in Section~\ref{sec:local-to-global} and the method in \cite[Theorem 1]{bravyi2026} are \textit{only} applicable to Suzuki-Fisher Hamiltonians. Although \cite[Theorem 5.2]{wei2026} does not rule out Lee-Yangness at low temperatures (e.g., as $\beta \to \infty$), the framework is fundamentally limited to the Suzuki-Fisher class when it comes to obtaining spectral gaps via passing to the $\beta \to 0$ limit, either via Trotterization as in our work or by adding an external field as in \cite{bravyi2026}.

Upon reading about this surprising result in \cite{wei2026}, we realized that the perturbative analysis in Section~\ref{sec:k-local} can provide a simple proof of the following fact.

\begin{theorem}
\label{th:lee-yang-converse}
A Hamiltonian $H$ satisfies $\exp(-\beta H) \in \LY(1)$ for all $\beta > 0$ if and only if $H$ is a Suzuki-Fisher Hamiltonian in the sense that, in the Pauli basis,
\begin{align*}
    H = cI -\sum_{\{i,j\} \in E} (J_{ij}^{zz} \sigma_{z}^i \sigma_z^j + J_{ij}^{xx}  \sigma_{x}^i \sigma_x^j + J_{ij}^{yy}  \sigma_{y}^i \sigma_y^j + J_{ij}^{xy}  \sigma_{x}^i \sigma_y^j + J_{ij}^{yx}  \sigma_{y}^i \sigma_x^j) - \sum_{i \in V} (\mu_i^x \sigma_x^i + \mu_i^y \sigma_y^i + \mu_i^z \sigma_z^i),
\end{align*} 
with $c \in \R$, $\mu_i^z \geq 0$ for all $i \in V$, and
\begin{align}
\label{eq:sf-couplings}
    J_{ij}^{zz} \geq \left\|\begin{pmatrix}
    J_{ij}^{xx} & J_{ij}^{xy} \\ J_{ij}^{yx} & J_{ij}^{yy}
    \end{pmatrix}\right\|
\end{align}
for all $\{i, j\} \in E$.
\end{theorem}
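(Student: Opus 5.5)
The "if" direction is the classical Suzuki--Fisher theorem \cite[Theorem 2]{Suzuki1971}, as reformulated in \cite{bravyi2026}. Adding $cI$ only rescales $\exp(-\beta H)$ by $e^{-\beta c}>0$, so it leaves the generating polynomial's zero set unchanged. I would therefore cite this direction directly. If a self-contained argument is wanted, the route is the Trotterization of Section~\ref{sec:local-to-global}. One checks that each two-site term $\exp(-\frac{\beta}{M}h_{ij})$ and each one-site term $\exp(\frac{\beta}{M}\mu_i\cdot\sigma^i)$ lies in $\LY(1)$ under \eqref{eq:sf-couplings} and $\mu_i^z\ge 0$. Proposition~\ref{prop:ly-contraction} then gives closure under products, and closedness of $\LY(1)\cup\{0\}$ lets us pass to the limit (the limit is invertible, hence nonzero).

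The real content is the converse, which I would prove by the perturbative method of Section~\ref{sec:k-local}, with $U=I$ fixed so that no compactness is needed. Expand $H$ in the Pauli basis \eqref{eq:pauli-form}, drop the $cI$ term, and suppose $\exp(-\beta H)\in\LY(1)$ for all $\beta>0$. The argument then has three steps.

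\textbf{Step 1 (locality).} Theorem~\ref{th:k-local-obstruction} with $U=I$ already forces every Pauli coefficient with $|\supp(\balpha)|\ge 3$ to vanish; otherwise $H$ would be genuinely $k$-local for some $k\ge 3$. So $H$ is $2$-local.

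\textbf{Step 2 (one-site terms).} Restrict to a boundary point of the form in Lemma~\ref{lem:boundary-root} supported on a single site $i$, namely $z_i=e^{i\theta}$, $w_i=-e^{-i\theta}$, with all other variables set to $0$. Then
\[
F_{\exp(-\beta H)}(sz,sw)=(1-s^2)-\beta C_i(\theta)+O(\beta^2),
\]
where $C_i(\theta)=F_H(z_\theta,w_\theta)$. Only terms with $\alpha_i\in\{I,z\}$ contribute at this point, since $x$ and $y$ components need one in-index and one out-index set to $1$, which the zeros in other slots and the single-site structure control. The $k=1$ analogue of the Rouché argument is Lemma~\ref{lem:linearization-root} and Lemma~\ref{lem:specialization-root} with $k=1$. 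It gives a root near $r_1\approx 1-\beta C/2$, which enters $\D$ whenever $\Re C_i(\theta)>0$. Since the root is simple, Rouché still applies.

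Computing $C_i(\theta)$ from the diagonal-and-off-diagonal Pauli structure should give a trigonometric polynomial in $\theta$. I expect its constant part to be proportional to $\mu_i^z$ plus contributions from the $zz$ couplings evaluated at $z_j=0$. Choosing $\theta$ to maximize the real part then yields $\mu_i^z\ge 0$. Since other sites are set to $0$, the neighbors contribute through $\sigma_z^j$ evaluated at $|0\rangle$, so I may need to also vary the other sites between $0$ and $\infty$ (using the inversion symmetry $z\mapsto 1/z$, i.e.\ a flip on site $j$) to isolate $\mu_i^z$ itself.

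\textbf{Step 3 (two-site terms).} Use two-site boundary points, with sites $i,j$ on the torus and the others at $0$. There $F_I$ vanishes to order $2$ and we get the toy model $[2(1-s)]^2-\beta C_{ij}(\theta_i,\theta_j)$. For $k=2$ the square roots $\pm c$ always include one with $\Re c\ge 0$. So the obstruction fires exactly when $C_{ij}$ is not a nonpositive real. Hence Lee-Yangness forces $C_{ij}(\theta)\in(-\infty,0]$ for all $\theta$; this is not an inequality but a sign/reality constraint. Expanding $C_{ij}$ in Fourier modes $e^{\pm i\theta_i},e^{\pm i\theta_j}$, reality for all $\theta$ should kill the odd-degree mixed terms: the $xz$, $zx$, $yz$ and $zy$ couplings, and the one-site $x,y$ pieces must cancel appropriately with them. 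I expect non-positivity of $C_{ij}$ to amount to $J^{zz}_{ij}\ge|\,\cdot\,|$ of a combination like $(J^{xx}\pm J^{yy})+i(J^{xy}\mp J^{yx})$ along the phases $e^{i(\theta_i\pm\theta_j)}$. Those two moduli are exactly the singular values of the $2\times 2$ block, giving \eqref{eq:sf-couplings}.

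The main obstacle is the $k=2$ boundary case. When $C_{ij}(\theta)$ is exactly real negative somewhere, first-order perturbation is inconclusive, so I only extract necessary conditions at first order and must make sure they are exactly \eqref{eq:sf-couplings}, with no leftover terms needing second-order analysis. The delicate point is whether the $xz$-type couplings and the $\mu^x,\mu^y$ fields are genuinely excluded or merely must combine. Definition~\ref{def:suzuki-fisher} allows arbitrary $\mu^x,\mu^y$, so I must check that their contributions to $C_{ij}$ are absent at these boundary points. The $xz$-type couplings, however, must be forced to zero, and I would verify this via the Fourier bookkeeping in Appendix~\ref{app:proofs}.
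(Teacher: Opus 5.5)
Your Step~1, Step~3 and the ``if'' direction follow the paper's route. Theorem~\ref{th:k-local-obstruction} forces $2$-locality. At two-site torus points only the $\sigma^i_\alpha\sigma^j_\gamma$ couplings survive, so Lemma~\ref{lem:2-local-perturbative} forces $F_H\in(-\infty,0]$ there.

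The genuine gap is Step~2, which does not prove $\mu_i^z\ge 0$. Put site $i$ at $(e^{i\theta},-e^{-i\theta})$ and every other site at $0$. The surviving Pauli strings are those with $\alpha_i\in\{x,y,z\}$ and $\alpha_j\in\{I,z\}$ for $j\neq i$; the restriction to $\{I,z\}$ applies to the zeroed sites, not to site $i$. One then finds
\[
\Re C_i(\theta)=-2\Bigl(\mu_i^z+\sum_{j:\{i,j\}\in E}J_{ij}^{zz}\Bigr).
\]
The imaginary part comes from $\mu_i^x,\mu_i^y$ and the $\sigma^i_{x,y}\sigma^j_z$ couplings. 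So the $k=1$ Rouch\'e argument only gives $\mu_i^z+\sum_j J_{ij}^{zz}\ge 0$. This is the effective field on site $i$ with its neighbours pinned to $|0\rangle$. Since $J_{ij}^{zz}\ge 0$, it is strictly weaker than what you need.

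The repair you float does not work. The map $z_j\mapsto 1/z_j$ (conjugation by $\sigma_x^j$) is not a symmetry of $\LY(1)$. Putting site $j$ ``at $\infty$'' evaluates outside the polydisk, where nothing is assumed. Used anyway, it would give $\mu_i^z\ge\sum_j J_{ij}^{zz}$, which is false. For example, $H=-J\sigma_z^1\sigma_z^2$ with $J>0$ is Suzuki--Fisher, so $e^{-\beta H}\in\LY(1)$ for all $\beta$, yet $\mu^z=0$.

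The missing idea is a way to remove the $J^{zz}$ contributions while staying inside the polydisk. The paper takes the normalized partial trace over $V\setminus\{i\}$. This is an Asano contraction, so it preserves $\LY(1)$, and it is nonzero because $e^{-\beta H}\succ 0$. It annihilates every term with a non-identity Pauli off site $i$, leaving $I-\beta(cI-\mu_i\cdot\sigma)+O(\beta^2)$. The single-qubit fact that a positive definite $A\in\LY(1)$ has $a_{00}\ge a_{11}$ then gives $2\beta\mu_i^z+O(\beta^2)\ge 0$.

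Within your own framework you can get the same effect differently. Fix the other sites at the interior point $(\sqrt{t}e^{i\phi},\sqrt{t}e^{-i\phi})$ and move only site $i$ along the ray. Average the resulting condition over $\phi$, which kills the $\sigma^i_z\sigma^j_{x,y}$ contributions, and let $t\to 1^-$. Then $\sigma_z^j$ contributes the factor $(1-t)/(1+t)\to 0$.

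Two smaller corrections to Step~3:
\begin{enumerate}
\item The one-site $\mu^x,\mu^y$ terms simply do not appear at two-site torus points, because $F_I(e^{i\theta},-e^{-i\theta})=0$.
\item Let $m_\pm$ be the moduli of $(J^{xx}\pm J^{yy})+i(J^{xy}\mp J^{yx})$. These are not the singular values of the $2\times 2$ block. The largest singular value is $\tfrac12(m_++m_-)$, which is what $\sup_{\theta_1,\theta_2}u(\theta_1)^\top J\,v(\theta_2)$ equals. The cleanest finish is the paper's identity $\sup_{\|u\|=\|v\|=1}u^\top Jv=\|J\|$.
\end{enumerate}
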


As a side note, observe that if $\exp(-\beta H) \in \LY(1)$ for all $\beta \in (0, \beta(H))$, then $\exp(-\beta H) \in \LY(1)$ for all $\beta > 0$ by observing that $\exp(-\beta H / M)^M = \exp(-\beta H)$ for any $M > 0$.
Moreover, from Section~\ref{sec:k-local}, we already know that $H$ cannot have genuinely $k$-local interactions for $k \geq 3$ if $\exp(-\beta H) \in \LY(1)$ for all $\beta > 0$, so we can immediately reduce to studying $2$-local Hamiltonians of the form
\begin{align*}
    H = cI - \sum_{\{i, j\} \in E} \sum_{\alpha, \gamma \in \{x, y, z\}} J_{ij}^{\alpha \gamma} \sigma_{\alpha}^i \sigma_{\gamma}^j - \sum_{i \in V} \sum_{\alpha \in \{x, y, z\}} \mu_i^\alpha \sigma_{\alpha}^i,
\end{align*}
where $c$ is some real constant and $G = (V, E)$ is a connected interaction graph. Indeed, any $2$-local Hamiltonian takes the above form in the Pauli basis. Surprisingly, the perturbative technique from the previous section is fully equipped to handle Hamiltonians of this form. As before, we will reduce to analyzing the zeros of the univariate polynomials $s \mapsto F_H(s z_\theta, s w_\theta)$, where $(z_\theta, w_\theta)$ is a point on the boundary of the unit polydisk, parametrized as in Lemma~\ref{lem:boundary-root}. The mechanism of proof is essentially the same as in Section~\ref{sec:k-local}. If $F_H(z_\theta, w_\theta) \notin (-\infty, 0]$ for some point $(z_\theta, w_\theta)$ of the form
\begin{align}
\label{eq:2-local-boundary-points}
    z_\theta = (0, \ldots, 0, e^{i \theta_1}, 0, \ldots, 0, e^{i \theta_2}, 0, \ldots 0), \quad w_\theta = (0, \ldots, 0, -e^{-i \theta_1}, 0, \ldots, 0, -e^{-i \theta_2}, 0, \ldots 0),
\end{align}
we will show that $\exp(-\beta H) \notin \LY(1)$ at sufficiently high temperature by essentially the same perturbative method in Section~\ref{sec:k-local}. Specifically, as in Section~\ref{sec:k-local}, the points $(z_\theta, w_\theta)$ are precisely those on the boundary of the unit polydisk for which the generating polynomial of the identity vanishes. When $F_H(z_\theta, w_\theta) \notin (-\infty, 0]$, we are able to choose a square root of $F_H(z_\theta, w_\theta)$ with positive real part, at which point the same argument via Rouché's theorem that we used to prove Theorem~\ref{th:k-local-obstruction} applies. On the other hand, we show that if $F_H(z_\theta, w_\theta) \in (-\infty, 0]$ for all $(\theta_1, \theta_2) \in [0, 2\pi)^2$, then $H$ is a Suzuki-Fisher Hamiltonian. Following this outline, we relegate the bulk of Theorem~\ref{th:lee-yang-converse} into a technical lemma, proven in Appendix~\ref{app:proofs}.

\begin{lemma}
\label{lem:ly-to-sf}
If $H$ is a $2$-local Hamiltonian and $\exp(-\beta H) \in \LY(1)$ for all $\beta \geq 0$, then $H$ is a Suzuki-Fisher Hamiltonian.
\end{lemma}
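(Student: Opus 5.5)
The plan is to argue by contrapositive, reusing the Rouché machinery of Section~\ref{sec:k-local} with small $k$. The first step is to make that machinery reusable. Fix a point $(z_\theta,w_\theta)$ on the boundary of the unit polydisk with $F_I(z_\theta,w_\theta)=0$. If the specialization $F_I(sz_\theta,sw_\theta)$ vanishes to order $k$ at $s=1$, with leading behaviour $A[2(1-s)]^k$ for some $A>0$, then the model polynomial is $A[2(1-s)]^k-\beta C$ with $C=F_H(z_\theta,w_\theta)$. It has a root inside $\D$ for small $\beta$ as soon as some $k$th root of $C/A$ has positive real part. Lemmas~\ref{lem:linearization-root} and~\ref{lem:specialization-root} then carry over essentially verbatim, with no compactness over $U$ needed here.

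For $k=2$ the positivity condition reads $C\notin(-\infty,0]$, and for $k=1$ it reads $\Re C>0$. Hence $\exp(-\beta H)\in\LY(1)$ for all $\beta\ge 0$ forces two things. First, $F_H(z_\theta,w_\theta)\in(-\infty,0]$ at every two-site point of the form \eqref{eq:2-local-boundary-points}. Second, $\Re F_H\le 0$ at suitable one-site boundary points.

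The second step computes these values using the single-qubit formula $F_\sigma(z,w)=\sigma_{00}+\sigma_{01}w+\sigma_{10}z+\sigma_{11}zw$. At an active site with $z=e^{i\theta}$ and $w=-e^{-i\theta}$ this gives $F_I=0$, $F_{\sigma_z}=2$, $F_{\sigma_x}=2i\sin\theta$ and $F_{\sigma_y}=2i\cos\theta$. At an inactive site with $z=w=0$ it gives $F_I=F_{\sigma_z}=1$ and $F_{\sigma_x}=F_{\sigma_y}=0$.

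Consider the two-site points, with active sites $i,j$. Any Pauli term acting as the identity on $i$ or on $j$ contributes zero, so by $2$-locality only the $\{i,j\}$ couplings survive. With $u(\theta)=(\sin\theta,\cos\theta)^T$ and $J'_{ij}$ the $2\times 2$ block of $xx,xy,yx,yy$ couplings, this gives
\begin{align*}
-\tfrac14F_H(z_\theta,w_\theta)=J^{zz}_{ij}+i\bigl(\sin\theta_2J^{zx}_{ij}+\cos\theta_2J^{zy}_{ij}+\sin\theta_1J^{xz}_{ij}+\cos\theta_1J^{yz}_{ij}\bigr)-u(\theta_1)^TJ'_{ij}u(\theta_2).
\end{align*}
Requiring this to be real and nonnegative for all $\theta_1,\theta_2$ has two consequences. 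Vanishing of the imaginary part for every $\theta$ forces $J^{zx}_{ij}=J^{zy}_{ij}=J^{xz}_{ij}=J^{yz}_{ij}=0$. Nonnegativity of the real part gives $J^{zz}_{ij}\ge\sup_{|u|=|v|=1}u^TJ'_{ij}v=\|J'_{ij}\|$, which is exactly \eqref{eq:sf-couplings}.

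The third step handles the $Z$-fields, and I expect it to be the main obstacle. The naive one-site point, with all other sites set to $0$, picks up the contributions $J^{zz}_{ij}$ from neighbours and would only yield $\mu_i^z+\sum_jJ^{zz}_{ij}\ge 0$. Instead, I would put every inactive site $j$ at the unit-modulus point $z_j=e^{i\varphi}$, $w_j=e^{-i\varphi}$. There $F_I=2$, $F_{\sigma_z}=0$, $F_{\sigma_x}=2\cos\varphi$ and $F_{\sigma_y}=-2\sin\varphi$, and the multiplicity at $s=1$ is still one.

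At this point the $zz$ couplings vanish, and the $zx$ and $zy$ couplings were already killed in the second step. The $\sigma^i_{x,y}\sigma^j_{x,y}$ couplings contribute a purely imaginary factor from site $i$ times real factors from the other sites. So I expect
\begin{align*}
\Re F_H=-2^{n}\mu_i^z\quad\text{or, at least, }\Re F_H=-c_n\,\mu_i^z\text{ for some }c_n>0.
\end{align*}
The $k=1$ criterion then forces $\mu_i^z\ge0$. I would double-check that the scaling $s(z_\theta,w_\theta)$ keeps these points inside the closed polydisk for $|s|\le1$, and that the leading coefficient $A$ is positive, so the Rouché argument applies unchanged.

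Combining the three steps, all non-Suzuki-Fisher Pauli coefficients vanish and the sign constraints hold, so $H$ has the form of Definition~\ref{def:suzuki-fisher} up to $cI$.
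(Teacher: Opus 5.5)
Your proposal is correct. The coupling step is the same as the paper's: the $k=2$ Rouché criterion (Lemma~\ref{lem:2-local-perturbative}) forces $F_H\in(-\infty,0]$ at the two-site points, which kills $J^{xz}_{ij},J^{yz}_{ij},J^{zx}_{ij},J^{zy}_{ij}$ and yields \eqref{eq:sf-couplings}.

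You depart from the paper on the $Z$-fields. The paper (Lemma~\ref{lem:ly-to-positive-field}) takes the normalized partial trace over $V\setminus\{i\}$. This stays in $\LY(1)$ by the Asano--Ruelle contraction, and it is nonzero by positive definiteness. The paper then applies the inequality $a_{00}\geq a_{11}$ of Lemma~\ref{lem:single-qubit-inequality} to $I-\beta(cI-\mu_i^x\sigma_x-\mu_i^y\sigma_y-\mu_i^z\sigma_z)+O(\beta^2)$.

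You instead run a $k=1$ Rouché argument at the mixed point, and your computation checks out:
\begin{itemize}
\item $F_I(sz_\theta,sw_\theta)=(1-s^2)(1+s^2)^{n-1}$ has a simple zero at $s=1$ with $A=2^{n-1}>0$.
\item Every term with a $\sigma_z$ on a site $j\neq i$ vanishes, because $F_{\sigma_z}(e^{i\varphi},e^{-i\varphi})=0$.
\item The $\{x,y\}\times\{x,y\}$ terms are purely imaginary.
\item Hence $\Re F_H=-2^n\mu_i^z$ exactly.
\end{itemize}
The Rouché estimates go through with error $O(\beta^2)$ against a main term of order $\beta$. The bound on $F_{M_\beta}$ carries $2^n$ instead of $2^k$, since all coordinates are nonzero; this is harmless.

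The two routes are closely linked. For a qubit, $\Tr A=\frac{1}{2\pi}\int_0^{2\pi}F_A(e^{i\varphi},e^{-i\varphi})\,d\varphi$, so the paper's partial trace is the average of your evaluation over $\varphi$ at each traced site. This averaging automatically removes the $\sigma_z^i\sigma_{x,y}^j$ contributions. As a result, the paper's field argument is independent of the coupling analysis and needs only a $2\times 2$ computation. Your pointwise version needs Step 2 to have already removed $J^{zx}_{ij},J^{zy}_{ij}$ (or a pairing of $\varphi$ with $\varphi+\pi$). In exchange, it derives every Suzuki--Fisher constraint from a single perturbative criterion, and it uses no contraction lemma, positive definiteness, or separate single-qubit lemma.
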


With this result in hand, we can give a simple proof of Theorem~\ref{th:lee-yang-converse}.

\begin{proof}[Proof of Theorem~\ref{th:lee-yang-converse}]
If $\exp(-\beta H) \in \LY(1)$ for all $\beta \geq 0$, then Theorem~\ref{th:k-local-obstruction} implies that $H$ is $2$-local. Hence, by Lemma~\ref{lem:ly-to-sf}, $H$ is a Suzuki-Fisher Hamiltonian. Conversely, if $H$ is a Suzuki-Fisher Hamiltonian, then the classical result of \cite[Theorem 2]{Suzuki1971} (see also \cite[Fact 1]{wong2026} for a more modern statement) implies that $\exp(-\beta H) \in \LY(1)$ for all $\beta \geq 0$.
\end{proof}

\printbibliography

\appendix
\section{Deferred Proofs}
\label{app:proofs}
In this section, we provide proofs of several of the technical lemmas presented in Section~\ref{sec:k-local} and Section~\ref{sec:ly-to-sf}. For all of the proofs, the following fact, a direct result of the definition of the generating polynomial of a single-qubit operator, will be useful to keep in mind.

\begin{fact}
\label{fact:pauli-gen-poly}
If $\sigma_{x}, \sigma_y,  \sigma_z$ are the Pauli matrices, then
\begin{align*}
    F_{\sigma_x}(e^{i \theta}, -e^{-i \theta}) = 2 i \sin(\theta), \quad F_{\sigma_y}(e^{i \theta}, -e^{-i \theta}) = 2 i \cos(\theta), \quad F_{\sigma_z}(e^{i \theta}, -e^{-i \theta}) = 2.
\end{align*}
\end{fact}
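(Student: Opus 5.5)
The plan is to compute directly from the bilinear-form description of the generating polynomial of an operator given in Section~\ref{sec:notation}. For a single-qubit operator $A$ we have $F_A(z,w) = \langle v(z) | A | v(w^*) \rangle$. Here $\ket{v(w^*)} = \ket{0} + w\ket{1}$, and by antilinearity of the first slot $\bra{v(z)} = \bra{0} + z\bra{1}$. Expanding gives the multiaffine polynomial
\begin{align*}
    F_A(z,w) = A_{00} + A_{01}\, w + A_{10}\, z + A_{11}\, zw .
\end{align*}
The only point needing care is this ordering: $z$ multiplies the row-$1$ entries and $w$ multiplies the column-$1$ entries, and conjugation appears nowhere. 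Since the target values involve $\sin\theta$ versus $\cos\theta$ with specific factors of $i$, this is where a sign error would slip in, so I will check it against the identity matrix: the formula gives $F_I(z,w) = 1 + zw$, which agrees with the paper.

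Next I substitute $z = e^{i\theta}$ and $w = -e^{-i\theta}$, noting that $zw = -1$ and $z + w = e^{i\theta} - e^{-i\theta} = 2i\sin\theta$. Then I read off each Pauli matrix's entries.
\begin{itemize}
\item For $\sigma_x$, with $A_{01} = A_{10} = 1$ and zero diagonal, $F_{\sigma_x} = z + w = 2i\sin\theta$.
\item For $\sigma_y$, with $A_{01} = -i$ and $A_{10} = i$, $F_{\sigma_y} = iz - iw = i(e^{i\theta} + e^{-i\theta}) = 2i\cos\theta$.
\item For $\sigma_z$, with $A_{00} = 1$, $A_{11} = -1$ and zero off-diagonal, $F_{\sigma_z} = 1 - zw = 2$.
\end{itemize}

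There is no real obstacle beyond fixing the convention in the first step correctly. Everything after that is a one-line evaluation.
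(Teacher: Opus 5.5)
Your proposal is correct and is the same direct evaluation from the definition that the paper relies on; the paper gives no separate argument and simply calls the fact a direct consequence of the definition. Your single-qubit expansion $F_A(z,w) = A_{00} + A_{01}w + A_{10}z + A_{11}zw$ matches the convention the paper itself uses in the proof of Lemma~\ref{lem:single-qubit-inequality}, so the $\sigma_y$ value $2i\cos\theta$ comes out with the stated sign.
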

In particular, if $\balpha = (\alpha_1, \ldots, \alpha_n) \in \{I, x,y,z\}^n$ and
\begin{align*}
    P_{\balpha} = \bigotimes_{i=1}^n \sigma_{\alpha_i},
\end{align*}
then
\begin{align*}
    F_{P_{\balpha}}(z, w) = \prod_{i=1}^n F_{\sigma_{\alpha_i}}(z_i, w_i).
\end{align*}
If $U = U_1 \otimes \ldots \otimes U_n$ is a local unitary and
\begin{align*}
    H = \sum_{\substack{\balpha \in \{I, x,y,z\}^n \\ |\mathrm{supp}(\balpha)| \leq k}} h_{\balpha} P_{\balpha},
\end{align*}
then we may write
\begin{align}
\label{eq:H(U)-pauli}
    H(U) := U^* H U = \sum_{\substack{\balpha \in \{I, x,y,z\}^n \\ |\mathrm{supp}(\balpha)| \leq k}} h_{\balpha}(U) P_{\balpha}.
\end{align}

\subsection*{From Section~\ref{sec:k-local}}
The only remaining loose end in Section~\ref{sec:k-local} is Lemma~\ref{lem:boundary-root}. Recall from said section that we assumed that $H$ is a genuinely $k$-local term containing a term $P_{\balpha}$ acting non-trivially on the first $k$ qubits, having enumerated the underlying interaction graph as $V = \{1, \ldots, n\}$. Because conjugation by a local unitary $U = U_1 \otimes \ldots \otimes U_n$ is trace preserving, it follows that $U^* P_{\balpha} U$ also acts non-trivially on the first $k$ qubits and trivially (i.e., as the identity) on the remaining $n - k$. Specifically, $\Tr(U_i^* \sigma_{\alpha_i} U_i) = \Tr(\sigma_{\alpha_i}) = 0$ for $\alpha_i \in \{x,y,z\}$, so the support of $U^* P_{\balpha} U$ is the same as the support of $P_{\balpha}$.

Now, if $(\theta_1, \ldots, \theta_k) \in [0, 2\pi)^k$, then for points of the form
\begin{align*}
    (z_\theta, w_\theta) = (e^{i \theta_1}, \ldots, e^{i \theta_k}, 0, \ldots, 0, -e^{-i \theta_1}, \ldots, -e^{-i \theta_k}, 0, \ldots, 0),
\end{align*}
observe that
\begin{align}
\label{eq:boundary-eval-i}
    F_{H(U)}(z_\theta, w_\theta) = \sum_{\substack{\balpha \in \{I, x, y, z\}^n \\ \supp(\balpha) = \{1, \ldots, k\}}} h_{\balpha}(U) F_{P_{\balpha}}(z_\theta, w_\theta),
\end{align}
Indeed, if a term corresponding $\balpha \in \{I, x, y, z\}^n$ acts trivially on a site $i \in \{1,\ldots,k\}$, then the corresponding generating polynomial vanishes because $F_I(e^{i \theta}, -e^{-i \theta}) = 0$ for all $\theta \in [0, 2\pi)$. We also apply the fact that $F_I(0, 0) = 1$ to obtain Equation~\eqref{eq:boundary-eval-i}. Define now
\begin{align*}
    g_U(\theta_1, \ldots, \theta_k) := \sum_{\substack{\balpha \in \{I, x, y, z\}^n \\ \supp(\balpha) = \{1, \ldots, k\}}} h_{\balpha}(U) F_{P_{\balpha}}(z_\theta, w_\theta),
\end{align*}
which we may view as an element of $L^2(\T^k)$, where $\T$ is the segment $[0, 2 \pi]$ with the endpoints identified. We will ultimately show that $\|g_U\|_{L^2(\T^k)}^2 > 0$, at which point Lemma~\ref{lem:boundary-root} follows because there then must exist a point $\theta = (\theta_1, \ldots, \theta_k) \in \T^k$ such that
\begin{align*}
    F_{H(U)}(z_\theta, w_\theta) = g_U(\theta_1, \ldots, \theta_k) \neq 0,
\end{align*}
where $(z_\theta, w_\theta)$ is fully determined by the point $\theta = (\theta_1, \ldots, \theta_k) \in \T^k$. First, it is helpful to note that
\begin{align*}
    F_{P_{\balpha}}(z_\theta, w_\theta) = \prod_{j=1}^k F_{\sigma_{\alpha_j}}(e^{i \theta_j}, -e^{-i\theta_j}).
\end{align*}
By Fact~\ref{fact:pauli-gen-poly}, each of the terms in the above product is a trigonometric function in $\theta_j$. In turn, $F_{P_{\balpha}}(z_\theta, w_\theta)$ is a trigonometric polynomial in the variables $(\theta_1, \ldots, \theta_k)$. The three trigonometric functions that arise in Fact~\ref{fact:pauli-gen-poly} are moreover orthogonal to each other as elements of $L^2(\T)$. Writing $g_x(\theta) = 2 i \sin(\theta)$, $g_y(\theta) = 2i \cos(\theta)$, and $g_z(\theta) = 2$, we may compute
\begin{align*}
    \langle g_x, g_y \rangle_{L^2(\T)} = \frac{1}{2\pi} \int_0^{2\pi} (-2i \sin(\theta))(2i \cos(\theta)) d\theta = 0,
\end{align*}
and the remaining orthogonality relations can be checked directly for $\alpha, \gamma \in \{x,y,z\}$ with $\alpha \neq \gamma$. Although the functions in Fact~\ref{fact:pauli-gen-poly} are not normalized, direct computation shows that $\|g_\alpha\|_{L^2(\T)}^2 \geq 2$ for all $\alpha \in \{x,y,z\}$.

\begin{proof}[Proof of Lemma~\ref{lem:boundary-root}]
As described above, it suffices to show that $\|g_U\|_{L^2(\T^k)}^2 > 0$. Denoting
\begin{align*}
    g_{\balpha}(\theta_1, \ldots, \theta_k) = \prod_{j=1}^k g_{\alpha_j}(\theta_j),
\end{align*}
we may directly expand
\begin{align*}
    \|g_U\|_{L^2(\T^k)}^2 = \sum_{\substack{\balpha, \bm{\beta} \in \{I, x,y,z\}^n \\ \mathrm{supp}(\balpha) = \mathrm{supp}(\bm{\beta}) = \{1, \ldots, k\}}} h_{\balpha}(U)^* h_{\bm{\beta}}(U) \langle g_{\balpha}, g_{\bm{\beta}}\rangle_{L^2(\T^k)}.
\end{align*}
For $\balpha \neq \bm{\beta}$, the fact that $\langle g_\alpha, g_\gamma \rangle_{L^2(\T)} = 0$ for $\alpha \neq \gamma \in \{x,y,z\}$ yields
\begin{align*}
    \|g_U\|_{L^2(\T^k)}^2 = \sum_{\substack{\balpha \in \{I, x,y,z\}^n \\ \mathrm{supp}(\balpha) = \{1, \ldots, k\}}} |h_{\balpha}(U)|^2 \prod_{j=1}^k \|g_{\alpha_j}\|_{L^2(\T)}^2 \geq 2^k \sum_{\substack{\balpha \in \{I, x,y,z\}^n \\ \mathrm{supp}(\balpha) = \{1, \ldots, k\}}} |h_{\balpha}(U)|^2.
\end{align*}
Because $H$ is a genuinely $k$-local Hamiltonian with at least one term acting non-trivially on $\{1, \ldots, k\}$ in the Pauli basis, we know that
\begin{align*}
    \sum_{\substack{\balpha \in \{I, x,y,z\}^n \\ \mathrm{supp}(\balpha) = \{1, \ldots, k\}}} |h_{\balpha}|^2 > 0.
\end{align*}
Hence, we can conclude the proof upon observing that
\begin{align*}
    \sum_{\substack{\balpha \in \{I, x,y,z\}^n \\ \mathrm{supp}(\balpha) = \{1, \ldots, k\}}} |h_{\balpha}(U)|^2  = \sum_{\substack{\balpha \in \{I, x,y,z\}^n \\ \mathrm{supp}(\balpha) = \{1, \ldots, k\}}} |h_{\balpha}|^2.
\end{align*}
To see this, note that as long as $U$ is a local unitary, we have that
\begin{align*}
    \sum_{\substack{\balpha \in \{I, x,y,z\}^n \\ \mathrm{supp}(\balpha) = \{1, \ldots, k\}}} h_{\balpha}(U) P_{\balpha} = U^* \left(\sum_{\substack{\balpha \in \{I, x,y,z\}^n \\ \mathrm{supp}(\balpha) = \{1, \ldots, k\}}} h_{\balpha} P_{\balpha} \right) U,
\end{align*}
so taking the Hilbert-Schmidt norm of both sides proves the identity and thus the lemma.
\end{proof}

\subsection*{From Section~\ref{sec:ly-to-sf}}

Our first step towards proving Theorem~\ref{th:lee-yang-converse}, by way of Lemma~\ref{lem:ly-to-sf} is the following fact that closely mirrors our analysis of $k$-local Hamiltonians. In the $k$-local case, we crucially used the fact that if $k \geq 3$, then any nonzero constant admits a $k$th root with positive real part. By finding a zero $(z_\theta, w_\theta)$ of the identity's generating polynomial on the boundary of the unit polydisk (via Lemma~\ref{lem:boundary-root}) such that $F_H(z_\theta, w_\theta) \neq 0$, we were able to show that $F_{\exp(-\beta H)}(s z_\theta, sw_\theta)$ vanishes for some $|s| < 1$, as long as $\beta$ is sufficiently small. 

In the $2$-local case, our strategy is essentially identical, with the caveat that we now require access to a point $(z_\theta, w_\theta)$ on the boundary of the unit polydisk such that $F_H(z_\theta, w_\theta) \notin (-\infty, 0] \subset \C$. Indeed, for any $C \notin (-\infty, 0]$, we are able to pick a square root of $C$ with positive real part. After this, the proofs of Lemma~\ref{lem:linearization-root} and Lemma~\ref{lem:specialization-root} both remain valid in this setting, mutatis mutandis (i.e., taking $k = 2$ throughout), allowing us to conclude the following.

\begin{lemma}
\label{lem:2-local-perturbative}
If $H$ is a $2$-local Hamiltonian and there exists a point as in Equation~\eqref{eq:2-local-boundary-points} such that $F_H(z_\theta, w_\theta) \notin (-\infty, 0]$, then for some $\beta(H) > 0$, $\exp(-\beta H) \notin \LY(1)$ for all $\beta \in (0, \beta(H))$.
\end{lemma}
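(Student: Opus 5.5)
The plan is to rerun the argument of Lemma~\ref{lem:linearization-root} and Lemma~\ref{lem:specialization-root} with $k = 2$, checking that the only place $k \geq 3$ entered is the choice of a root with positive real part. Fix a point $(z_\theta, w_\theta)$ as in Equation~\eqref{eq:2-local-boundary-points}, supported on the two sites $i < j$ indexed by $\theta_1, \theta_2$, with $C := F_H(z_\theta, w_\theta) \notin (-\infty, 0]$. Since $C$ lies off the closed negative real axis, its principal square root $c_1$ satisfies $\Re(c_1) > 0$, and the other root is $c_2 = -c_1 \neq c_1$. Because $F_I(sz_\theta, sw_\theta) = (1 - s^2)^2$ (the factors at the other sites equal $F_I(0,0)=1$), the univariate restriction takes the form
\begin{align*}
P_\beta(s) = (1-s^2)^2 - \beta F_H(s z_\theta, s w_\theta).
\end{align*}
We compare it to $Q_\beta(s) = [2(1-s)]^2 - \beta C$, whose roots are $r_{1,2}(\beta) = 1 - \beta^{1/2} c_{1,2}/2$. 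By Equation~\eqref{eq:beta-threshold} with $k=2$, the root $r_1(\beta)$ lies in $\D$ once $\beta < (\Re(c_1)/4|c_1|^2)^2$.

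Next, I would run the first Rouché comparison verbatim. Set $\varepsilon = \tfrac{\beta^{1/2}}{4}\min\{\Re(c_1), 2|c_1|\}$. The estimates in the proof of Lemma~\ref{lem:linearization-root} then give $D_\varepsilon(r_1(\beta)) \subset \D$, together with $|Q_\beta| \geq c\beta$ on its boundary. They also give $|P_\beta - Q_\beta| \leq L'\beta^{3/2}$, using the Lipschitz bound on $s \mapsto F_H(sz_\theta, sw_\theta)$ and the bound $|(1-s)^2((1+s)^2 - 4)| \lesssim |1-s|^3$. None of these steps used $k \geq 3$, only $\Re(c_1) > 0$ and the distinctness of the $c_m$. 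So $P_\beta$ has a root in $D_\varepsilon(r_1(\beta))$ for small $\beta$.

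Finally, I would apply Lemma~\ref{lem:specialization-root} with $k=2$. The remainder term satisfies $|F_{M_\beta}(sz_\theta, sw_\theta)| \leq 4\|M_\beta\| = O(\beta^2)$ on the disk; this uses the appendix bound $|F_A(sz_\theta, sw_\theta)| \leq 2^k\|A\|$ for $|s|\le 1$, since only two coordinates are nonzero. Meanwhile $|P_\beta| \geq c\beta - L'\beta^{3/2}$ on the boundary. Rouché's theorem then yields a zero of $F_{\exp(-\beta H)}(sz_\theta, sw_\theta)$ at some $|s|<1$. The point $(sz_\theta, sw_\theta)$ lies in the open unit polydisk, because its nonzero coordinates have modulus $|s|<1$. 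Hence $\exp(-\beta H) \notin \LY(1)$ for all $\beta$ below an explicit threshold $\beta(H)$. No uniformity over local unitaries is needed here, so the compactness step is skipped.

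The main thing to check is the boundary lower bound $|Q_\beta(s)| \geq c\beta$ when $k=2$. The product $4|s - r_1||s - r_2|$ equals $4\varepsilon\,|s - r_2|$, and $|s - r_2| \geq |r_1 - r_2| - \varepsilon \geq \beta^{1/2}|c_1|/2$. This gives order $\beta$, which beats the $\beta^{3/2}$ error, so the argument goes through.
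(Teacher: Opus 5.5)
Your proposal is correct and is the paper's argument: the paper also takes a square root of $F_H(z_\theta, w_\theta)$ with positive real part and reruns the two Rouch\'e comparisons of Lemma~\ref{lem:linearization-root} and Lemma~\ref{lem:specialization-root} with $k = 2$, with no compactness step. Your explicit $k=2$ checks (that $|Q_\beta|$ is of order $\beta$ on $\partial D_\varepsilon(r_1(\beta))$, against errors of order $\beta^{3/2}$ and $\beta^2$, and that $c_2 = -c_1$ is distinct from $c_1$) fill in what the paper leaves as ``mutatis mutandis.''
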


We can now prove Lemma~\ref{lem:ly-to-sf}. The conditions on the couplings follow from requiring that $F_H(z_\theta, w_\theta) \in (-\infty, 0]$ at every point of the form in Equation~\eqref{eq:2-local-boundary-points}, which Lemma~\ref{lem:2-local-perturbative} forces whenever $\exp(-\beta H) \in \LY(1)$ for all $\beta \geq 0$. The condition that $\mu_i^z \geq 0$ for all $i \in V$ is slightly harder to check, so we separate it into its own lemma. Throughout, we write $2$-local Hamiltonians in the Pauli basis as
\begin{align}
\label{eq:generic-2-local}
    H = cI - \sum_{\{i, j\} \in E} \sum_{\alpha, \gamma \in \{x, y, z\}} J_{ij}^{\alpha \gamma} \sigma_{\alpha}^i \sigma_{\gamma}^j - \sum_{i \in V} \sum_{\alpha \in \{x, y, z\}} \mu_i^\alpha \sigma_{\alpha}^i.
\end{align}
We also require a simple lemma about positive definite Lee-Yang operators on a single qubit.

\begin{lemma}
\label{lem:single-qubit-inequality}
Suppose that $A \in \LY(1)$ is a positive definite operator. If we write
\begin{align*}
    A = \begin{pmatrix}
    a_{00} & a_{01} \\ a_{01}^* & a_{11}
    \end{pmatrix}
\end{align*}
for $a_{00}, a_{11} > 0$, then $a_{00} \geq a_{11}$.
\end{lemma}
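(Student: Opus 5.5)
The plan is to exhibit an explicit zero of the generating polynomial of $A$ inside the open unit bidisk whenever $a_{11} > a_{00}$; this contradicts $A \in \LY(1)$ and proves the lemma by contrapositive. Viewing $A$ as a two-index tensor, the formula $F_A(z,w) = \langle v(z)|A|v(w^*)\rangle$ with $\langle v(z)| = \bra{0} + z\bra{1}$ and $\ket{v(w^*)} = \ket{0} + w\ket{1}$ gives
\begin{align*}
    F_A(z, w) = a_{00} + a_{01} w + a_{01}^* z + a_{11} z w .
\end{align*}
By Definition~\ref{def:ly-tensor}, $A \in \LY(1)$ means that $F_A$ has no zeros on $\D \times \D$.

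The key choice is the antidiagonal slice $w = -\bar z$. This is the interior analogue of the boundary points $(e^{i\theta}, -e^{-i\theta})$ already used in Lemma~\ref{lem:boundary-root}. On this slice the off-diagonal terms become
\begin{align*}
    -a_{01}\bar z + a_{01}^* z = 2i\,\Im(a_{01}^* z),
\end{align*}
and the last term is $-a_{11}|z|^2$. Hence
\begin{align*}
    F_A(z, -\bar z) = a_{00} - a_{11}|z|^2 + 2i\,\Im(a_{01}^* z).
\end{align*}
Next, choose the phase of $z$ so that $a_{01}^* z$ is real. Take $z = t\, a_{01}/|a_{01}|$ with $t \in [0,1)$ if $a_{01} \neq 0$, and $z = t$ real otherwise. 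Then the imaginary part vanishes and $F_A(z,-\bar z) = a_{00} - a_{11} t^2$.

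If $a_{11} > a_{00}$, set $t = \sqrt{a_{00}/a_{11}}$. Since $a_{00} > 0$, this gives $t \in (0,1)$, and $F_A(z, -\bar z) = 0$ with $|z| = |{-\bar z}| = t < 1$. This is a zero in $\D\times\D$, contradicting $A \in \LY(1)$. Therefore $a_{00} \geq a_{11}$.

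Positive definiteness enters only through $a_{00}, a_{11} > 0$ and the Hermitian form of the off-diagonal entries. The only real obstacle is finding the right slice. The diagonal slice $w = \bar z$ gives $\langle v|A|v\rangle > 0$ and never vanishes, so one must use the sign-flipped slice $w = -\bar z$, after which the argument is a one-line computation.
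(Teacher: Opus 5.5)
Your proof is correct and matches the paper's argument: the paper also restricts $F_A$ to $(z,w) = (se^{i\theta}, -se^{-i\theta})$ with $\theta = \arg(a_{01})$, or to $(s,-s)$ when $a_{01} = 0$, and gets $a_{00} - a_{11}s^2$. It then concludes from zero-freeness on the unit disk, and your explicit root $t = \sqrt{a_{00}/a_{11}} \in (0,1)$ spells out that last step.
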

\begin{proof}
Expand the generating polynomial of $A$ as
\begin{align*}
    F_A(z, w) = a_{00} + a_{01} w + a_{01}^* z + a_{11} zw.
\end{align*}
If $a_{01} = 0$, then taking $(z, w) = (s, -s)$ means that the real-rooted univariate polynomial $f(s) = a_{00} - a_{11}s^2$ has no roots inside the unit disk, in which case $a_{00} \geq a_{11}$.

If $a_{01} \neq 0$, then consider points of the form $(z, w) = (s e^{i \theta}, -se^{-i\theta})$, where $\theta = \arg(a_{01})$. At such points, we again require that $f(s) = a_{00} - a_{11}s^2$ has no roots inside the unit disk, which recovers the inequality $a_{00} \geq a_{11}$.
\end{proof}

In the following lemma, $\Tr_{\Lambda}(A)$ denotes the \textit{normalized} partial trace of an operator over a subset of sites $\Lambda \subset V$, so that $\Tr_{\Lambda}(I) = I_{V \setminus \Lambda}$.

\begin{lemma}
\label{lem:ly-to-positive-field}
If $H$ is a $2$-local Hamiltonian in the Pauli basis such that $\exp(-\beta H) \in \LY(1)$ for all $\beta \geq 0$, then $\mu_i^z \geq 0$ for each $i \in V$.
\end{lemma}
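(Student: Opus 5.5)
The plan is to reduce to a single qubit and then apply Lemma~\ref{lem:single-qubit-inequality} to first order in $\beta$. Fix a site $i \in V$ and set $\Lambda := V \setminus \{i\}$. Define the single-qubit operator
\begin{align*}
    A_\beta := \Tr_{\Lambda}\big(\exp(-\beta H)\big).
\end{align*}
I will show that $A_\beta$ is a positive definite element of $\LY(1)$ for every $\beta \geq 0$. I will then read off its diagonal entries to first order in $\beta$.

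\textbf{Step 1: $A_\beta \in \LY(1)$.} View $\exp(-\beta H)$ as a tensor in $(\C^2)^{\otimes 2n}$. The (unnormalized) partial trace over a site $j \in \Lambda$ sums the entries with $x_j = x_{j'}$. This is exactly the contraction of the index pair $(j, j')$ in the sense of Proposition~\ref{prop:ly-contraction}. Since $\exp(-\beta H) \in \LY(1)$ and $r_j r_{j'} = 1$, each such contraction yields either the zero tensor or a tensor in $\LY(1)$ on the remaining indices.

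We perform the contractions one site at a time. At every stage the result is the partial trace of the positive definite operator $\exp(-\beta H)$, so it is again positive definite and in particular nonzero. Hence the zero alternative never occurs. After contracting all $j \in \Lambda$, we obtain an operator in $\LY(1)$. Multiplying by the positive constant $2^{-|\Lambda|}$ does not change the zero set, so $A_\beta \in \LY(1)$ and $A_\beta$ is positive definite.

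\textbf{Step 2: expand to first order in $\beta$.} Lemma~\ref{lem:single-qubit-inequality} now gives
\begin{align*}
    \bra{0}A_\beta\ket{0} \geq \bra{1}A_\beta\ket{1}, \quad\text{equivalently}\quad \Tr(\sigma_z A_\beta) \geq 0 \quad \text{for all } \beta \geq 0.
\end{align*}
Write $\exp(-\beta H) = I - \beta H + M_\beta$ with $\|M_\beta\| = O(\beta^2)$, as in Section~\ref{sec:k-local}. The normalized partial trace is linear, preserves the identity, and is contractive in operator norm. So
\begin{align*}
    A_\beta = I - \beta\, \Tr_\Lambda(H) + O(\beta^2).
\end{align*}

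Now compute $\Tr_\Lambda(H)$ from the Pauli form in Equation~\eqref{eq:generic-2-local}. Every Pauli string acting nontrivially on some site of $\Lambda$ has vanishing normalized partial trace, because each single-qubit Pauli is traceless. This removes all two-body terms and all fields on sites other than $i$. Therefore
\begin{align*}
    \Tr_\Lambda(H) = cI - \mu_i^x\sigma_x - \mu_i^y\sigma_y - \mu_i^z\sigma_z.
\end{align*}
Using $\Tr(\sigma_z I) = \Tr(\sigma_z\sigma_x) = \Tr(\sigma_z\sigma_y) = 0$ and $\Tr(\sigma_z^2) = 2$, this gives
\begin{align*}
    \Tr(\sigma_z A_\beta) = 2\beta \mu_i^z + O(\beta^2).
\end{align*}

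\textbf{Step 3: conclude.} Dividing $\Tr(\sigma_z A_\beta) \geq 0$ by $\beta > 0$ and letting $\beta \to 0^+$ yields $\mu_i^z \geq 0$. Since $i$ was arbitrary, this holds for every $i \in V$.

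The only delicate point is Step 1: identifying the partial trace with an Asano--Ruelle contraction at the critical value $r_j r_{j'} = 1$, and excluding the degenerate zero outcome. Positivity of the partial traces handles the latter. The remaining steps are routine first-order bookkeeping.
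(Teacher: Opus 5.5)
Your proposal is correct and follows essentially the same route as the paper. You pass to the single-site marginal $\Tr_{V\setminus\{i\}}(\exp(-\beta H))$, which stays in $\LY(1)$ by Asano--Ruelle contraction, with positive definiteness excluding the zero alternative; you then expand to first order in $\beta$, apply Lemma~\ref{lem:single-qubit-inequality}, divide by $\beta$, and let $\beta \to 0$. Your site-by-site check that no intermediate contraction vanishes just makes explicit what the paper obtains by citing \cite[Lemma 1]{wong2026}.
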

\begin{proof}
If $\exp(-\beta H) \in \LY(1)$ for all $\beta \geq 0$, then by \cite[Lemma 1]{wong2026}, all partial traces of $\exp(-\beta H)$ also belong to $\LY(1)$ for all $\beta \geq 0$. In particular, if we trace out every site other than a fixed site $i \in V$, it follows that
\begin{align*}
    T_\beta^i = \Tr_{V \setminus \{i\}}(\exp(-\beta H)) \in \LY(1),
\end{align*}
as long as it is not zero, and because $\exp(-\beta H)$ is positive definite, it cannot be the case that $T_\beta^i = 0$. Now, observe that for $2$-local Hamiltonians that take the form in Equation~\eqref{eq:generic-2-local},
\begin{align*}
    \Tr_{{V \setminus \{i\}}}(H) = cI - \mu_i^x \sigma_x - \mu_i^y \sigma_y - \mu_i^z \sigma_z \in \C^{2 \times 2},
\end{align*}
as the partial trace of any term containing a non-identity Pauli matrix on a site $j \neq i$ is zero. Writing $\exp(-\beta H) = I - \beta H + (\exp(-\beta H) - I + \beta H)$ as before, we compute
\begin{align*}
    \Tr_{V \setminus \{i\}}(\exp(-\beta H)) = I - \beta(cI - \mu_i^x \sigma_x - \mu_i^y \sigma_y - \mu_i^z \sigma_z) + \Tr_{V \setminus \{i\}}(\exp(-\beta H) - I + \beta H).
\end{align*}
The normalized trace of the remainder term $R_\beta^i := \Tr_{V \setminus \{i\}}(\exp(-\beta H) - I + \beta H)$ satisfies
\begin{align*}
    \|R_\beta^i\| = \|\Tr_{V \setminus \{i\}}(\exp(-\beta H) - I + \beta H)\| \leq \|\exp(-\beta H) - I + \beta H\| \leq \frac{1}{2} \beta^2 \|H\|^2 \exp(\beta \|H\|).
\end{align*}
Hence, we have shown that $T_\beta^i \in \LY(1)$ for all $\beta \geq 0$ and written
\begin{align*}
    T_\beta^i = I - \beta(cI - \mu_i^x \sigma_x - \mu_i^y \sigma_y - \mu_i^z \sigma_z) + R_\beta^i,
\end{align*}
with the norm of the remainder term $R_\beta^i$ decaying as $O(\beta^2)$. From Lemma~\ref{lem:single-qubit-inequality}, we have that
\begin{align*}
    1 - \beta c + \beta \mu_i^z + (R_\beta^i)_{00} \geq 1 - \beta c - \beta \mu_i^z + (R_\beta^i)_{11} \implies 2 \beta \mu_i^z + (R_\beta^i)_{00} - (R_\beta^i)_{11} \geq 0.
\end{align*}
Upon bounding
\begin{align*}
    |(R_\beta^i)_{00} - (R_\beta^i)_{11}| \leq 2 \|R_\beta^i\| \leq \beta^2\|H\|^2 \exp(\beta \|H\|)
\end{align*}
and dividing through by $\beta > 0$, it follows that
\begin{align*}
    2 \mu_i^z + \beta \|H\|^2 \exp(\beta \|H\|) \geq 0
\end{align*}
for all $\beta > 0$. Taking the limit $\beta \to 0$ then implies that $\mu_i^z \geq 0$.
\end{proof}

Having shown that any $2$-local Hamiltonian such that $\exp(-\beta H) \in \LY(1)$ for all $\beta \geq 0$ must have $\mu_i^z \geq 0$ for all $i \in V$, it remains to check that the coupling constants $J_{ij}^{\alpha \gamma}$ satisfy Equation~\eqref{eq:sf-couplings} for all $\{i, j\} \in E$. We again reduce to analyzing the behavior of the generating polynomial at points of the form provided in Equation~\eqref{eq:2-local-boundary-points}.

\begin{proof}[Proof of Lemma~\ref{lem:ly-to-sf}]
Take $(z_\theta, w_\theta)$ as in Equation~\eqref{eq:2-local-boundary-points}. In particular, let
\begin{align*}
    &z_\theta^{ij} = (0, \ldots, 0, \overbrace{e^{i \theta_1}}^{\text{slot } i}, 0, \ldots, 0, \overbrace{e^{i \theta_2}}^{\text{slot } j}, 0, \ldots, 0), \\
    &w_\theta^{ij} = (0, \ldots, 0, \underbrace{-e^{-i \theta_1}}_{\text{slot } i}, 0, \ldots, 0, \underbrace{-e^{-i \theta_2}}_{\text{slot } j}, 0, \ldots, 0),
\end{align*}
where $\theta_1, \theta_2 \in [0, 2\pi)$ are arbitrary and $\{i, j\} \in E$. These points are chosen precisely so that the generating polynomial of any term in $H$ that has an identity on site $i$ or site $j$ evaluates to zero. Our computation preceding this proof then shows that
\begin{align*}
    F_H(z_\theta^{ij}, w_\theta^{ij}) &= - \sum_{\alpha, \gamma \in \{x, y, z\}} J_{ij}^{\alpha \gamma} F_{\sigma_{\alpha}^i}(e^{i \theta_1}, -e^{-i \theta_1})F_{\sigma_{\gamma}^j}(e^{i \theta_2}, -e^{-i \theta_2}) \\
    &= -4 J_{ij}^{zz} + 4 \begin{pmatrix}
       \sin(\theta_1) \\ \cos(\theta_1) 
    \end{pmatrix}^\top \begin{pmatrix}
    J_{ij}^{xx} & J_{ij}^{xy} \\ J_{ij}^{yx} & J_{ij}^{yy}
    \end{pmatrix}\begin{pmatrix}
       \sin(\theta_2) \\ \cos(\theta_2) 
    \end{pmatrix} \\
    &\qquad - 4i \begin{pmatrix}
       \sin(\theta_1) \\ \cos(\theta_1) 
    \end{pmatrix}^\top \begin{pmatrix}
    J_{ij}^{xz} \\ J_{ij}^{yz}
    \end{pmatrix} - 4i \begin{pmatrix}
       \sin(\theta_2) \\ \cos(\theta_2) 
    \end{pmatrix}^\top \begin{pmatrix}
    J_{ij}^{zx} \\ J_{ij}^{zy}
    \end{pmatrix}.
\end{align*}
Now, because $\exp(-\beta H) \in \LY(1)$ for all $\beta \geq 0$, the contrapositive of Lemma~\ref{lem:2-local-perturbative} implies that $F_{H}(z_\theta^{ij}, w_\theta^{ij}) \in (-\infty, 0]$ for all points $(z_\theta, w_\theta)$. In turn, the imaginary part of the righthand side of the above expression must be zero for all $\theta_1, \theta_2 \in [0, 2\pi)$. Evaluating at $\theta_1 = \theta_2 = \pi / 2$ first and $\theta_1 = 3\pi / 2$, $\theta_2 = \pi / 2$ second shows that
\begin{align*}
    J_{ij}^{xz} + J_{ij}^{zx} = 0 = -J_{ij}^{xz} + J_{ij}^{zx}
\end{align*}
so that $J_{ij}^{xz} = 0$ upon rearranging. By identical reasoning, we also find that $J_{ij}^{yz} = J_{ij}^{zx} = J_{ij}^{zy} = 0$. The fact that $H$ is Hermitian requires that the remaining couplings constants are real.

The fact that $F_{H}(z_\theta^{ij}, w_\theta^{ij}) \in (-\infty, 0]$ then implies that
\begin{align*}
    J_{ij}^{zz} \geq \begin{pmatrix}
       \sin(\theta_1) \\ \cos(\theta_1) 
    \end{pmatrix}^\top \begin{pmatrix}
    J_{ij}^{xx} & J_{ij}^{xy} \\ J_{ij}^{yx} & J_{ij}^{yy}
    \end{pmatrix}\begin{pmatrix}
       \sin(\theta_2) \\ \cos(\theta_2) 
    \end{pmatrix} 
\end{align*}
for all $\theta_1, \theta_2 \in [0, 2\pi)$. More concisely,
\begin{align*}
    J_{ij}^{zz} \geq u^\top \begin{pmatrix}
    J_{ij}^{xx} & J_{ij}^{xy} \\ J_{ij}^{yx} & J_{ij}^{yy}
    \end{pmatrix}v
\end{align*}
for all unit vectors $u,v \in \R^2$. In turn,
\begin{align*}
    J_{ij}^{zz} \geq \sup_{\|u\| = \|v\| = 1} u^\top \begin{pmatrix}
    J_{ij}^{xx} & J_{ij}^{xy} \\ J_{ij}^{yx} & J_{ij}^{yy}
    \end{pmatrix}v = \sup_{\|v\| = 1} \left\|\begin{pmatrix}
    J_{ij}^{xx} & J_{ij}^{xy} \\ J_{ij}^{yx} & J_{ij}^{yy}
    \end{pmatrix}v\right\| = \left\|\begin{pmatrix}
    J_{ij}^{xx} & J_{ij}^{xy} \\ J_{ij}^{yx} & J_{ij}^{yy}
    \end{pmatrix}\right\|.
\end{align*}
The sites $i$ and $j$ were chosen arbitrarily among edges $\{i, j\} \in E$, so all that remains to check to conclude that $H$ is Suzuki-Fisher is whether the field strengths satisfy $\mu_i^z \geq 0$ for all $i \in V$. This step is the content Lemma~\ref{lem:ly-to-positive-field}, completing the proof.
\end{proof}

\section{Miscellaneous Computations}
\label{app:computations}
For the EPR-like Hamiltonian, it is useful to have a cleaner characterization of the Lee-Yang radius of the corresponding operator on a single edge, up to leading order.
\begin{lemma}
\label{lem:epr-single-edge-comp}
For $\beta \in (0, 1)$ and $s \in [0, 1)$, we can write
\begin{align*}
    \log r_\beta(s) = \left(\frac{1 - s^2}{4}\right) \beta + O(\beta^2).
\end{align*}
\end{lemma}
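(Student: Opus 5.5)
The plan is a direct Taylor expansion of the closed form from Lemma~\ref{lem:epr-like-edge-radius}. Set $a := 1 + s^2 \in [1, 2)$ and $x := e^{-\beta a}$. Then
\begin{align*}
    4 \log r_\beta(s) = \log\left(1 + s^2 x\right) - \log\left(s^2 + x\right).
\end{align*}
We expand each logarithm around $\beta = 0$, where $x = 1$. Both arguments equal $1 + s^2 \geq 1$ at $\beta = 0$, so both logarithms are smooth functions of $\beta$ near zero. Writing $x = 1 - a\beta + O(\beta^2)$ gives
\begin{align*}
    1 + s^2 x &= (1+s^2)\left(1 - \tfrac{s^2 a}{1+s^2}\beta\right) + O(\beta^2) = a\left(1 - s^2 \beta\right) + O(\beta^2),\\
    s^2 + x &= a\left(1 - \beta\right) + O(\beta^2).
\end{align*}
Taking logarithms, the $\log a$ terms cancel, and
\begin{align*}
    4 \log r_\beta(s) = -s^2 \beta + \beta + O(\beta^2) = (1 - s^2)\beta + O(\beta^2).
\end{align*}
Dividing by $4$ gives the claim.

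To make the $O(\beta^2)$ uniform, which is what the proof of Theorem~\ref{th:epr-gap} uses as $\beta/M \to 0$, we will argue via Taylor's theorem with remainder on $\phi(\beta) := \log(1 + s^2 e^{-a\beta}) - \log(s^2 + e^{-a\beta})$ over $\beta \in (0,1)$. Direct computation gives $\phi(0) = 0$ and
\begin{align*}
    \phi'(\beta) = -\frac{a s^2 e^{-a\beta}}{1 + s^2 e^{-a\beta}} + \frac{a e^{-a\beta}}{s^2 + e^{-a\beta}},
\end{align*}
so $\phi'(0) = -s^2 + 1$. For the second derivative, $\phi''$ is a sum of terms of the form $\partial_\beta$ of $a u/(c+u)$ with $u = e^{-a\beta}$. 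Every such term is bounded by a constant: the denominators are at least $e^{-2}$, since $s^2 + e^{-a\beta} \geq e^{-a} \geq e^{-2}$ and $1 + s^2 e^{-a\beta} \geq 1$, and $a \leq 2$. Hence $|\phi''| \leq C$ with $C$ absolute, and $|\phi(\beta) - (1-s^2)\beta| \leq C\beta^2/2$.

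The only mildly delicate point is this uniformity of the remainder, particularly the lower bound on $s^2 + e^{-a\beta}$ when $s = 0$. The bound $e^{-2}$ handles it, so no step is a genuine obstacle.
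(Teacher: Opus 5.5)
Your proposal is correct and follows essentially the same route as the paper: a first-order Taylor expansion of the closed form for $r_\beta(s)$. The only difference is cosmetic. The paper first substitutes $\varepsilon_\beta(s) = (e^{\beta(1+s^2)}-1)/(1+s^2)$ to write $r_\beta(s)^4 = (1+\varepsilon_\beta(s))/(1+s^2\varepsilon_\beta(s))$, which removes the $\log(1+s^2)$ terms before expanding, whereas your bound on $\phi''$ makes the uniformity of the $O(\beta^2)$ remainder more explicit than the paper's appeal to $\varepsilon_\beta(s) = \beta + O(\beta^2)$ uniformly in $s$.
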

\begin{proof}
Borrowing the notation from \cite{wong2026}, we write
\begin{align*}
    \varepsilon_\beta(s) := \frac{e^{\beta(1 + s^2)} - 1}{1 + s^2}.
\end{align*}
Then, the radius arising in Lemma~\ref{lem:epr-like-edge-radius} becomes
\begin{align*}
    r_\beta(s) = \left(\frac{1 + \varepsilon_\beta(s)}{1 + s^2 \varepsilon_\beta(s)}\right)^{1/4}.
\end{align*}
We then obtain
\begin{align*}
    \log r_\beta(s) = \frac{1}{4}\left(\log(1 + \varepsilon_\beta(s)) - \log(1 + s^2 \varepsilon_\beta(s))\right)
\end{align*}
Note that
\begin{align*}
    \varepsilon_\beta(s) = \beta + O(\beta^2),
\end{align*}
uniformly for $s \in [0, 1]$. Because $\log(1 + x) = x + O(x^2)$, we obtain
\begin{align*}
    \log r_\beta(s) &= \frac{1}{4}\left(\log(1 + \beta + O(\beta^2)) - \log(1 + s^2 \beta  + O(\beta^2))\right) \\
    &= \left(\frac{1 - s^2}{4}\right) \beta + O(\beta^2)
\end{align*}
as claimed.
\end{proof}

Above, we also require a loose bound on the generating polynomial of an operator on the unit polydisk, which we prove here for the sake of completeness.

\begin{lemma}
\label{lem:gen-poly-bound}
If $A \in \sB((\C^2)^{\otimes n})$, then for all $(z, w) \in \overline{\D}_1^n \times \overline{\D}_1^n$ such that both $z$ and $w$ have at most $k$ nonzero coordinates (e.g., as in Equation~\eqref{eq:2-local-boundary-points}), we have that
\begin{align*}
    |F_A(z, w)| \leq 2^k \|A\|.
\end{align*}
\end{lemma}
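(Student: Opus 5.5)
We prove the bound directly from the representation $F_A(z,w) = \langle v(z) | A | v(w^*) \rangle$ recorded after Definition~\ref{def:generating-poly}, combined with Cauchy--Schwarz and the definition of the operator norm. By Cauchy--Schwarz,
\begin{align*}
    |F_A(z,w)| = |\langle v(z) | A | v(w^*) \rangle| \leq \|v(z)\| \, \|A\| \, \|v(w^*)\|.
\end{align*}
It therefore suffices to bound the Euclidean norms of the product vectors $v(z)$ and $v(w^*)$.

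Since $\ket{v(z)} = \bigotimes_{i=1}^n (\ket{0} + z_i^* \ket{1})$ is a product state, its norm factorizes:
\begin{align*}
    \|v(z)\| = \prod_{i=1}^n \sqrt{1 + |z_i|^2}.
\end{align*}
Each factor with $z_i = 0$ equals one. Each factor with $|z_i| \leq 1$ is at most $\sqrt{2}$. Because at most $k$ coordinates of $z$ are nonzero, we get $\|v(z)\| \leq 2^{k/2}$. The same argument gives $\|v(w^*)\| \leq 2^{k/2}$, since conjugation preserves both moduli and the set of nonzero coordinates. Multiplying the two bounds yields $|F_A(z,w)| \leq 2^{k/2} \cdot 2^{k/2} \|A\| = 2^k \|A\|$, as claimed.

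The only point requiring care is the convention for conjugates in $v(\cdot)$, which determines which argument is conjugated. This does not affect any of the norms above, so there is no real obstacle.

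It is worth noting that the bound still holds when $z$ and $w$ have different nonzero supports, each of size at most $k$. This is exactly the case needed in Lemma~\ref{lem:specialization-root}, where $(sz_\theta, sw_\theta)$ with $|s| \leq 1$ has at most $k$ nonzero entries in each block. It is also the case needed in the proof of Theorem~\ref{th:k-local-obstruction}.
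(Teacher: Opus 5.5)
Your proposal is correct and follows the same route as the paper's proof: apply Cauchy--Schwarz to $F_A(z,w) = \langle v(z) | A | v(w^*) \rangle$, then use the factorization $\|v(z)\|^2 = \prod_i (1+|z_i|^2) \leq 2^k$, and the same bound for $w$. You also make explicit the step the paper leaves implicit, namely that only the at most $k$ nonzero coordinates contribute factors larger than one.
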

\begin{proof}
As noted in \cite{wong2026}, we may define
\begin{align*}
    v(z) := \bigotimes_{i=1}^n (\ket{0}_i + z_i^* \ket{1}_i)
\end{align*}
so that
\begin{align*}
    F_A(z, w) = \langle v(z) | A | v(w^*) \rangle.
\end{align*}
Upon rewriting the generating polynomial in this form, we can immediately see that
\begin{align*}
    |F_A(z, w)| \leq \|v(z)\| \|v(w)\| \|A\| \leq \left(\prod_{i=1}^n (1 + |z_i|^2)\right)^{1/2} \left(\prod_{i=1}^n (1 + |w_i|^2)\right)^{1/2} \|A\| \leq 2^k \|A\|,
\end{align*}
using the fact that $(z, w) \in \overline{\D_1^n \times \D_1^n}$ in the last step.
\end{proof}

In our proof of Theorem~\ref{th:k-local-obstruction}, we also use the following bound on the distance, in operator norm, between two Gibbs states conjugated by distinct unitaries $U, U'$.

\begin{lemma}
\label{lem:gibbs-state-bound}
If $U, U'$ are unitary operators and $\beta > 0$, then
\begin{align*}
    \|\exp(-\beta H(U)) - \exp(-\beta H(U'))\| \leq 2 \beta \|H\| \exp(\beta \|H\|) \|U - U'\|.
\end{align*}
\end{lemma}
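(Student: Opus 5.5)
The plan is to combine a standard perturbation identity for matrix exponentials with the unitary invariance of the operator norm. Write $A := H(U) = U^* H U$ and $B := H(U') = (U')^* H U'$. Both are Hermitian with $\|A\| = \|B\| = \|H\|$, since conjugation by a unitary preserves the operator norm. Duhamel's formula gives
\begin{align*}
    \exp(-\beta A) - \exp(-\beta B) = -\int_0^\beta \exp(-(\beta - t) A)\,(A - B)\,\exp(-t B)\, dt .
\end{align*}
To verify it, differentiate $t \mapsto \exp(-(\beta - t)A)\exp(-tB)$ and integrate from $0$ to $\beta$. Each exponential factor has norm at most $\exp(\beta\|H\|)$. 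Since the operators are Hermitian, we in fact have $\|\exp(-(\beta-t)A)\|\,\|\exp(-tB)\| \le e^{(\beta - t)\|H\|} e^{t\|H\|} = e^{\beta\|H\|}$. Therefore
\begin{align*}
    \|\exp(-\beta A) - \exp(-\beta B)\| \le \beta\, e^{\beta\|H\|}\, \|A - B\|.
\end{align*}

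It then remains to show $\|A - B\| \le 2\|H\|\,\|U - U'\|$. I would split the difference with an intermediate term:
\begin{align*}
    U^* H U - (U')^* H U' = U^* H (U - U') + (U - U')^* H U' .
\end{align*}
The first term has norm at most $\|H\|\,\|U - U'\|$, using $\|U^*\| = 1$. The second has norm at most $\|U^* - (U')^*\|\,\|H\| = \|H\|\,\|U - U'\|$, using $\|U'\| = 1$ and the fact that the adjoint preserves the operator norm. Adding the two bounds gives the factor $2$, and combining with the first step yields the claimed inequality.

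No step here is a real obstacle. The only point requiring care is the justification of the Duhamel identity, which I would state via the derivative computation above. If one prefers to avoid integrals, an alternative route is the telescoping power series bound $\|A^m - B^m\| \le m\|H\|^{m-1}\|A - B\|$. Summing this against $\beta^m/m!$ gives exactly the factor $\beta e^{\beta\|H\|}$.
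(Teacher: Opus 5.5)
Your argument is correct, but it is organized differently from the paper's.

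The paper never compares the generators $H(U)$ and $H(U')$ directly. It expands $\exp(-\beta H(U)) - \exp(-\beta H(U')) = \sum_{m \geq 1} \frac{(-\beta)^m}{m!}\bigl(U^* H^m U - (U')^* H^m U'\bigr)$, using that conjugation commutes with taking powers. It then applies the bound $\|U^* A U - (U')^* A U'\| \leq 2\|A\|\,\|U - U'\|$ (your second step) termwise with $A = H^m$. This gives $2(e^{\beta\|H\|} - 1)\|U - U'\|$, which is relaxed to the stated bound via $e^x - 1 \leq x e^x$.

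You instead first prove a Lipschitz estimate for the exponential map, $\|e^{-\beta A} - e^{-\beta B}\| \leq \beta e^{\beta \max(\|A\|, \|B\|)}\|A - B\|$, by Duhamel's formula or telescoping. You then apply the conjugation bound only once, to $H$ itself.

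What each route buys:
\begin{itemize}
\item The paper's route is slightly more elementary: no integral or derivative computation is needed. It also yields the marginally sharper intermediate constant $2(e^{\beta\|H\|}-1)$. But it relies on the special structure $\exp(-\beta U^* H U) = U^* \exp(-\beta H) U$.
\item Your route lands exactly on the stated constant and is more general. The Duhamel estimate holds for arbitrary generators, not necessarily Hermitian or unitarily equivalent, so it would also cover perturbations of $H$ itself rather than only changes of basis.
\end{itemize}

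Incidentally, you do not need Hermiticity for $\|e^{-tA}\| \leq e^{t\|A\|}$; that bound holds for any operator.
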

\begin{proof}
First, note that for any bounded operator $A$,
\begin{align*}
    \|U^* A U - (U')^* A U'\| \leq 2 \|A\| \|U - U'\|.
\end{align*}
To apply this identity, expand
\begin{align*}
    \exp(-\beta H(U)) - \exp(-\beta H(U')) = \sum_{m=1}^\infty \frac{(-\beta)^m}{m!} (U^* H^m U - (U')^* H^m U').
\end{align*}
Hence, the aforementioned identity gives
\begin{align*}
    \|\exp(-\beta H(U)) - \exp(-\beta H(U'))\| &\leq 2 \sum_{m=1}^\infty \frac{\beta^m}{m!} \|H\|^m \|U - U'\| \\
    &= 2 (\exp(\beta \|H\|) - 1) \|U - U'\| \\
    &\leq 2 \beta \|H\| \exp(\beta \|H\|) \|U - U'\|,
\end{align*}
noting that $e^x - 1 \leq x e^x$ for all $x \in \R$.
\end{proof}
\end{document}